\numberwithin{theorem}{section}
\newcommand{\TheTitle}{On Collaborative Compressive Sensing Systems:\\ The Framework, Design  and Algorithm}
\newcommand{\TheAuthors}{Zhihui Zhu, Gang Li, Jiajun Ding,  Qiuwei Li, and
       Xiongxiong He }
\title{{\TheTitle}\thanks{This work was funded by the Grants of NSFCs 61273195, 61304124, \&
 61473262.}}
\author{
  Zhihui Zhu\thanks{Department of Electrical Engineering at the Colorado School of Mines, 1500 Illinois St., Golden, CO 80401
    (\email{zzhu@mines.edu, qiuli@mines.edu}).}
  \and
  Gang Li \thanks{College of Automation \& Electrical Engineering, Zhejiang University of Science \& Technology,  Hangzhou, 310023, Zhejiang, P.R. China. (\email{ieligang@zjut.edu.cn}).}
  \and
  Jiajun Ding\thanks{College of Information Engineering, Zhejiang University of Technology,  Hangzhou, 310014, Zhejiang, P.R. China.}
  \and
  Qiuwei Li\footnotemark[2]
  \and
  Xiongxiong He \footnotemark[4]
}
\DeclareMathOperator{\diag}{diag}
\newcommand{\argmin}{\mathop{\rm argmin}}
\newcommand{\argmax}{\mathop{\rm argmax}}
\newcommand{\trace}{\operatorname{trace}}
\newcommand{\T}{{\cal T}}
\newcommand{\calU}{\mathcal{U}}
\newcommand{\calO}{\mathcal{O}}
\newcommand{\calS}{\mathcal{S}}
\newcommand{\calP}{\mathcal{P}}
\newcommand{\setB}{\mathbb{B}}
\newcommand{\vct}[1]{\boldsymbol{#1}}
\newcommand{\mtx}[1]{\boldsymbol{#1}}
\newcommand{\vd}{\vct{d}}
\newcommand{\ve}{\vct{e}}
\newcommand{\vh}{\vct{h}}
\newcommand{\vq}{\vct{q}}
\newcommand{\vr}{\vct{r}}
\newcommand{\vs}{\vct{s}}
\newcommand{\vv}{\vct{v}}
\newcommand{\vx}{\vct{x}}
\newcommand{\vy}{\vct{y}}
\newcommand{\vz}{\vct{z}}
\newcommand{\vpsi}{\vct{\psi}}
\newcommand{\vxi}{\vct{\xi}}
\newcommand{\vzero}{\vct{0}}
\newcommand{\mA}{\mtx{A}}
\newcommand{\mB}{\mtx{B}}
\newcommand{\mC}{\mtx{C}}
\newcommand{\mD}{\mtx{D}}
\newcommand{\mE}{\mtx{E}}
\newcommand{\mG}{\mtx{G}}
\newcommand{\mH}{\mtx{H}}
\newcommand{\mP}{\mtx{P}}
\newcommand{\mQ}{\mtx{Q}}
\newcommand{\mR}{\mtx{R}}
\newcommand{\mS}{\mtx{S}}
\newcommand{\mT}{\mtx{T}}
\newcommand{\mU}{\mtx{U}}
\newcommand{\mV}{\mtx{V}}
\newcommand{\mW}{\mtx{W}}
\newcommand{\mX}{\mtx{X}}
\newcommand{\mY}{\mtx{Y}}
\newcommand{\mZ}{\mtx{Z}}
\newcommand{\mOmega}{\mtx{\Omega}}
\newcommand{\mPhi}{\mtx{\Phi}}
\newcommand{\mPsi}{\mtx{\Psi}}
\newcommand{\mSigma}{\mtx{\Sigma}}
\newcommand{\mGamma}{\mtx{\Gamma}}
\newcommand{\mId}{{\bf I}}
\newcommand{\mzero}{{\bf 0}}
\newcommand{\e}{\begin{equation}}
\newcommand{\ee}{\end{equation}}
\newcommand{\eqn}{\begin{eqnarray}}
\newcommand{\eeqn}{\end{eqnarray}}
\newcommand{\MAT}{\left[ \begin{array}}
\newcommand{\mat}{\end{array} \right]}
\begin{document}

\maketitle

\begin{abstract}
Based on the maximum likelihood estimation principle, we derive a collaborative estimation framework that fuses several different estimators and yields a better estimate. Applying it to compressive sensing (CS), we propose a collaborative CS (CCS) scheme consisting of a bank of $K$ CS systems that share the same sensing matrix but have different sparsifying dictionaries. This CCS system is expected to yield better
performance than each individual CS system, while requiring the same time as that needed for each individual CS system when a parallel
computing strategy is used. We then provide an approach to designing optimal CCS systems by utilizing a measure that involves both the sensing matrix and dictionaries and hence allows us to simultaneously
optimize the sensing matrix and all the $K$ dictionaries. An alternating
minimization-based algorithm is derived for solving the corresponding optimal design problem. With a rigorous convergence analysis, we  show that the proposed algorithm is convergent. Experiments are carried out to confirm the theoretical results and show that
the proposed CCS system yields significant improvements over the existing CS systems in terms of the signal recovery accuracy.
\end{abstract}

\begin{keywords}
Collaborative compressive sensing, dictionary learning, image compression, sensing matrix design
\end{keywords}

\begin{AMS}
 62H35, 68P30,  68U10, 94A08
\end{AMS}

\section{Introduction}
Compressive (or compressed) sensing (CS), aiming to sample signals beyond Shannon-Nyquist limit~\cite{CandesRombergTao2006RobustUncertaintyPrinciples,CandesWakin2008IntroductionCS,candes2006stableRecovery,donoho2006compressedSensing},  is a mathematical framework that efficiently compresses a signal vector  $\vx\in\Re^{N}$ by a measurement vector $\vy \in \Re^{M}$ of the form
\begin{align}
\vy = \mPhi \vx,
\label{eq:sensing}\end{align}
where $\mPhi\in\Re^{M\times N}$ ($M\ll N$) is a carefully chosen {\em sensing matrix} capturing the information contained in the signal vector~$\vx$.

As $M\ll N$, we have to exploit additional constraints or structures on the signal vector $\vx$ in order to recover it from the measurement $\vy$ and {\em sparsity} is such a structure. In CS, it is assumed that the original signal $\vx$ can be expressed as a linear combination of few elements/atoms from a $l_2$-normalized set $\{\vpsi_\ell \}$, i.e., $||\vpsi_\ell||_2=1,~\forall~\ell$:
\begin{align}
\vx = \sum_{\ell=1}^L \vpsi_\ell s_\ell = \mPsi \vs,
\label{eq:represet x}\end{align}
where $\mPsi:=\begin{bmatrix}\vpsi_1 & \vpsi_2 & \cdots & \vpsi_L   \end{bmatrix}\in\Re^{N\times L} $ is called the {\em dictionary}  and the entries of $\vs\in\Re^{L}$ are referred to as coefficients.  $\vx$ is said to be $\kappa$-sparse in $\mPsi$ if $\|\vs\|_0 = \kappa$, where $\|\vs\|_0$ is the number of non-zero elements of the vector $\vs$. Under the framework of CS, the original signal $\vx$ can be recovered from $\vx = \mPsi \widehat \vs$ with  $\widehat \vs$ the solution of the following problem when the sparsity level $\kappa$ is small:
\begin{align}
\widehat \vs := \argmin_{\vs}||\vy - \mA \vs||^2_2~~~\text{s.t.}~~\|\vs\|_0\leq \kappa, \label{eq:recov-1}\end{align}
where $\mA:=\mPhi \mPsi$ is called the {\it equivalent dictionary}. A CS system is referred to equations (\ref{eq:sensing}) and (\ref{eq:represet x}) plus an algorithm used to solve (\ref{eq:recov-1}) and its ultimate goal is to reconstruct the original signal $\vx$ from the low-dimensional measurement $\vy$. The latter depends strongly on the properties of $\mPsi$ and $\mPhi$.

In general, the choice of dictionary $\mPsi$ depends on the signal model. Based on whether the dictionary is given in a closed form or learned from data, the previous works on designing dictionary can be roughly classified into two categories. In the first one, one attempts to concisely capture the structure contained in the signals of interest by a well-designed dictionary, like the wavelet dictionary~\cite{mallat1999wavelet} for piecewise regular signals, the Fourier matrix for frequency-sparse signals, and a multiband modulated Discrete Prolate
Spheroidal Sequences (DPSS's) dictionary for sampled multiband signals~\cite{ZhuWakin2015MDPSS}. The second category is to learn the dictionary in an adaptive way to  sparsely represent a set of representative signals (called training data). Typical algorithms for solving this sparsifying dictionary learning problem include the method of optimal directions (MOD)~\cite{engan1999MOD}, K-singular value decompostion (K-SVD)~\cite{aharon2006KSVD} based algorithms, and the method for designing incoherent sparsifying dictionary~\cite{li2017new}. Peyre~\cite{peyre2010best} considered designing signal-adapted orthogonal dictionaries, where the dictionaries are considered to be tree-structured in order to avoid the intractability  involved in the optimization problem and  lead to fast algorithms for solving the problem. Learning a sparsifying dictionary has proved to be extremely useful and has achieved state-of-the-art performance in many signal and image processing applications, like image denoising, impainting, deblurring and compression~\cite{aharon2008sparse,Elad2010book}.

To recover the signal $\vx$ from its low dimensional measurement $\vy$, another important factor in a CS system is to select an appropriate sensing matrix $\mPhi$ that preserves the useful information contained in $\vx$. It has been shown that a sparse signal $\vx$ can be exactly reconstructed from its measurement $\vy$
with greedy algorithms such as those orthogonal matching pursuit (OMP)-based ones~\cite{blumensath2008iterativeThresholding,mallat1993matchingPursuit,needell2009cosamp,tropp2004greed,wang2010sparse} or methods based on convex optimization~\cite{CandesRombergTao2006RobustUncertaintyPrinciples,chen1998atomicDecomposition,donoho2003optimallySparseRepl1}, if the equivalent dictionary $\mA=\mPhi\mPsi$ satisfies the restricted isometry property (RIP)~\cite{baraniuk2008simple,CandesRombergTao2006RobustUncertaintyPrinciples,CandesWakin2008IntroductionCS}, or the sensing matrix $\mPhi$ satisfies the so-called $\mPsi$-RIP \cite{candes2011compressed}. However, despite the fact that a random matrix $\mA$ (or $\mPhi$)  with a specific distribution satisfies the RIP (or $\mPsi$-RIP) with high probability~\cite{baraniuk2008simple}, it is hard to certify the RIP (or $\mPsi$-RIP) for a given sensing matrix that is utilized in practical applications~\cite{bandeira2013certifying}. Thus, Elad~\cite{elad2007optimized} proposed to design a sensing matrix via minimizing the mutual coherence, another property of the equivalent dictionary that is more tractable and  much easier to verify. Since then, it has led to a class of approaches for designing CS systems~\cite{abolghasemi2012gradient,bai2015alternating,duarte2009learning,hong2016efficient,li2013projection,xu2010optimized,zelnik2011sensing}. When the signal is not exactly sparse, which is true for most natural signals such as image signals even with a learned dictionary, it is observed that the sensing matrix obtained via only minimizing the mutual coherence yields poor performance.  Recently, a modified approach to designing sensing matrix was proposed in \cite{li2015designing}, where a measure that takes both the mutual coherence and the sparse representation errors into account is used,  and achieves state-of-the-art performance for image compression.


The ultimate objective of dictionary learning is to determine a $\mPsi \in \Re^{N\times L}$ such that the signals of interest (like image patches extracted from natural images) can be well represented with a sparsity level $\kappa$ given.  To that end, a large $L$ is preferred as increasing $L$ can enrich the components/atoms that are used for sparse representation. One notes, however, that mutual coherence of $\mPsi$,  formally defined as
$$\mu(\mPsi) := \max_{i\neq j}\frac{|\vpsi_i^{\cal T} \vpsi_j|}{\|\vpsi_i\|_2\|\vpsi_j\|_2}$$
with $\cal T$ and $\vpsi_\ell$ denoting respectively the transpose operator and the $\ell$-th column of matrix $\mPsi$, satisfies~\cite{welch1974lower}
\e
\sqrt{\frac{L-N}{N(L-1)}}\leq \mu(\mPsi) \leq 1,
\label{mu-1}\ee
which implies that when $L$ is very large, the dictionary $\mPsi$  will have large mutual coherence. Since the equivalent dictionary $\mA=\mPhi \mPsi$, where $\mPhi$ is of dimension $M\times N$ with $M \ll N$, $\mu(\mA)$ gets in general larger than $\mu(\mPsi)$. Thus, increasing $L$  may affect sparse signal recovery dramatically. Besides, a large scale training data set is required in order to learn richer atoms (or features) and thus it needs a long time for the existing algorithms to learn a dictionary, though it is not a big issue for off-line design.

These motivate us to develop an alternative CS framework that yields a high performance in terms of reconstruction accuracy  as if a high dimensional dictionary were used in the traditional CS framework (see Fig.~\ref{figure:CS systems}(a)), but gets rid of the drawbacks of a high dimensional dictionary as mentioned above. Our argument here is that a signal within some class of interest can be well represented/approximated sparsely in different dictionaries or frames. Given  different training data, we can learn different dictionaries containing different atoms or features for the signals of interest. Though these  dictionaries perform differently one from another for a signal to be represented, a fusion of these representations obtained with these dictionaries may be better than any of the individual representations.   As pointed out in \cite{carrillo2013sparsity}, many natural signals of interest, e.g., images, video sequences,  include several
types of structures and can be well represented sparsely in various frames simultaneously. Based on this observation, the authors of  \cite{carrillo2013sparsity} proposed an approach for improving signal reconstruction using the co-sparsity analysis model \cite{nam2013cosparse}, where the analysis dictionary is designed based on a concatenation of (Parseval) frames  that are used to impose the signal sparsity.\\

The main contributions of this paper are stated as follows:
\begin{itemize}\item
The first contribution is to derive a new framework named collaborative estimator (CE) based on the maximum likelihood estimation (MLE) principle. Such a framework is actually a linear transformation of $K$ estimators used for estimating the same signals/parameters and is expected to yield a better estimate than any individual estimators. It is shown that the optimal transformation is determined by the covariance matrix of the estimation errors and  the optimal CE outperforms the best individual estimator when the estimation errors of the $K$ estimators are uncorrelated;
\item The second contribution regards application of the CE framework to CS, leading to a collaborative CS (CCS) system, in which the set of  individual estimators is specified by a bank of $K$ traditional CS systems that have an identical sensing matrix but different dictionaries of dimension $N\times L_0$. See Fig.~\ref{figure:CS systems}(b). More importantly, we provide an approach to simultaneously designing the sensing matrix and sparsifying dictionaries for the proposed CCS system. Unlike \cite{bai2015alternating,duarte2009learning}, where the sensing matrix and the dictionary are designed independently (though the sensing matrix is involved when the dictionary is updated), we use a measure that is a function of sensing matrix, dictionaries and sparse coefficients and hence allows us to consider both the sensing matrix and dictionaries under one and the same framework. By doing so, it is possible to enhance the system performance and to come up with an algorithm with guaranteed convergence for the optimal CCS design;
    \item The third contribution is to provide an alternating minimization-based algorithm for optimal design of the proposed CCS system, where updating the dictionaries is executed  column by column with an efficient algorithm, the sparse coding is performed using a proposed OMP-based procedure, and the sensing matrix is updated analytically. It should be pointed out that the alternating minimization-based approach has been popularly used in designing sensing matrix and sparsifying dictionary \cite{aharon2006KSVD,duarte2009learning,elad2007optimized,li2013projection,li2015designing}, in which the convergence of the algorithms is usually neither ensured nor seriously considered. As one of the important results in this paper, a rigorous convergence analysis is provided and  the proposed algorithm is ensured to be convergent;
        \item Finally, we provide a convergence guarantee of the OMP algorithm~\cite{tropp2004greed} solving the sparse recovery problem (\ref{eq:recov-1}) without posing any condition on the linear system. In particular, we show that the OMP algorithm always generates a stationary point of the objective function for (\ref{eq:recov-1}), which is expected to be useful for convergence analysis of other OMP-based algorithms for  sensing matrix design and  dictionary learning ~\cite{aharon2006KSVD,duarte2009learning,li2018joint}.\\
\end{itemize}

It should be pointed out that there exist some related works reported. Multimeasurement vector (MMV) problems considering the estimation of jointly sparse signals under the distributed CS (DCS) framework~\cite{baron2009distributed} have been studied in~\cite{ye2015improving,zhu2017performance}. The proposed CCS differs from DCS in that the former only involves one measurement vector and aims to improve the performace of the classical CS performance by using multiple dictionaries, while the latter involves a number of measurement vectors observed for a set of sparse signals sharing common supports which are exploited to yield better performane than the classical approach that individually and independently solves the sparse recovry problem for each signal.
Elad and Yavneh~\cite{elad2009plurality} investigated the problem of detecting the clean signal $\vx$ from the measurement $\vy = \vx + \vv$, where $\vx = \mPsi \vs$ with $\vs$ the sparse representation in dictionary $\mPsi$, and demonstrated with experiments that using a set of sparse representations $\{\vs_i\}$ generated by randomizing the OMP algorithm, a more accurate estimate of $\vx$ can be obtained by fusing $\{\vs_i\}$ with a plain averaging strategy which, as to be seen, corresponds to  a special case of the fusion model we derive. Recently, the same structure of the CCS was proposed independently by Wijewardhana et al. in~\cite{wijewardhana2017signal}, where the same plain averaging model was used with the set of sparse vectors $\{\vs_i\}$ jointly obtained using a customized interior-point method and  the performance improvement was experimentally demonstrated in the context of signal reconstruction from compressive samples.  Besides deriving a more general fusion model, we consider the joint optimal design of the $K$ CS systems in the context of the performance of the overall CCS system and develop an algorithm with guaranteed convergence for it. We will discuss further the differences and similarities between our proposed CCS and these related works in Section~\ref{sec:CCS} when we present our CCS system.
\\

\noindent{\bf Notations}: Throughout this  paper, finite-dimensional vectors and matrices are indicated by bold characters. The symbols $\mId$ and $\mzero$ respectively represent the identity and zero matrices with appropriate sizes. In particular, $\mId_N$  denotes the $N\times N$ identity matrix.  $\calU_N: = \left\{\vh\in\Re^N: \|\vh\|_2 = 1 \right\}$ is the unit sphere. Also, $\calO_{M,N}:=\{\mR\in\Re^{M\times N}: \mR \mR^\T= \mId_M\}$ with  $M \leq N$ - a set of matrices called orthogonal projectors in CS literature, and when $M=N$, $\calO_{N,N}$ is simply denoted as $\calO_{N}$. For any natural number $N$, we let $[N]$ denote the set $\{1,2,...,N\}$. Let $\calS_{k}$ be a set of samples for variable $\vz_k$ for $k\in[K]$ and $\calS_{1}\times \calS_{2}\cdots\times\calS_{K}$ denote the set consisting of joint variables $(\vz_1, \vz_2, \cdots, \vz_K)$. In particular, $\calS^K_{0}$ represents the set $\calS_{0}\times \calS_{0}\cdots\times\calS_{0}$ when $\calS_{k}=\calS_{0},~\forall~k$.  We adopt MATLAB notations for matrix indexing; that is, for a matrix $\mQ$,
its $(i,j)$-th element is denoted by $\mQ(i,j)$, its $i$-th row (or column) is denoted by $\mQ(i,:)$ (or $\mQ(:,i)$). When it is clear from the context, we also use $\vq_i$, $q_{i,j}$ to denote the $i$-th column and $(i,j)$-th element of $\mQ$. Similarly, we use $q_i$ or $\vq(i)$ to denote the $i$-th element of the vector $\vq\in \Re^N$.

The outline of this paper is given as follows. In Section~\ref{sec:colla CS},  a collaborative estimation framework is derived based on the MLE principle and then applied to compressed sensing, resulting in a collaborative CS system that fuses a set of individual CS systems for achieving high precision of signal reconstruction.  Section~\ref{sec:deisgn colla CS} is devoted to designing optimal CCS systems. Such a problem is formulated based on the idea of learning the sensing matrix and the sparsifying dictionaries simultaneously. An algorithm is derived for solving the proposed optimal CCS system design in Section~\ref{sec:algorithm}.  Section~\ref{sec:convergence} is devoted to  performance analysis, including implementation complexity of the proposed CCS and convergence analysis for the proposed algorithm for optimal CCS design. To demonstrate the performance of the proposed CCS systems, experiments are carried out in Section~\ref{sec:experiments}. The paper is concluded in Section~\ref{sec:conclusion}.

\section{A Collaborative Estimation-based CS Framework}
\label{sec:colla CS}
Let us consider a set of estimators, all used  for estimating the identical  signal $\vx\in \Re^{N\times 1}$. Assume that the output of the $i$-th estimator is given by
\e
\vx_i = \vx + \ve_i, \ i = 1,\ldots,K,  \label{ZH-1}\ee
where  $\ve_i$ is the estimation error of the $i$th estimator. How to fuse the set of estimates  $\{\vx_i\}$ to obtain a better estimate  $\widehat \vx = f(\vx_1,\ldots,\vx_K)$ has been an interesting topic in estimation theory \cite{kay1993fundamentals}. In the next subsection, we will derive such a fusion based on the maximum likelihood estimation principle.

\subsection{MLE-based collaborative estimators}
\label{sec:MLE coll estimator}
Denote
\e ~~~\bar \ve^{\cal T} := \MAT{cccccc}\ve^{\cal T}_1& \cdots& \ve^{\cal T}_i &\cdots & \ve^{\cal T}_K\mat,~
 \bar \vx^{\cal T} := \MAT{ccccc}\vx^{\cal T}_1& \cdots& \vx^{\cal T}_i& \cdots& \vx^{\cal T}_K\mat. \label{insert-1}\ee

Lemma~\ref{lem:collaborative-estimator} presented below yields an optimal estimate of $\vx$ with  $\bar \vx$.
\begin{lemma}\label{lem:collaborative-estimator} Let $\{\ve_i\}$ be the set of the estimation errors defined in \eqref{ZH-1}.  Under the assumption that $\bar \ve$  defined in (\ref{insert-1}) obeys a normal distribution $\mathcal N({\bf 0}, \bf \Gamma)$,  the best estimate $\widehat \vx$ of $\vx$, which can be achieved in the MLE sense with the observations $\{\vx_i\}$ in~\eqref{ZH-1}, is given by
\begin{align}
\widehat \vx = ({\bf \bar I}^{\cal T}{\bf \Gamma}^{-1}{\bf \bar I})^{-1}{\bf \bar I}^{\cal T}{\bf \Gamma}^{-1}\bar \vx := \mOmega \bar \vx =\sum^K_{i=1}\mOmega_i \vx_i, \label{collaborative-estimator}\end{align}
where $\bar \vx$  defined in (\ref{insert-1}),  ${\bf \bar I} := \MAT{ccccc}{\bf I}_N&\cdots&{\bf I}_N&\cdots&{\bf I}_N\mat^{\cal T} \in \Re^{NK\times N}$, $\mOmega =\MAT{ccccc} \mOmega_1&\cdots&\mOmega_i&\cdots&\mOmega_K\mat$ with $\mOmega_i \in \Re^{N\times N},~\forall~i$. Consequently,
\begin{align}
\mathbb E [\|\widehat \vx - \vx\|^2_2] = \trace[\mOmega {\bf \Gamma} \mOmega^{\cal T}],
\label{lem:sample mean}
\end{align}
where $\mathbb E [\cdot]$ and $\trace[\cdot]$ denote the mathematical expectation and the trace operation, respectively.
\end{lemma}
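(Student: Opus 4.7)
The plan is to recognize that this is a generalized least-squares/MLE derivation in disguise, and execute it in a straightforward manner in two phases: first derive the optimal fusion $\widehat{\vx}$, then evaluate its mean-squared error.

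First, I would rewrite the $K$ observation equations in \eqref{ZH-1} in stacked form as $\bar{\vx} = \bar{\mathbf{I}}\vx + \bar{\ve}$, using the definitions in \eqref{insert-1} together with the block matrix $\bar{\mathbf{I}}$ given in the statement. Since $\bar{\ve} \sim \mathcal{N}(\mathbf{0},\mathbf{\Gamma})$, the conditional density of $\bar{\vx}$ given $\vx$ is $\mathcal{N}(\bar{\mathbf{I}}\vx,\mathbf{\Gamma})$, so the log-likelihood (up to an additive constant not depending on $\vx$) is
\begin{equation*}
\ell(\vx) = -\tfrac{1}{2}(\bar{\vx}-\bar{\mathbf{I}}\vx)^{\cal T}\mathbf{\Gamma}^{-1}(\bar{\vx}-\bar{\mathbf{I}}\vx).
\end{equation*}
Differentiating with respect to $\vx$ and setting the gradient to zero yields the normal equations $\bar{\mathbf{I}}^{\cal T}\mathbf{\Gamma}^{-1}\bar{\mathbf{I}}\,\vx = \bar{\mathbf{I}}^{\cal T}\mathbf{\Gamma}^{-1}\bar{\vx}$. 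Positive definiteness of $\mathbf{\Gamma}^{-1}$ combined with the fact that $\bar{\mathbf{I}}$ has full column rank $N$ (it is a vertical stack of identities) guarantees that $\bar{\mathbf{I}}^{\cal T}\mathbf{\Gamma}^{-1}\bar{\mathbf{I}}$ is invertible and that $\ell$ is strictly concave, so the unique maximizer is exactly the expression $\widehat{\vx} = \mathbf{\Omega}\bar{\vx}$ claimed in \eqref{collaborative-estimator}. The block partition $\mathbf{\Omega} = [\mathbf{\Omega}_1 \cdots \mathbf{\Omega}_K]$ then immediately gives the summation form $\widehat{\vx} = \sum_{i=1}^{K}\mathbf{\Omega}_i \vx_i$.

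For the second half, the key algebraic identity is $\mathbf{\Omega}\bar{\mathbf{I}} = \mathbf{I}_N$, which follows at once from the definition of $\mathbf{\Omega}$. Consequently the estimation error simplifies as
\begin{equation*}
\widehat{\vx} - \vx = \mathbf{\Omega}\bar{\vx} - \mathbf{\Omega}\bar{\mathbf{I}}\vx = \mathbf{\Omega}\bar{\ve}.
\end{equation*}
Taking the expected squared norm and using the cyclic property of the trace together with $\mathbb{E}[\bar{\ve}\bar{\ve}^{\cal T}] = \mathbf{\Gamma}$ gives $\mathbb{E}[\|\widehat{\vx}-\vx\|_2^2] = \mathbb{E}[\trace(\mathbf{\Omega}\bar{\ve}\bar{\ve}^{\cal T}\mathbf{\Omega}^{\cal T})] = \trace(\mathbf{\Omega}\mathbf{\Gamma}\mathbf{\Omega}^{\cal T})$, establishing \eqref{lem:sample mean}.

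There is no substantive obstacle to this proof; the result is essentially the Gauss--Markov/BLUE formula for the linear model $\bar{\vx} = \bar{\mathbf{I}}\vx + \bar{\ve}$ with correlated Gaussian noise. The only point worth checking carefully is the invertibility of $\bar{\mathbf{I}}^{\cal T}\mathbf{\Gamma}^{-1}\bar{\mathbf{I}}$, but this is automatic once $\mathbf{\Gamma}\succ 0$ is assumed implicitly by writing the Gaussian density. If desired, one can additionally remark that the derivation does not actually require $\bar{\ve}$ to be Gaussian: the same $\mathbf{\Omega}$ minimizes $\trace(\mathbf{\Omega}\mathbf{\Gamma}\mathbf{\Omega}^{\cal T})$ subject to the unbiasedness constraint $\mathbf{\Omega}\bar{\mathbf{I}} = \mathbf{I}_N$, so the formula is simultaneously the MLE and the minimum-variance linear unbiased estimator.
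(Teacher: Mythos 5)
Your proposal is correct and follows essentially the same route as the paper's proof: maximize the Gaussian likelihood of $\bar{\vx}=\bar{\mathbf{I}}\vx+\bar{\ve}$ by setting the gradient of the quadratic form to zero, then use $\mOmega\bar{\mathbf{I}}=\mathbf{I}_N$ to reduce the error to $\mOmega\bar{\ve}$ and take the trace. The added checks on invertibility of $\bar{\mathbf{I}}^{\cal T}\mathbf{\Gamma}^{-1}\bar{\mathbf{I}}$ and the BLUE remark are correct but not needed beyond what the paper already does.
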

\begin{proof}[Proof of Lemma~\ref{lem:collaborative-estimator}] Under the assumptions made in this lemma, the  probability density function (PDF) of $\bar \ve$ is given by
\e
f_{\bar \ve}(\vxi) = \frac{e^{\left(-\frac{1}{2}\vxi^{\cal T}\bf \Gamma^{-1} \vxi\right)}}{\sqrt{2\pi |\bf \Gamma| }},
\label{pdf-1}\ee
where $|\cdot|$ represents the determinant of a matrix. According to the MLE principle~\cite{kay1993fundamentals}, the best estimate $\widehat \vx$ of the ideal signal $\vx$ using the measurements $\bar \vx$ is the one that maximizes the likelihood function in $\vx$, that is $f_{\bar \ve}(\vxi)$ with  $\vxi= \bar \vx- {\bf \bar I} \vx$. Note that $\bf \Gamma$ has nothing to do with $\vx$. The best estimate $\widehat \vx$ is the solution leading the derivative of $\frac{1}{2}{\vxi}^{\cal T}{\bf \Gamma}^{-1}\vxi$ {\it w.r.t.} $\vx$ to zero. Thus, $-{\bf \bar I}^{\cal T}{\bf \Gamma}^{-1}\vxi={\bf 0}$, which leads to  (\ref{collaborative-estimator}) directly.

As to the second part of the lemma, (\ref{lem:sample mean}) follows directly from the fact $\vx = ({\bf \bar I}^{\cal T}{\bf \Gamma}^{-1}{\bf \bar I})^{-1}({\bf \bar I}^{\cal T}{\bf \Gamma}^{-1}{\bf \bar I}) \vx = \mOmega {\bf \bar I} \vx$ and (\ref{collaborative-estimator}) with $\bar \vx={\bf \bar I} \vx + \bar \ve$.
\end{proof}

\vspace{0.5cm}
The estimator given by (\ref{collaborative-estimator}) is called a collaborative estimator of $\vx$ that fuses $K$ individual estimators of the same $\vx$. One notes that under the MLE principle, the collaborative estimate (\ref{collaborative-estimator}) is actually a linear transformation of the observations $\{\vx_i\}$. Such a transformation $\mOmega$ is determined by the covariance matrix $\bf \Gamma$ of the estimation errors $\bar \ve$ of the $K$ individual estimators.

Denote $\mG:=\mGamma^{-1}=\{\mG_{ij}\}$ with $\mG_{ij} \in \Re^{N\times N},~\forall~i,j=1, \cdots, K$. Then, $\mOmega$ can be specified in terms of $\{\mG_{ij}\}$ via
\e \mOmega_i =\left(\sum_{k,l \in [K]}\mG_{kl}\right)^{-1}\sum^K_{k=1}\mG_{ki},~\forall~i.\label{LBJ-1}\ee
Clearly, we have $\sum^K_{i=1}\mOmega_i={\bf I}_N$. Generally speaking, $\mOmega_i$ is fully parametrized with non-zero elements.

Depending on $\mGamma$, $\mOmega_i$ can have different structures, among which there are two interesting models of low implementation complexity:
\begin{itemize}\item $\mOmega_i = \diag(r_{i1}, \cdots, r_{in}, \cdots, r_{iN}):=\mD_i,~\forall~i$. (\ref{collaborative-estimator}) becomes
\begin{align}
\widehat \vx = \sum^K_{i=1}\mD_i \vx_i. \label{collaborative-estimatorII}\end{align}
\item $\mOmega_i = \omega_i {\bf I}_N,~\forall~i$. Consequently, the collaborative estimator is of form
\begin{align}
\widehat \vx = \sum_{i=1}^K \omega_i\vx_i,
\label{collaborative-estimatorX}\end{align}
which implies that the collaborative estimate is just a linear combination of the observations.\\
\end{itemize}

\noindent{\bf Remark~2.1:}
\begin{itemize}
\item (\ref{collaborative-estimator}) yields the optimal estimate with the observations collected from the individual estimators no matter these estimation errors are correlated or not. Though it is difficult to see relationship between the performance of the collaborative estimator and that of each individual estimator from (\ref{lem:sample mean}), extensive experiments showed that the former is much better than the latter. See the experiments results in Section~\ref{sec:experiments};
\item (\ref{collaborative-estimatorII}) and  (\ref{collaborative-estimatorX}) are special cases of (\ref{collaborative-estimator}), corresponding to different circumstances of the estimation errors $\{\ve_i\}$ generated by the $K$ estimators. Though these two models are not necessarily for the situations in which these errors are statistically independent/uncorrelated, very interesting properties of the collaborative estimators can be obtained when the errors are statistically independent, which give us some insights on the perspectives of the collaborative estimation.\\
     \end{itemize}

When $\{\ve_i\}$ are all statistically independent, ${\bf \Gamma} =\text{diag}({\bf \Gamma}_1, \cdots, {\bf \Gamma}_i, \cdots, {\bf \Gamma}_K)$ and hence
\begin{align}
\mOmega_i=\left(\sum^K_{k=1}{\bf \Gamma}^{-1}_k\right)^{-1}{\bf \Gamma}^{-1}_i\label{Omega-independent}\end{align}
and (\ref{lem:sample mean})  yields
\e
\mathbb E [\|\widehat \vx - \vx\|^2_2] = \sum^K_{\ell=1}\trace[\mOmega_{\ell}{\bf \Gamma}_{\ell} \mOmega^{\cal T}_{\ell}] = \trace\left[\left(\sum^K_{\ell=1}{\bf \Gamma}^{-1}_\ell\right)^{-1}\right].  \label{ideal case-0x}
\ee

Furthermore, when ${\bf \Gamma}_i= \diag(\gamma^{(i)}_1,\cdots,\gamma^{(i)}_{n},\cdots,\gamma^{(i)}_N):=\mGamma^{(i)}_d,~\forall~i$, (\ref{Omega-independent}) suggests that $\mOmega_i$ is also diagonal. This means that the corresponding collaborative estimator follows  model  (\ref{collaborative-estimatorII}) and particularly, when ${\bf \Gamma}_i=\epsilon_i^2 {\bf I}_N,~~\forall~i$,  the corresponding collaborative estimator actually  obeys  model  (\ref{collaborative-estimatorX}) with the weighting factors given by
\e \omega_i= \epsilon^{-2}_i/\sum^K_{\ell=1}\epsilon^{-2}_\ell,~\forall~i\label{setting omega}\ee
and
\e
\mathbb E [\|\widehat \vx - \vx\|^2_2] = \sum^K_{\ell=1}\omega^2_{\ell}\mathbb E [\|\vx_{\ell} - \vx\|^2_2] = \frac{N}{\sum^K_{\ell=1}\epsilon^{-2}_\ell}. \label{ideal case-0}
\ee

\noindent{\bf Remark~2.2}:
\begin{itemize}\item
First of all, as the weighting factors $\{\mOmega_i\}$ and $\{\omega_i\}$, given in (\ref{Omega-independent}) and (\ref{setting omega}), respectively,  are decreasing with $K$, it is expected that the performance of these collaborators gets improved with the number of estimators increasing. This claim is supported by (\ref{ideal case-0x}) and (\ref{ideal case-0}).  Furthermore, assume without loss of generality that the first estimator is the best among the $K$ estimators. It follows from (\ref{ideal case-0x}) that
 \eqn \mathbb E [\|\widehat \vx - \vx\|^2_2] & =& \trace\left[ {\bf \Gamma}_1 \left({\bf I}_N +\sum^K_{\ell=2} {\bf \Gamma}_1^{1/2}{\bf \Gamma}^{-1}_\ell {\bf \Gamma}_1^{1/2}\right)^{-1}\right]\nonumber\leq \trace\left[ {\bf \Gamma}_1]=  \mathbb E [\|\vx_1 - \vx\|^2_2] \right],\nonumber \eeqn
where  ${\bf \Gamma}_1^{1/2}$ denotes a (symmetric) square root matrix of  ${\bf \Gamma}_1$, that is ${\bf \Gamma}_1={\bf \Gamma}_1^{1/2}{\bf \Gamma}_1^{1/2}$.
The above equation implies that the variance of  estimation error by the collaborative estimator is smaller than that by the best one of the $K$ individual estimators and that it deceases with the number $K$ of the individual estimators in fusion.
\item
Based on  (\ref{collaborative-estimator}) that is derived under some specific conditions, we can propose different models for practical problems, in which we usually do not have much statistical priori information of the signals and have other issues such as implementation complexity to be considered. (\ref{collaborative-estimatorII}) and  (\ref{collaborative-estimatorX}) are two of such models. As to be seen in Section~\ref{sec:experiments}, with a set-up, in which the estimation errors are correlated, that is the covariance matrix $\mGamma$ is not block-diagonal, we set the general model (\ref{collaborative-estimator}) and  the simplified model (\ref{collaborative-estimatorX}) according to (\ref{Omega-independent}) and $\omega_i=\frac{1}{K},~\forall~i$, respectively, {\it both are not optimal for this case}. It is observed that these two collaborative estimators still outperform significantly all the individual estimators, though yielding a performance not as good as the optimal collaborative estimator whose transformation $\mOmega$ is set using (\ref{LBJ-1}) with the true covariance matrix $\mGamma$.
\end{itemize}

\subsection{Collaborative compressed sensing}\label{sec:CCS}
As known, the ultimate goal of a CS system $(\mPhi, \mPsi)$ is to reconstruct or estimate the original signal $\vx$ from its compressed counterpart $\vy$.
Fig.~\ref{figure:CS systems}(a) depicts the block-diagram of such a system with image compression.

\begin{figure*}[t]
\centering
\includegraphics[width=7in]{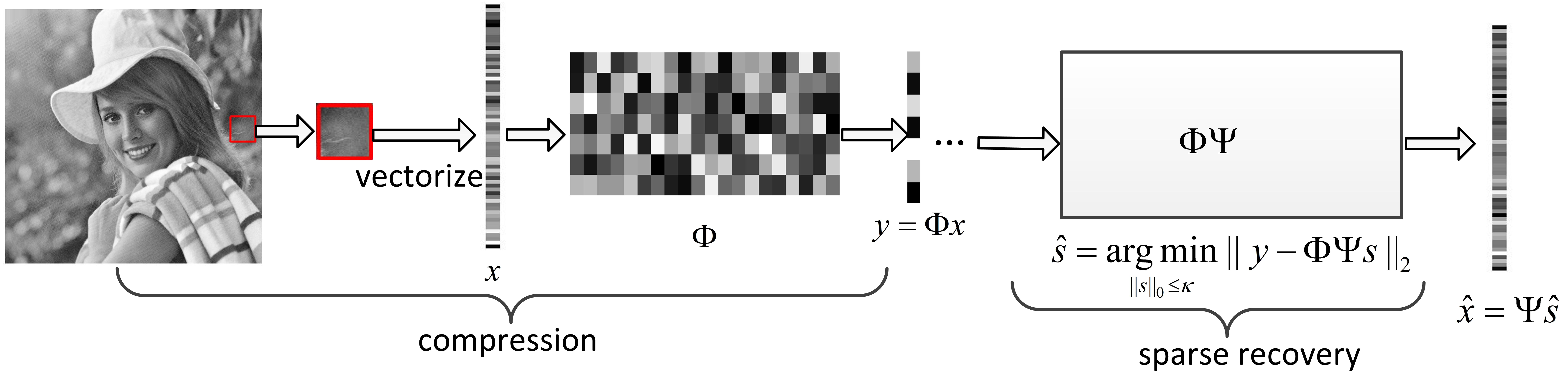}
\centerline{(a)}
\centering
\includegraphics[width=7in]{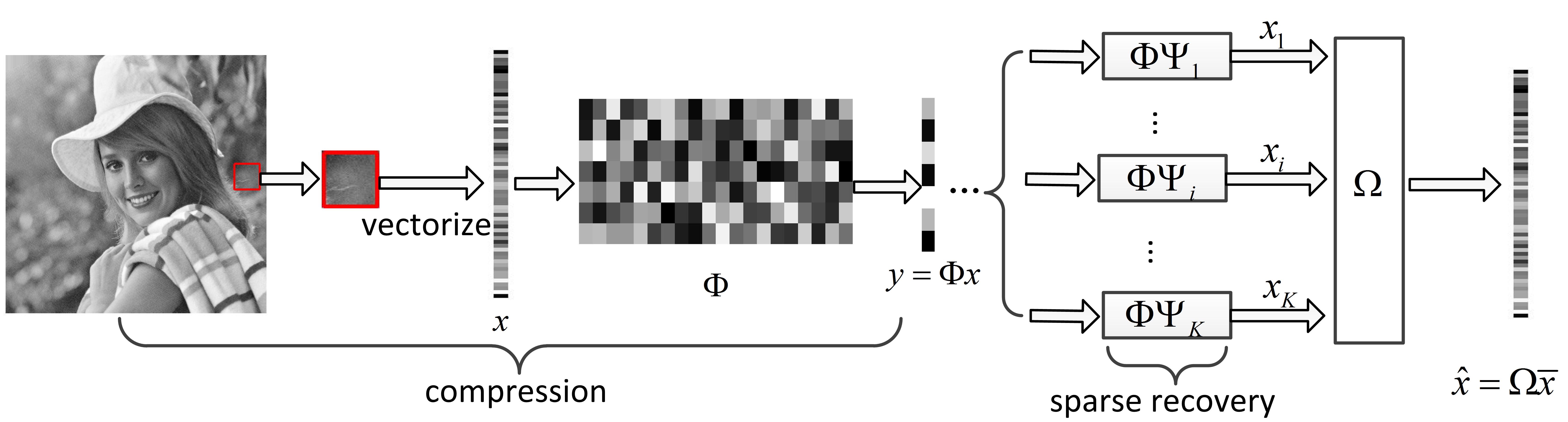}
\centerline{(b)}
\caption{Illustration of (a) the classical CS system, and (b) the proposed collaborative CS system for patch-based image compression. Here, $\vx$ represents one patch of the image.
}\label{figure:CS systems}
\end{figure*}

Specifying the $K$ estimators with a bank of $K$ CS systems $\{(\mPhi, \mPsi_i)\}$, we obtain a CS-based collaborative estimator which is shown in Fig.~\ref{figure:CS systems}(b). Such a system is named collaborative compressive sensing (CCS) in the sequel.

For the same measurement $\vy$~--~the compressed version of $\vx$ via (\ref{eq:sensing})~--~the estimate $\vx_i$ of $\vx$ given by the $i$-th CS system $(\mPhi, \mPsi_i)$ is~\footnote{Here, the reconstruction error $\ve_i=\vx_i -\vx$ depends  mainly on the CS system $(\mPhi, \mPsi_i)$ as well as the algorithm used for signal recovery and is assumed to be statistically independent of the signal $\vx$. A learning strategy may be applied for more complicated situations.}
\e \vx_i = \mPsi_i \widehat \vs_i,~\forall~i,\label{eq:signal model multi dic}\ee
where $\mPsi_i \in \Re^{N\times L_0}, ~\forall~i$ are designed using different training samples and the sparse coefficient vector $\widehat \vs_i$ is obtained using a reconstruction scheme, say (\ref{eq:recov-1}), with the measurement $\vy$ and the equivalent dictionary $\mA_i = \mPhi\mPsi_i$. \\

Combining (\ref{collaborative-estimator}) and (\ref{eq:signal model multi dic}) yields
\[
\widehat \vx =\mOmega\text{diag}(\mPsi_1, \cdots, \mPsi_i, \cdots, \mPsi_{K}) \begin{bmatrix} \widehat\vs_1 \\ \vdots \\ \widehat\vs_i\\ \vdots\\ \widehat\vs_K \end{bmatrix}:= \mPsi_c \vs_c,
\]
which can be viewed as a linear sparse approximation of $\vx$ using a larger dictionary $\mPsi_c$, having $KL_0$ atoms, with a sparsity level of $\kappa K$. A signal $\vx$ can be represented in many different ways. In fact, in the proposed CCS system  $\vx$ is $\kappa$-sparsely represented in each of $K$ different dictionaries and the collaborative estimator (\ref{collaborative-estimator}) is equivalent to a representation in a high dimensional dictionary $\mPsi_c \in \Re^{N\times K L_0}$ and a constrained block sparse coefficient vector $\vs_c$. It is due to this equivalent high dimensional dictionary that enriches the signal representation ability and hence makes the proposed CCS scheme yield an improved performance.\\

As discussed in Section~\ref{sec:MLE coll estimator}, the choice of $\mOmega$ depends on the statistical information of the errors corresponding to the estimate $\vx_i$. As illustrated in Section~\ref{sec:experiments}, when the prior information of  the statistical information of the estimation errors is not available, a practical strategy is to take the average of all the  estimates available. This coincides with the fusion strategy used in \cite{elad2009plurality} and \cite{wijewardhana2017signal}.\\

\noindent{\bf Remark~2.3}:
\begin{itemize}
\item In \cite{elad2009plurality}, the authors consider estimating the clean signal $\vx$ from the measurement $\vy$ of form $\vy=\vx+\vv$, where $\vv$ is assumed to be zero mean independent and identically distributed (i.i.d.) Gaussian and $\vx$ is sparse in a given dictionary $\mPsi$: $\vx = \mPsi \vs$. With a set of sparse representations $\{\vs_i\}$ generated by a randomized OMP algorithm, which is intended to  solve the minimization of $||\vy - \mPsi \tilde \vs||^2_2$ under the sparsity constraint imposed on $\tilde \vs$, an estimator for $\vx$ was proposed by fusing $\{\vs_i\}$ with a plain averaging, which is equivalent to
\begin{align}
\widehat \vx = \frac{1}{K}\sum_{i=1}^K \vx_i, \label{collaborative-estimatorX-plain averageing}\end{align}
where $\vx_i = \mPsi \vs_i = \vx + \mPsi(\vs_i-\vs):= \vx +\ve_i, ~\forall~i$. Clearly, the above fusion is the same as (\ref{collaborative-estimatorX}) with $\omega_i =\frac{1}{K},~\forall~i$. Experiments in~\cite{elad2009plurality} showed that such obtained $\widehat \vx$ outperforms the maximum a posteriori probability (MAP) estimator. In this case, there is nothing to ensure that the corresponding estimation errors $\{\ve_i\}$ are uncorrelated, which is one of the condition assumed when (\ref{collaborative-estimatorX-plain averageing}) is derived under the MLE principle;
\item A practical model is usually  derived with some ideal conditions assumed. This is the case of the fusion model (\ref{collaborative-estimatorX-plain averageing}). Experiments have shown that such a model works very well in some circumstances \cite{elad2009plurality}, \cite{wijewardhana2017signal}. The analysis in \cite{elad2009plurality} and the one provided in Section~\ref{sec:MLE coll estimator} can be served as complementary to each other for studying the theoretical foundation of such a model;
\item The same CCS scheme was independently proposed in \cite{wijewardhana2017signal}, in which the plain averaging fusion (\ref{collaborative-estimatorX-plain averageing}) is adopted and the sensing matrix as well as the dictionaries  are assumed to be given, and the focus is to derive a customized interior-point method to jointly obtain the set of sparse vectors $\{\vs_i\}$. Instead of using fixed dictionaries and random sensing matrix, these elements are simultaneously optimized in our proposed CCS and hence the performance of the system is enhanced. This will be discussed in next section.
    \end{itemize}


\section{Design of Optimal CCS Systems}
\label{sec:deisgn colla CS}
In the classical dictionary learning, the sparsifying dictionary is designed using a collection of available signal samples. For example, in image  application the signal samples, called training samples, are collected from a set of patch-based vectors generated from some database (e.g., LabelMe \cite{russell2008labelme}) that contains a huge number of different images taken a priori.

Traditionally,  a CS system $(\mPhi, \mPsi)$ is designed with the sparsifying dictionary $\mPsi$ learned using the training samples first and the sensing matrix $\mPhi$ then determined based on the learned $\mPsi$, where the two stages involve different design criteria.

Let
\[
\underline\mX =\begin{bmatrix}\mX_1&\cdots&\mX_i&\cdots&\mX_K\end{bmatrix} \in \Re^{N\times J_0K}
\] be  the data matrix of training samples with $\mX_i \in \Re^{N\times J_0},~\forall~i$. The problem we confront with is to learn $\mPsi_i$ using $\mX_i$ for $i=1,2, \cdots, K$ and the same $\mPhi$ shared by all the $K$ dictionaries.

In this section, we provide an approach that allows us to optimize all $\mPhi$ and $\{\mPsi_i\}$ jointly using the same criterion.


\subsection{Structure of sensing matrices}\label{sec:signal model}
Before formulating the optimal CCS design problem, let us consider an alternative reconstruction scheme to (\ref{eq:recov-1}). A class of structures for sensing matrices is then derived under such a scheme.

Consider a CS system $(\mPhi, \mPsi)$ and that the original signal $\vx$ is given by the following more general signal model than (\ref{eq:represet x}):
\e
\vx = \mPsi \vs + \ve,
\label{eq:signal model}\ee
where $\ve$ is the {\em sparse representation noise} (a.k.a {\em signal noise}). The measurement $\vy = \mPhi \vx$ is then of form
\[
\vy = \mA \vs + \vv,
\]
where $\vv := \vy - \mA\vs = \mPhi \ve$  and $\mA = \mPhi \mPsi$ is the equivalent dictionary of the CS system.

Suppose  $\ve$ is of normal distribution with a zero-mean and
\[
\mathbb{E}[\ve\ve^{\cal T}]= \sigma_e^2 \mId_N.
\]
Then, $\vv $ has the multivariate normal distribution with ${\cal N}(\vzero,\sigma_e^2\mPhi\mPhi^{\cal T})$, that is its PDF $f_{\vv}(\vxi)$ is given by (\ref{pdf-1}), where
${\bf \Gamma}=\sigma_e^2\mPhi\mPhi^{\cal T}$
and the sensing matrix $\mPhi$ is assumed of full row rank. As understood, according to the MLE principle along with the sparsity assumption on $\vs$, the best estimate $\widehat \vs$ of the sparse coefficient vector $\vs$ that can be achieved with the given CS system and the measurement $\vy$ is the one that maximizes the likelihood function in $\vs$, that is $f_{\vv}(\vxi)$ with  $\vxi= \vy - \mA\vs$.  This leads to
\e \begin{split}
\widehat \vs := & \argmin_{\vs} \|(\mPhi\mPhi^{\cal T})^{-1/2} (\vy - \mA\vs)\|_2^2,\\
& \text{s.t.} \ \|\vs\|_0 \leq \kappa.
\end{split}\label{eq:MLE for s}\ee
(\ref{eq:MLE for s}) is referred  to as pre-conditioned reconstruction scheme (PRS) \cite{li2018joint}, which is different from the traditional scheme (\ref{eq:recov-1}).

Note that  the objective function in~\eqref{eq:MLE for s} can be rewritten as
\e
\|(\mPhi\mPhi^{\cal T})^{-1/2} (\vy - \mA\vs)\|_2^2 = \|\overline \vy - \overline \mPhi \mPsi \vs\|_2^2,
\label{eq:MLE equivalent to}\ee
where  $\overline \mPhi:= (\mPhi\mPhi^{\cal T})^{-1/2} \mPhi$ and $\overline \vy := (\mPhi\mPhi^{\cal T})^{-1/2} \vy=\bar \mPhi \vx$.

Now, we present some results regarding the RIP properties of the CS systems $(\mPhi, \mPsi)$ and $(\bar \mPhi, \mPsi)$ under the assumption that they utilize the classical reconstruction scheme (\ref{eq:recov-1}) for signal recovery.

A matrix $\mQ$ is said to satisfy the RIP of order $\kappa$ if there exists a constant $\gamma\in(0,1)$ such that
\e
\sqrt{1-\gamma}\leq\frac{\|\mQ\vs\|_2}{\|\vs\|_2}\leq \sqrt{1+\gamma}
\label{eq:RIP}\ee
for all $\|\vs\|_0 \leq \kappa$.  It is shown in \cite[Lemma 4.1]{davenport2012pros} that the matrix $\overline \mPhi$ satisfies the RIP if so does the sensing matrix $\mPhi$.
A generalization of the RIP is $\mPsi$-RIP \cite{candes2011compressed}: a matrix $\mPhi$ is said to be $\mPsi$-RIP of order $\kappa$ if there exists a constant $\delta\in(0,1)$ such that
\e
\sqrt{1-\delta}\leq\frac{\|\mPhi\mPsi\vs\|_2}{\|\mPsi \vs\|_2}\leq \sqrt{1+\delta}
\label{eq:Psi RIP}\ee
holds for all $\vs$ with $\|\vs\|_0\leq \kappa$. We have the following interesting results.
\begin{lemma}\label{lem:RIP of MLE}
Let $(\mPhi, \mPsi)$ be a CS system. Suppose that $\mPsi$ satisfies the RIP of  order $\kappa$ with constant $\gamma\in (0,1)$ and that $\mPhi \in \Re^{M\times N}$ satisfies the $\mPsi$-RIP  of order $\kappa$ with constant $\delta <1$ and $\vx =\mPsi \vs$ with $\|\vs\|_0 \leq \kappa$. Denote $\overline \mPhi= (\mPhi\mPhi^{\cal T})^{-1/2} \mPhi$, $\mA=\mPhi \mPsi$ and $\overline \mA= \overline \mPhi\mPsi$. Then,
\e
\sqrt{(1-\delta)(1-\gamma)} \leq \frac{\|\mA \vs\|_2}{\|\vs\|_2}\leq \sqrt{(1+\delta)(1+\gamma)},
\label{eq:RIP A}\ee
and
\e
\frac{\sqrt{(1-\delta)(1-\gamma)}}{\pi_{\max}(\mPhi)} \leq \frac{\|\overline \mA \vs\|_2}{\|\vs\|_2}\leq \frac{\sqrt{(1+\delta)(1+\gamma)}}{\pi_{\min}(\mPhi)},
\label{eq:RIP tilde A}\ee
where $\pi_{\max}(\mPhi)$ and $\pi_{\min}(\mPhi)$ denotes the largest and smallest nonzero singular values of $\mPhi$ respectively.
\end{lemma}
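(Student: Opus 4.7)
The plan is to prove the two chains of inequalities by composing the hypotheses in a modular way, treating $\mA$ and $\overline \mA$ as compositions of maps whose norm behavior on $\vs$ can be controlled separately.

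First, for \eqref{eq:RIP A}, I would write $\mA \vs = \mPhi(\mPsi \vs)$ and peel off one layer at a time. Since $\|\vs\|_0 \leq \kappa$, the RIP of $\mPsi$ gives
\[
\sqrt{1-\gamma}\,\|\vs\|_2 \;\leq\; \|\mPsi \vs\|_2 \;\leq\; \sqrt{1+\gamma}\,\|\vs\|_2 .
\]
Then the $\mPsi$-RIP of $\mPhi$ (applied to the same $\vs$) gives
\[
\sqrt{1-\delta}\,\|\mPsi \vs\|_2 \;\leq\; \|\mPhi \mPsi \vs\|_2 \;\leq\; \sqrt{1+\delta}\,\|\mPsi \vs\|_2 .
\]
Multiplying the matching inequalities and using $\mA = \mPhi \mPsi$ yields \eqref{eq:RIP A} directly. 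This first half is essentially bookkeeping, with no obstacle beyond making sure the sparsity level $\kappa$ is the same one used in both RIP assumptions.

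For \eqref{eq:RIP tilde A}, the idea is to factor $\overline \mA \vs = (\mPhi \mPhi^\T)^{-1/2}(\mA \vs)$ and bound the action of $(\mPhi \mPhi^\T)^{-1/2}$ on an arbitrary vector $\vw \in \Re^M$. Since $\mPhi$ has full row rank (as guaranteed by the $\mPsi$-RIP together with $\mPhi \mPhi^\T$ being inverted here), its nonzero singular values $\pi_1(\mPhi) \geq \cdots \geq \pi_M(\mPhi) > 0$ give $\mPhi \mPhi^\T$ the eigenvalues $\pi_i(\mPhi)^2$, so $(\mPhi \mPhi^\T)^{-1/2}$ has eigenvalues $1/\pi_i(\mPhi)$. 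The standard singular-value bound then reads
\[
\frac{1}{\pi_{\max}(\mPhi)}\,\|\vw\|_2 \;\leq\; \|(\mPhi\mPhi^\T)^{-1/2}\vw\|_2 \;\leq\; \frac{1}{\pi_{\min}(\mPhi)}\,\|\vw\|_2 .
\]
Applying this with $\vw = \mA \vs$ and substituting the bounds from \eqref{eq:RIP A} on $\|\mA \vs\|_2$ produces \eqref{eq:RIP tilde A}.

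The only mild obstacle I anticipate is notational: one needs to verify that the min/max singular-value bounds on $(\mPhi\mPhi^\T)^{-1/2}$ really do come in the orientation claimed in the statement (the smallest eigenvalue of $(\mPhi\mPhi^\T)^{-1/2}$ is $1/\pi_{\max}(\mPhi)$, which gives the lower bound, and the largest is $1/\pi_{\min}(\mPhi)$, which gives the upper bound). Once this is carefully checked, the two bounds compose cleanly with \eqref{eq:RIP A} in the correct direction, and no further technical work is needed.
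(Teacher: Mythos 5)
Your proposal is correct and follows essentially the same route as the paper: the first chain is the same composition of the RIP of $\mPsi$ with the $\mPsi$-RIP of $\mPhi$, and your singular-value bound on $(\mPhi\mPhi^{\cal T})^{-1/2}$ is exactly the paper's estimate $\|\mT\|_2^{-1}\|\mPhi\vx\|_2\leq\|\overline\mPhi\vx\|_2\leq\|\mT^{-1}\|_2\|\mPhi\vx\|_2$ with $\mT=(\mPhi\mPhi^{\cal T})^{1/2}$, merely applied after forming $\mA\vs$ rather than before composing with $\mPsi$. The orientation of the bounds checks out ($1/\pi_{\max}$ below, $1/\pi_{\min}$ above), so no gap remains.
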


\begin{proof}[Proof of Lemma~\ref{lem:RIP of MLE}]
Since $\mPsi$ is RIP,  \eqref{eq:RIP} holds with $\mQ =\mPsi$, which combined with  \eqref{eq:Psi RIP} results in \eqref{eq:RIP A} in a straightforward fasion.

Let $\overline\mPhi = \mT^{-1} \mPhi$ for any non-singular matrix $\mT$. Then, $\mPhi = \mT \overline \mPhi$. It follows from matrix analysis that
\eqn \|\overline\mPhi \vx\|_2 \leq \|\mT^{-1}\|_2~ \|\mPhi\vx\|_2,\ \|\mPhi \vx\|_2 \leq \|\mT\|_2~ \|\overline\mPhi\vx\|_2,\nonumber
\eeqn
which implies $
\|\mT\|^{-1}_2~\|\mPhi \vx\|_2 \leq \|\overline\mPhi \vx\|_2 \leq \|\mT^{-1}\|_2 \|\mPhi\vx\|_2$.
With $\mT=(\mPhi\mPhi^\T)^{1/2}$ and $\mPhi$ satisfying $\mPsi$-RIP, we have
\e
\frac{\sqrt{1-\delta}}{\pi_{\max}(\mPhi)} \leq \frac{\|\overline \mPhi \vx\|_2}{\|\vx\|_2}\leq \frac{\sqrt{1+\delta}}{\pi_{\min}(\mPhi)}.
\label{eq:RIP tilde Phi}  \ee
Thus \eqref{eq:RIP tilde A} follows from   \eqref{eq:RIP tilde Phi} and \eqref{eq:RIP A}. This completes the proof.
\end{proof}

\vspace{0.5cm}
\noindent{\bf Remark~3.1}:
\begin{itemize}\item It was shown in \cite{candes2011compressed} that a random matrix $\mPhi$ with a specific distribution satisfies \eqref{eq:Psi RIP} with high probability. The 2nd part of Lemma~\ref{lem:RIP of MLE} implies that the equivalent dictionaries $\mA$ and $\overline\mA$ have RIP of order $\kappa$ if so is the dictionary $\mPsi$;
\item Both \eqref{eq:RIP A} and \eqref{eq:RIP tilde A} imply that $\overline \mPhi$ severs as an isometric operator similar to $\mPhi$ for all $\kappa$-sparse signals, but the constants on the left and right hand sides of them  depend on the condition number of $\mPhi$.
    Specifically, for a given $\mPsi$,  $\overline \mPhi$ makes  $\overline \mA$ achieve the tightest RIP constants if $\mPhi$ is a tight frame. \footnote{We say $\mPhi\in\Re^{M\times N}$ a frame of $\Re^{M}$ if there are positive constants $c_1$ and $c_2$ such that $0< c_1 \leq c_2<\infty$ and for any $\vy\in \Re^{M}$, $c_1\|\vy\|_2^2 \leq \|\mPhi^{\cal T} \vy\|_2^2\leq c_2\|\vy\|_2^2$. A frame is a $c$-tight frame if $c_1 = c_2=c$, which implies $\mPhi\mPhi^{\cal T} = c \mId_M$. A unit tight frame corresponds to $c=1$.}  
\end{itemize}

\vspace{0.25cm}
Let $(\mPhi, \mPsi)$ be any CS system and $(\overline\mPhi, \mPsi)$ be the CS system, where the sensing matrix $\overline\mPhi = (\mPhi \mPhi^{\cal T})^{-1/2}\mPhi$. It is interesting to note that for the same original signal $\vx = \mPhi \vs$,  both $(\mPhi, \mPsi)$ with $\vy =\mPhi \vx$ and $(\overline\mPhi, \mPsi)$ with $\bar \vy = \bar \mPhi \vx$ yield exactly the same estimate of $\vs$ under the PRS (\ref{eq:MLE for s}). In this sense, designing $(\mPhi, \mPsi)$ is totally equivalent to designing $(\overline\mPhi, \mPsi)$ of a structure $\overline\mPhi = (\mPhi \mPhi^{\cal T})^{-1/2}\mPhi$. Furthermore, note that for any full row rank matrix $\mPhi$
\e
\overline \mPhi \overline \mPhi^{\cal T} = (\mPhi\mPhi^{\cal T})^{-1/2}\mPhi \mPhi^{\cal T}(\mPhi\mPhi^{\cal T})^{-1/2} = \mId_M.
\nonumber \ee 

Therefore, one concludes that under the PRS (\ref{eq:MLE for s}), designing a CS system $(\mPhi, \mPsi)$ with the sensing matrix $\mPhi$ searched within the set of full row rank sensing matrices is totally equivalent to designing a CS system $(\mPhi, \mPsi)$ with the sensing matrix $\mPhi$ searched within the set of all unit tight frames, that is~\footnote{With $\mPhi$ constrained by (\ref{eq:tilde Phi tight frame}), $\bar \mPhi =(\mPhi\mPhi^{\cal T})^{-1/2}\mPhi =\mPhi$. Therefore, $\bar \mPhi$ will be no longer used in the sequel.}
\e \mPhi \in \calO_{M,N}. \label{eq:tilde Phi tight frame}\ee
In the optimal CCS design problem to be formulated in the next subsection, the PRS (\ref{eq:MLE for s}) is assumed to be used and hence the optimal sensing matrix $\mPhi$ will be searched under the constraint  \eqref{eq:tilde Phi tight frame}. Note that the unit tight frames have already been utilized for CS. In particular, it has been proved that when the sensing matrix $\mPhi$ is obtained by randomly selecting a number of rows from an orthonormal matrix, it satisfies the RIP under certain condition \cite{rauhut2010compressive}. A typical example is the random partial Fourier matrix~\cite{CandesRombergTao2006RobustUncertaintyPrinciples}. Instead of randomly selecting a number of rows from an orthonormal matrix, we attempt to learn the unit tight frame adaptively to the training data.

\subsection{Learning CCS systems - problem formulation}
Our objective here is to jointly design the sensing matrix and dictionaries for CCS systems with training samples $\underline\mX$. To this end, a proper measure is needed.

First of all, it follows from \eqref{eq:signal model} that
\e
\mX_i = \widetilde\mPsi_i \widetilde \mS_i + \mE_i,
\label{eq:X_i = Phi S + E}\ee
where $\widetilde \mPsi_i$ is the underlying dictionary and $\|\widetilde \mS_i(:,j)\|_0 \leq \kappa$. A widely utilized objective function for dictionary learning is
\e
\varrho_1(\mPsi_i,\mS_i,\mX_i) := \|\mX_i - \mPsi_i \mS_i\|_F^2,
\label{eq:def rho_1}\ee
which represents the variance of the sparse representation error $\mE_i$, where $\|\cdot\|_F$ denotes the {\it Frobenius} norm, and is utilized in state-of-the-art dictionary learning algorithms like the MOD~\cite{engan1999MOD} and the K-SVD~\cite{aharon2006KSVD}.
Similarly, denote
\e
\varrho_2(\mPhi,\mPsi_i,\mS_i,\mX_i) := \|\mPhi (\mX_i - \mPsi_i \mS_i)\|_F^2
\label{eq:def rho_2}\ee
as the variance of the projected signal noise $\mPhi \mE_i$.


It follows from \eqref{eq:X_i = Phi S + E} that the measurements are of the form $\mY_i = \mPhi \mX_i = \mPhi \widetilde\mPsi_i \widetilde \mS_i + \mPhi \mE_i$. Thus besides reducing the projected signal noise variance $\varrho_2$, the sensing matrix $\mPhi$ is expected to sense most of the key ingredients  $\widetilde\mPsi_i \widetilde \mS_i$. This can be done by choosing the sensing matrix $\mPhi$ such that $
\|\mPhi \widetilde\mPsi_i \widetilde \mS_i\|_F^2 $
is maximized. As $\mPhi \mPhi^{\cal T} = \mId_M$ is assumed (see (\ref{eq:tilde Phi tight frame})), maximizing $
\|\mPhi \mPsi_i \mS_i\|_F^2 $  in terms of $\mPhi$ is equivalent to minimizing
\e
\varrho_3(\mPhi,\mPsi_i,\mS_i) := \|(\mId_N - \mPhi^{\cal T}\mPhi) \mPsi_i \mS_i\|_F^2.
\label{eq:def rho_3}\ee
Therefore, the following measure is proposed for each CS in the proposed CCS system:
\eqn \varrho(\mPhi,\mPsi_i,\mS_i,\mX_i) &:=&  \varrho_1(\mPsi,\mS_i,\mX_i) + \alpha  \varrho_2(\mPhi,\mPsi_i,\mS_i,\mX_i)+ \beta \varrho_3(\mPhi,\mPsi_i,\mS_i),\label{the measure}\eeqn
where $\varrho_1(.)$, $\varrho_2(.)$ and $\varrho_3(.)$ are the measures defined in \eqref{eq:def rho_1} - \eqref{eq:def rho_3}, respectively, and $\alpha$ and $\beta$ are the weighting factors to balance the importance of the three  terms.

The problem of designing optimal CCS systems is then formulated as
\e\begin{split}
 &(\widetilde \mPhi, \{\widetilde \mPsi_i, \widetilde \mS_i\}):= \argmin_{\mPhi,\{\mPsi_i, \mS_i\}}  \sum_{i=1}^K \varrho(\mPhi,\mPsi_i,\mS_i,\mX_i),\\
 &\text{s.t.}~~\mPhi\in\calO_{M,N},~\|\mS_i\|_{\infty}\leq b,~~ \| \mS_i(:,j)\|_0\leq \kappa,~\mPsi_i(:,\ell)\in\calU_N, \ \forall \ i,j,\ell,
 \end{split} \label{eq:CS based MLE}\ee
in which both the sensing matrix $\mPhi$ and the bank of dictionaries $\{\mPsi_i\}$ are jointly optimized using the identical measure. Here, $\|\mS_i\|_\infty$ denotes the largest (in absolute value)  entries  in $\mS_i$  and $\|\mS_i\|_{\infty}\leq b$ defines the set of stable solutions~\cite{bao2016dictionary}. We can set $b$ arbitraily large to make the set $\{\mS_i:\|\mS_i\|_{\infty}\leq b\}$ contain the desired solutions for any practical application. We also note that such a constraint assists in the convergence analysis in Section~\ref{sec-convergence} since it results in well bounded solutions.

\section{Algorithms for Designing Optimal CCS Systems}\label{sec:algorithm}
We now propose an algorithm based on alternating minimization for solving \eqref{eq:CS based MLE}. The basic idea is to construct a sequence $\{\mPhi_k,\{\mPsi_{i,k}\},\{\mS_{i,k}\}\}$ such that $\{\sum_{i=1}^K \varrho(\mPhi_k,\mPsi_{i,k},\mS_{i,k},\mX_i)\}$ is a decreasing sequence.

First of all, note that $\varrho(\mPhi,\mPsi_i,\mS_i,\mX_i)$ can be rewritten as
\e \varrho(\mPhi,\mPsi_i,\mS_i,\mX_i) =\left\|\mC(\mPhi,\mX_i) -\mB(\mPhi)\mPsi_i \mS_i\right\|^2_F,  \label{eq:varrho together}\ee
where
 \e \mC(\mPhi,\mX_i) := \MAT{c}\mX_i\\ \sqrt{\alpha}\mPhi \mX_i\\ {\bf 0} \mat,\ \mB(\mPhi):= \MAT{c} \mId_N\\ \sqrt{\alpha}\mPhi \\ \sqrt{\beta}(\mId_N - \mPhi^{\cal T}\mPhi) \mat.\label{AB-def}\ee

The proposed algorithm is then outlined in {\bf Algorithm~\ref{alg:main}}.\\

\begin{algorithm}[htb]
	\caption{Design an optimal CCS system}
	\label{alg:main}
\noindent{\it Input}: the training data $\{\mX_i\}$, the dimension of the projection space $M$, the number of atoms $L_0$, the number of iterations $N_{ite}$, and the parameters $\nu_1,\nu_2, \nu_3$ and $\nu_4$.\\

\noindent{\it Initialization}: set initial dictionaries $\{\mPsi_{i,0}\}$, say each dictionary can be chosen as a DCT matrix or randomly selected from the data, initial sparse coefficient matrixs $\mS_{i,0}$ by solving the sparse coding (i.e., minimizing $\varrho_1$ given by \eqref{eq:def rho_1} with $\mPsi_i = \mPsi_{i,0}$ ) for all $i\in[K]$. \\

\noindent{\it Begin $k=1,2, \cdots, N_{ite}$}
\begin{itemize}
\item Update $\mPhi$ with $\mPsi_{i} = \mPsi_{i,k-1}, \mS_i = \mS_{i,k-1}, \forall~i$ using the proximal method:
\e \mPhi_k = \argmin_{\mPhi\in\calO_{M,N}} \sum^K_{i=1} \varrho(\mPhi,\mPsi_{i},\mS_{i}, \mX_i) + \nu_1 \|\mPhi^\T\mPhi - \mPhi_{k-1}^\T\mPhi_{k-1}\|_F^2, \label{general-ZhuZHx}\ee
which, as to be seen,  can be solved analytically using Lemma~\ref{lem:update Phi}.
\item Update $\mPsi_{i}$ with $\mPhi = \mPhi_{k}, \mS_i = \mS_{i,k-1}, \forall~i\in[K]$ using the proximal method:
\e \mPsi_{i,k} = \argmin_{\mPsi(:,l)\in\calU_N,\forall l}\varrho(\mPhi,\mPsi,\mS_{i}, \mX_i)+ \nu_2 \|\mPsi - \mPsi_{i,k-1}\|_F^2.\label{eq:update Psi prob} \ee
An algorithm will be given later (see {\bf Algorithm} \ref{alg:update Psi}) for solving such a problem.
\item Update $\mS_i$ with $\mPhi = \mPhi_{k}, \mPsi_i = \mPsi_{i,k}, \ \forall i$: Note that \eqref{eq:CS based MLE} becomes
 \e\begin{split}
 &\mS_{i,k}=\argmin_{\mS}  \varrho(\mPhi,\mPsi_i,\mS, \mX_i), \ \text{s.t.}\ \|\mS\|_\infty\leq b, \ \|\mS(:,j)\|_0\leq \kappa, ~\forall~j.
    \end{split}\label{eq:OMP for Sx}\ee
 As to be seen below, such a problem can be addressed using an OMP-based procedure that is parametrized by two positive constants $\nu_3$ and $\nu_4$.
\end{itemize}
\noindent{\it End }\\
\noindent{\it Output:}  $\widetilde \mPhi=\mPhi_{N_{ite}}$ and $\{\widetilde \mPsi_i=\mPsi_{i,N_{ite}}\}$.\\

\end{algorithm}

\noindent{\bf Remark~4.1:}
\begin{itemize}
\item The proximal terms $\nu_1 \|\mPhi^\T\mPhi - \mPhi_{k-1}^\T\mPhi_{k-1}\|_F^2$ and $\nu_2 \|\mPsi - \mPsi_{i,k-1}\|_F^2$ in \eqref{general-ZhuZHx} and \eqref{eq:update Psi prob} are utilized to improve the convergence of the alternating-minimization based algorithms~\cite{attouch2010proximal}. Here the term $\|\mPhi^\T\mPhi - \mPhi_{k-1}^\T\mPhi_{k-1}\|_F^2$ is slightly different than the classical choice which suggests to use $\|\mPhi - \mPhi_{k-1}\|_F^2$. As illustrated in Lemma~\ref{lem:update Phi}, with this term $\|\mPhi^\T\mPhi - \mPhi_{k-1}^\T\mPhi_{k-1}\|_F^2$, we have a closed-form solution to \eqref{general-ZhuZHx}.
\item As seen from the outline of {\bf Algorithm~\ref{alg:main}}, all the dictionaries $\{\mPsi_i\}$ (and the coefficients $\{\mS_i\}$) can be updated concurrently. Thus, utilizing parallel computing strategy, it takes the same time to design a CCS system as that to design a traditional CS system though such efficiency is not an issue for off-line design.\\
\end{itemize}

The proposed {\bf Algorithm~\ref{alg:main}} consists of three minimization problems, specified by \eqref{general-ZhuZHx}, \eqref{eq:update Psi prob} and  (\ref{eq:OMP for Sx}), respectively. In the remainder of this section, we will provide a solver to each problem, while the performance analysis of the whole algorithm will be given in next section.

\subsection{Updating sensing matrix and sparse representations}\label{updating sensing and S}
First of all, let us consider updating the sensing matrix, that is to solve \eqref{general-ZhuZHx}. The solution is given by the following lemma:
\begin{lemma}\label{lem:update Phi} Define $\widetilde \mG :=  \sum_{i=1}^K \mG(\mPsi_{i,k-1},\mS_{i,k-1},\mX_i) +  2\nu_1 \mPhi_{k-1}^\T\mPhi_{k-1}$, where
\eqn
\mG(\mPsi,\mS,\mX) &:=&  \beta \mPsi \mS(\mPsi \mS)^{\cal T} - \alpha (\mX - \mPsi \mS)(\mX - \mPsi \mS)^{\cal T}.
\label{eq:G}\eeqn 
Let $\widetilde \mG = \mV_{\tilde \mG} {\bf \Pi} \mV_{\tilde \mG}^{\cal T}$ be an eigen-decomposition (ED) of the $N\times N$ symmetric $\widetilde \mG$, where ${\bf \Pi}$ is diagonal and its diagonal elements $\{\pi_n\}$ are assumed to satisfy $\pi_1\geq \pi_2 \geq \cdots \geq \pi_N$.
Then the solution to (\ref{general-ZhuZHx})
is given by\footnote{Note that to simplify the notations, we omit the subscript $k$ in $\mPsi_{i,k}$ and $\mS_{i,k}$.}
\e \mPhi_k   = \mU\MAT{cc} \mId_M& {\bf 0}\mat \mV_{\tilde \mG}^{\cal T} \label{optimal-T-1x}\ee
with $\mU\in\calO_M$ arbitrary.
  \end{lemma}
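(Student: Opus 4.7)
My plan is to reduce the problem to maximizing a linear functional over the set of rank-$M$ orthogonal projections and then diagonalize using the eigendecomposition of $\widetilde{\mG}$.

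First, I would expand the objective and isolate the terms that depend on $\mPhi$. Using the cyclic property of the trace together with $\mPhi\mPhi^\T = \mId_M$ (which also gives $\|\mPhi^\T\mPhi\|_F^2 = \trace(\mPhi^\T\mPhi) = M$), I would rewrite, up to an additive constant,
\begin{align*}
\alpha\|\mPhi(\mX_i-\mPsi_i\mS_i)\|_F^2 &= \alpha\trace\bigl(\mPhi^\T\mPhi\,(\mX_i-\mPsi_i\mS_i)(\mX_i-\mPsi_i\mS_i)^\T\bigr),\\
\beta\|(\mId_N-\mPhi^\T\mPhi)\mPsi_i\mS_i\|_F^2 &= -\beta\trace\bigl(\mPhi^\T\mPhi\,\mPsi_i\mS_i(\mPsi_i\mS_i)^\T\bigr)+\text{const},\\
\nu_1\|\mPhi^\T\mPhi-\mPhi_{k-1}^\T\mPhi_{k-1}\|_F^2 &= -2\nu_1\trace\bigl(\mPhi^\T\mPhi\,\mPhi_{k-1}^\T\mPhi_{k-1}\bigr)+\text{const},
\end{align*}
where for the second line I exploit that $(\mId_N-\mPhi^\T\mPhi)$ is idempotent on $\calO_{M,N}$. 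Collecting everything and using the definition \eqref{eq:G}, I see that the objective equals $-\trace(\mPhi^\T\mPhi\,\widetilde{\mG})+C$, so \eqref{general-ZhuZHx} becomes the equivalent problem
\[
\max_{\mPhi\in\calO_{M,N}}\;\trace\bigl(\mPhi^\T\mPhi\,\widetilde{\mG}\bigr).
\]

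Next, I would use the eigendecomposition $\widetilde{\mG}=\mV_{\tilde\mG}\mPi\mV_{\tilde\mG}^\T$ and change variables by setting $\mQ:=\mV_{\tilde\mG}^\T\mPhi^\T\mPhi\,\mV_{\tilde\mG}$. Because $\mPhi^\T\mPhi$ is a rank-$M$ orthogonal projection, so is $\mQ$, and the objective becomes $\trace(\mQ\mPi)=\sum_{n=1}^{N}q_{nn}\pi_n$. The diagonal entries of any $N\times N$ orthogonal projection of rank $M$ satisfy $0\le q_{nn}\le 1$ and $\sum_n q_{nn}=M$ (the spectrum of $\mQ$ dominates its diagonal via Schur–Horn). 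Since $\pi_1\ge\cdots\ge\pi_N$, the linear functional $\sum_n q_{nn}\pi_n$ is maximized by taking $q_{nn}=1$ for $n\le M$ and $q_{nn}=0$ otherwise, which forces $\mQ=\diag(\mId_M,\mzero)$ and hence $\mPhi^\T\mPhi=\mV_{\tilde\mG}\diag(\mId_M,\mzero)\mV_{\tilde\mG}^\T$.

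Finally, I would recover $\mPhi$ from $\mPhi^\T\mPhi$. Writing the thin SVD of $\mPhi$ as $\mPhi=\mU\,\mSigma\,\mW^\T$ with $\mU\in\calO_M$, the constraint $\mPhi\mPhi^\T=\mId_M$ forces $\mSigma=\mId_M$, so $\mPhi^\T\mPhi=\mW\mW^\T$. Matching with the projector found above identifies the columns of $\mW$ as the top-$M$ eigenvectors of $\widetilde{\mG}$, i.e.\ $\mW=\mV_{\tilde\mG}\bigl[\begin{smallmatrix}\mId_M\\ \mzero\end{smallmatrix}\bigr]$, which yields the announced closed form $\mPhi_k=\mU\MAT{cc}\mId_M & \mzero\mat\mV_{\tilde\mG}^\T$ with $\mU\in\calO_M$ arbitrary, confirming the non-uniqueness on the left.

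The main obstacle I anticipate is the reduction step where I claim the maximum of $\sum_n q_{nn}\pi_n$ over rank-$M$ projection matrices is attained at a coordinate projector; I would justify this cleanly by invoking Schur–Horn (or by noting that the feasible set of diagonals is the base of the hypersimplex, whose extreme points are exactly $0/1$ vectors with $M$ ones) so that the argument does not depend on any strict separation $\pi_M>\pi_{M+1}$.
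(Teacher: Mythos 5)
Your proposal is correct and follows essentially the same route as the paper: both reduce \eqref{general-ZhuZHx} to maximizing $\trace(\mPhi^{\cal T}\mPhi\,\widetilde{\mG})$ over $\calO_{M,N}$ (using $\mPhi\mPhi^{\cal T}=\mId_M$ to make $\varrho_1$, $\|\mPhi^{\cal T}\mPhi\|_F^2$ and the cross terms constant) and then diagonalize via the ED of $\widetilde{\mG}$. The only difference is cosmetic: you justify the key inequality $\trace(\mQ\mPi)\leq\sum_{n=1}^{M}\pi_n$ by Schur--Horn on the diagonal of a rank-$M$ projector, whereas the paper parametrizes $\mPhi$ by its SVD and cites the corresponding bound from \cite{li2013projection}.
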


The proof is given in Appendix~\ref{sec:update Phi}.\\

We now consider (\ref{eq:OMP for Sx}).  Under certain conditions (like the mutual coherence or the RIP) on $\mB(\mPhi_k)\mPsi_{i,k}$, the OMP algorithm is guaranteed to solve \eqref{eq:OMP for Sx} exactly~\cite{davenport2010analysis,tropp2004greed} but these conditions are not always satisfied as the iteration goes. Here, we propose a numerical procedure for updating $\{\mS_i\}$ that can ensure the cost function of (\ref{eq:OMP for Sx}) decreasing. This is crucial to guarantee convergence of the proposed  {\bf Algorithm~\ref{alg:main}}.

Denote by $\calS_{\kappa,b}$ the set of $\kappa$-sparse signals that has bounded energy:
\e
\calS_{\kappa,b} = \left\{\vs\in\Re^{L_0}: \|\vs\|_{\infty}\leq b,~~ \| \vs\|_0\leq \kappa \right\}.
\label{eq:calS kappa}\ee
and $\calP_{\calS_{\kappa,b}^{J_0}}[\mS]$ as the orthogonal projection of $\mS \in\Re^{L_0\times J_0}$ onto the set $\calS_{\kappa,b}^{J_0}$.\footnote{The orthogonal projection $\calP_{\calS_{\kappa,b}^{J_0}}$ acts as keeping the largest (in absolute value) $\kappa$ elements for each column of the input matrix and then truncates the entries to $\pm b$ if they are larger (in absolute value) than $b$.
}

Let $\widetilde \mS_{i,k}$ denote the solution obtained by the OMP algorithm~\cite{mallat1993matchingPursuit} and $\zeta(\widetilde \mS_{i,k},\mS_{i,k-1}) := \varrho(\mPhi_{k},\mPsi_{i,k},\mS_{i,k-1},\mX_i) - \varrho(\mPhi_{k},\mPsi_{i,k},\widetilde\mS_{i,k},\mX_i)$. Then, the sparse representations are updated with
\e \mS_{i,k}=\left\{ \begin{array}{l}\widetilde \mS_{i,k},\quad ~~~~~~~~~~~~~~~~~~~~ \text{if} \quad  \zeta(\widetilde \mS_{i,k},\mS_{i,k-1})\geq \nu_3 \|\mS_{i,k-1} - \widetilde\mS_{i,k}\|_F^2 \ \& \ \|\widetilde \mS_{i,k}\|_\infty\leq b,\\
 \calP_{\calS_{\kappa,b}^{J_0}}\left[\mS_{i,k-1} - \nu_4 \nabla_{\mS_i} \varrho(\mPhi_{k},\mPsi_{i,k},\mS_{i,k-1},\mX_i)\right], ~~~~~\text{otherwise}, \end{array}\right.\label{eq:update S 1}\ee
where  $\nabla_{\mZ} f$ denotes the gradient of function $f$ {\em w.r.t.} variable $\mZ$.

The basic idea behind algorithm (\ref{eq:update S 1}) is to ensure that $\mS_{i,k}$ makes the objective function decrease, that is $\zeta( \mS_{i,k},\mS_{i,k-1})>0$,  which is actually a standard requirement for minimization algorithms. As seen from (\ref{eq:update S 1}), if the OMP solution can not meet such requirement the projected gradient descent algorithm~\cite{blumensath2008iterativeThresholding} is utilized to attack \eqref{eq:OMP for Sx}.\footnote{In practical implementations, once the OMP solution can not meet such requirement (i.e., makes the objective function decrease) for successive two times, one can directly utilize the projected gradient descent algorithm to attack \eqref{eq:OMP for Sx} for all the following iterations to reduce the computational complexity.}
A sufficient condition for the gradient descent algorithm to produce a $\mS_{i,k}$  that yields a smaller objective value than the previous one is to choose $\nu_4$ such that~\cite{bolte2014proximal}
\e  \delta(\mS,\mS'):=
\|\nabla_{\mS}\varrho(\mPhi,\mPsi,\mS,\mX) - \nabla_{\mS}\varrho(\mPhi,\mPsi,\mS',\mX)\|_F< \frac{1}{\nu_4}\|\mS - \mS'\|_F
\label{nu-4}\ee
for all $\mPhi\in\calO_{M,N}, \mPsi\in\calU_N^{L_0},\mS,\mS'\in\calS_{\kappa,b}^{J_0}$. To obtain a further simpler guidance for $\nu_4$, we derive an upper bound on the Lipschitz constant regarding the partial gradient $\nabla_{\mS}\varrho(\mPhi,\mPsi,\mS,\mX)$. Towards that end, it follows from (\ref{the measure}) that
\[\delta(\mS,\mS')= 2\left\|\mPsi^\T[\mId + \alpha\mPhi^\T\mPhi + \beta(\mId - \mPhi^\T\mPhi)]\mPsi (\mS - \mS')  \right\|_F,\]
which leads to
\e\begin{split}
\delta(\mS,\mS') &\leq 2 \left\|\mPsi^\T[\mId + \alpha\mPhi^\T\mPhi + \beta(\mId - \mPhi^\T\mPhi)]\mPsi  \right\|_2\left\|\mS - \mS'\right\|_F\\
& \leq 2(1+\alpha + \beta)\left\|\mPsi^\T \mPsi\right\|_2\left\|\mS - \mS'\right\|_F \leq 2(1+\alpha + \beta)L_0  \left\|\mS - \mS'\right\|_F,
\end{split}\label{eq:derive S}\ee
where the second inequality utilizes the fact $\mPhi\in\calO_{M,N}$ and
the last inequality follows because $\mPsi\in\calU_N^{L_0}$. Thus, $\nu_4$ can be simply chosen as
\e \nu_4< \frac{1}{2(1+\alpha + \beta)L_0}:= \frac{1}{L_{cs}}.\label{eq:Lipschitz for S}\ee

\noindent{\bf Remark~4.2}: {\bf Algorithm}~\ref{alg:main} is derived for addressing the optimal CCS problem (\ref{eq:CS based MLE}). There are four parameters $\nu_1, \nu_2, \nu_3$ and $\nu_4$ involved. As to be seen in Section~\ref{sec:convergence}, such an algorithm is ensured to be strictly convergent with a setting, in which the first three  can be chosen as arbitrary positive numbers and $\nu_4$ is set according to (\ref{eq:Lipschitz for S}). For a fixed $\nu_4$, it has been observed from simulations that picking small values for $\nu_1, \nu_2$ and $ \nu_3$ assists in the convergence speed of the algorithm and when they are smaller than a certain value (say $10^{-4}$ as the case in our experiments), the effect of these parameters on the algorithm's performance is not significant. Furthermore, the choice for $\nu_4$, as the step-size of the gradient descent algorithm, is important to ensure the convergence of the proposed  {\bf Algorithm}~\ref{alg:main}. Starting from $\mS_{i,k-1}$, the optimal $\nu_4$, which results in a $\mS_{i,k}$ that minimizes $\varrho(\mPhi_k,\mPsi_{i,k}, \mS, \mX_i)$, depends on the cost function $\varrho$ and the previous point $\mS_{i,k-1}$.  Such an optimal step-size is usually difficult to obtain and a practical strategy is to use line search methods such as the backtracking line search to select an appropriate step-size~\cite{nocedal2006numerical}. Here we utilize a simple strategy in (\ref{eq:Lipschitz for S}) to choose an appropriate step-size which is enough to guarantee the decrease of the objective function (see \eqref{eq:sufficient decrease S 2} in Appendix~\ref{appendix B}) and hence the convergence of the proposed  {\bf Algorithm}~\ref{alg:main}, though it is not the optimal one.  Moreover, as seen from (\ref{eq:Lipschitz for S}), this upper bound $\frac{1}{L_{cs}}$ does not depend on the previous point $\mS_{i,k-1}$  and is just determined by the parameters $L_0$ (the number of atoms in a dictionary $\mPhi_i$) and $(\alpha, \beta)$. We know that $\alpha$ and $\beta$ are the two weighting factors in the cost function  $\varrho$ and they are basically determined empirically.  Some discussions on how to choose the two parameters are given in Section~\ref{sec:experiments}. Given enough computational resources, using a linear search method to addaptively select a better step-size $\nu_4$ for each iteration is expected to increase the convergence of {\bf Algorithm}~\ref{alg:main}. We defer this to our future work.

\subsection{An iterative algorithm for \eqref{eq:update Psi prob}}
\label{sec:update Psi prob}
We now provide an iterative algorithm for \eqref{eq:update Psi prob}. It can be shown with (\ref{eq:varrho together}) that \eqref{eq:update Psi prob} consists of a set of constrained minimizations of form
\e
\widetilde\mPsi = \argmin_{\mPsi(;,\ell)\in {\cal U}_N, \forall~\ell}\{ f(\mPsi):=\|\mC - \mB \mPsi \mS\|_F^2 + \nu_2\|\mPsi - \mPsi'\|_F^2\},
\label{eq:update Psi prob 2}\ee
where $\mC, \mB$ and $\mS$ have nothing to do with $\mPsi$.

The idea behind the algorithm can be explained as follows. Assume that the first $\ell-1$ columns of $\mPsi$ have been updated. Rewrite the objective function as
\begin{align}
f(\mPsi) &= \|\mC - \mB\sum_{j\neq \ell} \vpsi_j \vs_j^\T  - \mB\vpsi_\ell \vs_\ell^\T\|_F^2 + \nu_2\|\mPsi - \mPsi'\|_F^2.
\label{eq:update Psi prob column}\end{align}

Let $\mE_\ell: = \mC - \mB\sum_{j\neq \ell} \vpsi_j \vs_j^\T$. Then minimizing $f$ in terms of $\vpsi_\ell$ is equivalent to solving
\e
\widetilde \vpsi_\ell = \argmin_{{\vpsi\in {\cal U}_N}}\|\mE_\ell - \mB \vpsi \vs_\ell^{\cal T}\|_F^2 + \nu_2 \|\vpsi - \vpsi_\ell'\|^2.
\label{eq:ls 1}\ee
With some manipulations, it can be shown that the above objective function is equal to
\begin{align*}
\vpsi^\T (\vs_\ell^\T \vs_\ell \mB^\T \mB + \nu_2\mId)\vpsi - 2(\vs_\ell^\T \mE_\ell^\T \mB + \nu_2\vpsi_\ell'^{\T}) \vpsi + \|\mE_\ell\|_F^2 + \nu_2 \|\vpsi_\ell'\|^2,
\end{align*}
which implies \eqref{eq:ls 1} is equivalent to
\e
\widetilde \vpsi_\ell = \argmin_{\vpsi\in {\cal U}_N} \{\vpsi^\T \mD_\ell\vpsi - 2  \vd_\ell^{\cal T} \vpsi \},
\label{eq:ls for psi}\ee
where $\mD_\ell:= \|\vs_\ell\|^2_2 \mB^\T \mB + \nu_2\mId$ is a positive semi-definite (PSD) matrix  and $\vd_\ell:= \mB^\T \mE_\ell \vs_\ell + \nu_2\vpsi_\ell'$.

As to be seen in the next subsection, the constrained least square problem (\ref{eq:ls for psi}) can be solved efficiently using an algorithm, denoted as {\bf Alg$_{CQP}$}.

Based on the developments given above, we propose {\bf Algorithm \ref{alg:update Psi}} for addressing \eqref{eq:update Psi prob}.

\begin{algorithm}[htb]
	\caption{Update dictionary $\mPsi_i$ via solving \eqref{eq:update Psi prob}}
\noindent{\it Initialization:}  set $\mC = \mC(\mPhi_{k-1}, \mX_i), \mB=\mB(\mPhi_{k-1})$, $\mS = \mS_{i,k}$, and $\mPsi =\mPsi_{i,k-1}$.

\noindent{\it Begin $\ell=1,2, \cdots, L_0$}, update the $\ell$-th column of $\mPsi$ by
\begin{itemize}
\item Compute the overall representation error by
\[
\mE_\ell = \mC - \mB\sum_{j\neq \ell} \vpsi_j \vs_j^{\cal T}
\]
and $\mD_\ell$ and $\vd_\ell$ as
\begin{align*}
\mD_\ell =  \|\vs_\ell\|^2_2\mB^\T \mB + \nu_2\mId, \ \vd_\ell = \mB^\T \mE_\ell \vs_\ell + \nu_2\vpsi_\ell.
\end{align*}
\item Update the $\ell$-th column by solving \eqref{eq:ls for psi} with {\bf Alg$_{CQP}$} and setting $\mPsi(:,\ell)=\widetilde \vpsi_\ell$.
\end{itemize}
\noindent{\it End }\\
\noindent{\it Output}: $\mPsi_{i,k}=\mPsi$.
	\label{alg:update Psi}\end{algorithm}

\subsection{Algorithm {\bf Alg$_{CQP}$}}
As  seen, (\ref{eq:ls for psi})  requires to solve a set of  constrained quadratic programmings of form
\e
\vh_{opt}:= \argmin_{\vh \in \cal U_N} \{ \xi(\vh):= \vh^\T \mD \vh  - 2 \vd^{\cal T} \vh \},
\label{eq:ls 2}\ee
where $\mD\in \Re^{N\times N}$ is a PSD matrix and $\vd \in \Re^N$, both are not a function  of $\vh$. Let $\mD = \mU\mSigma \mU^\T$ be an ED of $\mD$ where $\mSigma$ is an $N\times N$ diagonal matrix with diagonals ordered as $\sigma_1 \geq \cdots \sigma_N \geq 0$, and $\mU\in\calO_N$. Note that when $\vd = \bf 0$, \eqref{eq:ls 2} is equivalent to finding the smallest eigenvector of $\mD$  and hence the solution to (\ref{eq:ls 2}) is $\vh = \mU(:,N)$. Thus $\vd \neq \bf 0$ is assumed in the sequel. Unlike the unconstrained one, due to  the unit-norm constraint, in general there is no closed-form solution to \eqref{eq:ls 2}. 

Denote the Lagrange function ${\cal L}(\vh,\lambda)$ of problem \eqref{eq:ls 2} as
\[
{\cal L}(\vh,\lambda) = \xi(\vh) - \lambda (\vh^\T \vh -1).
\]
Then, it follows from the Karush-Kuhn-Tucker (KKT) conditions that any solution to  \eqref{eq:ls 2} should satisfy
\e\begin{split}
&\nabla_{\vh} {\cal L}(\vh,\lambda) = 2((\mD -\lambda \mId_N)\vh - \vd) = 0, \ \vh^\T \vh -1 = 0.
\end{split}\label{eq:KKT 1}\ee

Note that the KKT condition in \eqref{eq:KKT 1} is equivalent to
\e
\begin{split}
&(\mSigma -\lambda \mId_N)\overline \vh = \overline \vd, \ \overline \vh^\T \overline \vh -1 = 0,
\end{split}\label{eq:KKT 2}\ee
where $\overline \vh = \mU^\T\vh $ and $\overline \vd = \mU^\T\vd$.

With some manipulations, one can show that for any solution $(\overline \vh, \lambda)$ to \eqref{eq:KKT 2},  the objective function of \eqref{eq:ls 2} is given by
\e \xi(\vh)  = \sum^N_{n=1}\frac{\overline d_n^2}{\lambda - \sigma_n} + \lambda := \tilde \xi(\lambda) \label{eq:xi(lambda)}\ee
for $\lambda \in {\cal S}_{\lambda}$, where ${\cal S}_{\lambda}$ is the solution set of
\e g(\lambda) :=  \|\overline \vh\|^2_2 = \sum_{n =1}^N \frac{\overline d_n^2}{(\sigma_n -\lambda)^2}=1.\label{g-func}\ee
Therefore, the optimal $\lambda$ that corresponds to the solution of \eqref{eq:ls 2} is given by
\e \lambda_{opt}:= \argmin_{\lambda \in {\cal S}_{\lambda}}\tilde \xi(\lambda)\label{optima-lambda}\ee
with which  the solution of \eqref{eq:ls 2} can then be obtained from
$$(\mD -\lambda_{opt} \mId_N)\vh_{opt} =\vd.$$

Some properties of the set ${\cal S}_{\lambda}$ are summarized in the following theorem, which are given in \cite{zhu2017TIT}.
\begin{theorem} Let $g(\lambda)$ be defined in (\ref{g-func}) for all $\lambda \in \Re$ such that $\lambda \neq \sigma_\ell, \forall\ \ell$ unless $\overline d_\ell = 0$. Then
\begin{itemize}
\item $g(\lambda)$ is convex on each of the interval $(-\infty,\sigma_N)$, $(\sigma_N,\sigma_{N-1}), \ldots$, $(\sigma_2,\sigma_1),$ and $(\sigma_1,\infty)$.
\item there exist $\cal N$ solutions $\{\lambda_n\}$ for $g(\lambda) = 1$ with $2\leq {\cal N} \leq 2N$. Suppose $\lambda_1 > \lambda_{2}> \cdots > \lambda_{\cal N}$. Then
    \[
    {\underline\lambda}_1:= \max_{\ell}\left\{\sigma_\ell \pm \overline d_\ell  \right\} \leq \lambda_1 \leq {\overline \lambda}_1 := \sigma_N + \sqrt{\sum_{\ell=1}^N \overline d_\ell^2}
    \]
and $g(\lambda)$ is monotonically increasing within $[{\underline\lambda}_1,{\overline\lambda}_1]$.
\item  $\tilde \xi(\lambda)$ (defined in \eqref{eq:xi(lambda)}) is increasing on $\lambda_n$:
\e
\tilde \xi(\lambda_1) \leq \tilde \xi(\lambda_2) \leq \cdots \leq \tilde \xi(\lambda_{\cal N}),
\label{eq:increas xi}\ee
which implies $\lambda_{opt} = \lambda_1$.\\
\end{itemize}
\end{theorem}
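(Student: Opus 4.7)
All three bullets reduce to elementary analysis of the rational function $g$ on the $N+1$ open intervals cut out of the real line by the poles $\sigma_1\geq\cdots\geq\sigma_N$, together with the companion function $\tilde\xi$ and the master identity $\tilde\xi'(\lambda) = 1 - g(\lambda)$. This identity says that the roots of $g(\lambda)=1$ are precisely the critical points of $\tilde\xi$ and ties monotonicity of $\tilde\xi$ to the sign of $1-g$.

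For the convexity claim I would simply compute, on any interval containing no pole,
\[
g''(\lambda) = \sum_{n=1}^N \frac{6\,\overline d_n^2}{(\sigma_n-\lambda)^4} \geq 0,
\]
giving convexity as a nonnegative combination of convex summands. For the counting claim, on the extreme intervals $(-\infty,\sigma_N)$ and $(\sigma_1,\infty)$ the derivative $g'(\lambda)=2\sum_n \overline d_n^2/(\sigma_n-\lambda)^3$ has constant sign, so $g$ is strictly monotonic; since $g\to 0$ at $\pm\infty$ and $g\to+\infty$ at the adjacent pole (whenever the corresponding $\overline d_n\neq 0$), each extreme interval contributes exactly one root by the intermediate value theorem, already giving $\mathcal N\geq 2$. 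On each of the $N-1$ bounded intervals, $g$ is strictly convex and blows up at both endpoints, so $g=1$ has at most two solutions per interval, which sums to $\mathcal N\leq 2+2(N-1)=2N$.

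The lower bound $\lambda_1\geq\underline\lambda_1$ is immediate from the KKT equations $(\sigma_\ell-\lambda_1)\overline h_\ell = \overline d_\ell$ and $\sum_\ell \overline h_\ell^2=1$: the latter gives $|\overline h_\ell|\leq 1$ and hence $|\lambda_1-\sigma_\ell|\geq|\overline d_\ell|$, and since $\lambda_1>\sigma_1\geq\sigma_\ell$ the sign of $\lambda_1-\sigma_\ell$ is fixed, so one may take the maximum over $\ell$. For the upper bound I would test $g$ directly at $\overline\lambda_1$ using the term-wise estimate $(\overline\lambda_1-\sigma_\ell)^2\geq \sum_k \overline d_k^2$, which yields $g(\overline\lambda_1)\leq 1$; strict monotonicity of $g$ on $(\sigma_1,\infty)$ then forces $\lambda_1\leq\overline\lambda_1$, and this same monotonicity is what drives the monotonicity assertion on $[\underline\lambda_1,\overline\lambda_1]$.

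For the ordering of $\tilde\xi$ at the roots I would go back to the master identity. When two consecutive roots $\lambda_{n+1}<\lambda_n$ lie in a common interval of convexity, $g(\lambda_{n+1})=g(\lambda_n)=1$ together with convexity of $g$ pins $g\leq 1$ on $[\lambda_{n+1},\lambda_n]$; hence $\tilde\xi'\geq 0$ there and integration yields the required monotone relation between $\tilde\xi(\lambda_{n+1})$ and $\tilde\xi(\lambda_n)$. The remaining case, when two consecutive roots straddle a pole $\sigma_m$, is the main obstacle: here I would exploit the explicit form of $\tilde\xi$ near $\sigma_m$, where the $n=m$ term diverges with opposite signs on the two sides, combined with the sign of $1-g$ on each side, to carry the comparison across the singularity. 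This cross-pole comparison is where the real work lies; the other parts are essentially bookkeeping around the identity $\tilde\xi'=1-g$.
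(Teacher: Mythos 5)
First, a point of reference: the paper does not actually prove this theorem --- it is imported from \cite{zhu2017TIT} with no argument given --- so your proposal can only be judged on its own terms. Your first bullet (convexity via $g''(\lambda)=\sum_n 6\overline d_n^2/(\sigma_n-\lambda)^4\geq 0$), your root count, and your lower bound on $\lambda_1$ via $|\overline h_\ell|\leq 1$ are all correct, and the identity $\tilde\xi'(\lambda)=1-g(\lambda)$ is exactly the right organizing tool. Two of your steps, however, do not survive scrutiny. The smaller problem is the upper bound: with the paper's ordering $\sigma_1\geq\cdots\geq\sigma_N$, the quantity $\overline\lambda_1=\sigma_N+\bigl(\sum_k\overline d_k^2\bigr)^{1/2}$ is built from the \emph{smallest} eigenvalue, and your term-wise estimate $(\overline\lambda_1-\sigma_\ell)^2\geq\sum_k\overline d_k^2$ is then simply false (take $\sigma_1$ much larger than $\sigma_N$, so that $\overline\lambda_1<\sigma_1$); it holds only after replacing $\sigma_N$ by $\sigma_1$. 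Likewise $g'(\lambda)=\sum_n 2\overline d_n^2/(\sigma_n-\lambda)^3<0$ for $\lambda>\sigma_1$, so $g$ is \emph{decreasing}, not increasing, on the interval where $\lambda_1$ lives. You should flag these as defects of the transcribed statement rather than silently assert inequalities that fail for the quantities as defined.

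The serious gap is in the third bullet. Your same-interval argument is correct as far as it goes: if $\lambda_{n+1}<\lambda_n$ are consecutive roots in one pole-free interval, convexity gives $g\leq 1$ on $[\lambda_{n+1},\lambda_n]$, hence $\tilde\xi'\geq 0$ there and $\tilde\xi(\lambda_{n+1})\leq\tilde\xi(\lambda_n)$. But that is the \emph{reverse} of the claimed chain $\tilde\xi(\lambda_1)\leq\tilde\xi(\lambda_2)\leq\cdots$, so calling it ``the required monotone relation'' conceals a sign contradiction rather than resolving one. Indeed the chain as printed cannot be proved: for $N=1$, $\sigma_1=0$, $\overline d_1=1$ the roots are $\lambda_1=1>\lambda_2=-1$ with $\tilde\xi(1)=2>\tilde\xi(-1)=-2$, and the global minimizer of $\xi$ on the sphere ($h=1$) corresponds to $\lambda_2$, not $\lambda_1$ --- consistent with trust-region theory, where the optimal multiplier must satisfy $\lambda\leq\sigma_N$ so that $\mD-\lambda\mId_N$ is PSD. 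Under this paper's conventions the correct assertions are therefore $\tilde\xi(\lambda_1)\geq\cdots\geq\tilde\xi(\lambda_{\cal N})$ and $\lambda_{opt}=\lambda_{\cal N}$, with the bracketing interval sitting to the left of $\sigma_N$, where $g$ genuinely is increasing. Finally, even granting the corrected orientation, you explicitly defer the cross-pole comparison (``where the real work lies''): the divergence of $\tilde\xi$ to $+\infty$ on one side of $\sigma_m$ and $-\infty$ on the other, combined with the sign of $1-g$, does not by itself order $\tilde\xi$ at the two roots straddling $\sigma_m$, so that step is asserted, not proved. As it stands the third bullet is both incomplete and aimed at a statement that is false as written.
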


As guaranteed by \eqref{eq:increas xi}, $\lambda_{opt} = \lambda_1$. As $g(\lambda)$ is monotonically increasing within $[{\underline\lambda}_1,{\overline\lambda}_1]$, we can find $\lambda_1$ easily via a standard algorithm, say a bi-section-based algorithm. For convenience, the whole procedure for solving the unit-norm constrained quadratic programming \eqref{eq:ls 1} is denoted as {\bf Alg$_{CQP}$}.\\

\noindent{\bf Remark~4.3}: Our proposed Algorithm~\ref{alg:main} is based on the basic idea of alternating minimization that has led to a class of iterative algorithms for designing sensing matrices and dictionaries. The key difference that makes one algorithm different from another lies in the ways how the iterates are updated. It should be pointed out that though the alternating minimization-based algorithms practically work well for nonconvex problems, there is still lack of rigorous analysis for algorithm properties such as convergence.  To the best of our knowledge, a few of results on this issue have been reported \cite{bao2016dictionary,tropp2005designing}.

\section{Performance Analysis}\label{sec:convergence}
In this section, we will analyze the performance of the proposed CCS and {\bf Algorithm~\ref{alg:main}} in terms of implementation complexity and  convergence.
\subsection{Implementation complexity}
We first note that the collaborative CS system has an identical compression stage as the classical CS system (as shown in Fig.~\ref{figure:CS systems}). With respect to the signal recovery stage, in the collaborative CS system the estimators $\{\vx_i\}$ are independent to each other and can be obtained simultaneously by solving \eqref{eq:signal model multi dic} in parallel. Therefore, with the parallel computing strategy, the time needed for the collaborative CS system obtaining the estimates $\{\vx_i\}$ is similar to that for the classical CS system with a single dictionary of size $N\times L_0$, but is cheaper than that for the classical CS system with a single large dictionary of size $N\times KL_0$.

Once the estimates $\{\vx_i\}$ are computed, the rest procedure in the collaborative CS system is to implement $\mOmega \bar \vx$ as shown in \eqref{collaborative-estimator} with $\mOmega \in \Re^{N\times KN}, \bar \vx\in \Re^{KN\times 1}$. When $\mOmega$ is fully-parameterized,  the computation complexity for computing $\mOmega \bar \vx$ is $O(KN^2)$. When model \eqref{collaborative-estimatorII} is used, where $\mOmega_i$ is diagonal for all $i \in [K]$, the computation complexity is $O(KN)$.  It is interesting to note that model \eqref{collaborative-estimatorX} has the same computation complexity as \eqref{collaborative-estimatorII}  in general but  when $\omega_i=1/K, \forall~i$, the computation complexity is $O(N)$ - the simplest model.

We now discuss the complexity of {\bf Algorithm~\ref{alg:main}} for learning the CCS system. Recall that $\mPhi \in \Re^{M\times N}, \mPsi_i \in \Re^{N\times L_0}$ and $\mX_i\in \Re^{L\times J_0}$.  Our algorithm is an iterative method and in each iteration, there are three main steps. They are sensing matrix updating which requires computing $\widetilde \mG$ in Lemma~\ref{lem:update Phi} with $O(KNL_0J_0)$ operations and ED of $\widetilde \mG$ with $O(N^3)$ operations and the implementation complexity in total is of order $O(KNL_0 J_0)$, dictionary updating which requires $O(K (N+M)L_0^2J_0)$ operations, and sparse coding which requires $O(\kappa^2 K(L+N)J)$ operations. Thus, the total cost in running {\bf Algorithm~1} is of order $O(KNL_0^2J_0N_{iter})$ under the assumption of $\kappa^2\leq M\leq N\leq L_0\ll J_0$.

We finally note that {\bf Algorithm~\ref{alg:main}} can be naturally performed with the parallel computing strategy as all the dictionaries $\{\mPsi_i\}$ and the coefficients $\{\mS_i\}$ can be updated concurrently. We can also utilize similar strategies proposed in \cite{koppel2017d4l,raja2016cloud} for distributed dictionary learning to further reduce the computational time by using more computational resources. We defer this to our future work.

\subsection{Convergence analysis}\label{sec-convergence}
To simplify the notations in the analysis, define \[
\underline\mPsi := \begin{bmatrix}\mPsi_{1}&\cdots&\mPsi_{i}&\cdots&\mPsi_{K}\end{bmatrix}, \ \underline\mS := \begin{bmatrix}\mS_{1}&\cdots&\mS_{i}&\cdots&\mS_{K}\end{bmatrix}.\] Then, the objective function for the optimal CCS system design defined in \eqref{eq:CS based MLE}
can be denoted as
\e
\underline\varrho(\mPhi,\underline\mPsi,\underline\mS):= \sum_{i=1}^K \varrho(\mPhi,\mPsi_i,\mS_i,\mX_i),
\label{eq:underline varrho}\ee
where $\{\mX_i\}$ are dropped as they are fixed during the learning process of the CCS system. Furthermore, define a variable $\mW := (\mPhi,\underline\mPsi,\underline\mS)$. Clearly, running {\bf Algorithm~\ref{alg:main}}, we will have a sequence $\{\mW_k\}=\{(\mPhi_k,\underline\mPsi_k,\underline\mS_k)\}$ generated.

Roughly speaking,  convergence analysis of {\bf Algorithm}~\ref{alg:main} is to study the behaviors of the sequences $\{\mW_k\}$ and $\{\underline\varrho(\mPhi_k,\underline\mPsi_k,\underline\mS_k)\}$. In this subsection, we provide a rigorous convergence analysis for {\bf Algorithm~\ref{alg:main}} which show that both sequences are convergent.

The indicator function $I_{{\cal S}}(\vx)$ on set ${\cal S}$ is defined as
\e I_{{\cal S}}(\vx):= \left\{ \begin{array}{lcr} 0,& \text{if}~~~ \vx\in{\cal S},\\
+\infty,&\text{otherwise}.\end{array}\right. \label{indicator function}\ee
One of advantages in introducing such a function is to simplify the notations as with such a function, one can write a constrained minimization as an unconstrained one.

First of all, we present an interesting result on the solution obtained using the OMP algorithm (see {\bf Algorithm~\ref{alg:OMP}} in Appendix~\ref{appendix AB}) for the classic sparse coding \eqref{eq:recov-1}. With the help of indicator function, \eqref{eq:recov-1} is converted into the following unconstrained minimization.
\e
\min_{\vs}g(\vs) := \|\mA\vs - \vy\|^2_2 + I_{\calS_{\kappa}}(\vs),
\label{eq:sparse recovery}\ee
where  ${\cal S}_\kappa := \left\{\vs\in\Re^{L_0}: \|\vs\|_0 \leq  \kappa \right\}$.\\

\begin{lemma}\label{lem:OMP stationary point}  Let $\widehat \vs$ be an estimate of solution to \eqref{eq:sparse recovery}, obtained using the OMP algorithm. Then,  such $\widehat \vs$ is a stationary point of $g(\vs)$ defined in \eqref{eq:sparse recovery}.\\
\end{lemma}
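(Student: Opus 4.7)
The plan is to read stationarity for $g$ in the limiting (Mordukhovich) subdifferential sense and match it directly against the orthogonality identity that OMP always produces.

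\textbf{First step: stationarity condition.} Since $g(\vs) = f(\vs) + I_{\calS_\kappa}(\vs)$ with $f(\vs):=\|\mA\vs-\vy\|_2^2$ of class $C^1$ and $\calS_\kappa$ closed, the standard calculus of limiting subdifferentials gives $\partial g(\vs) = \nabla f(\vs) + N_{\calS_\kappa}(\vs)$, where $N_{\calS_\kappa}$ is the limiting normal cone. Thus $\widehat\vs$ is a stationary point iff $-\nabla f(\widehat\vs)\in N_{\calS_\kappa}(\widehat\vs)$, i.e.\ $\mA^{\cal T}(\mA\widehat\vs-\vy)\in -\tfrac12 N_{\calS_\kappa}(\widehat\vs)$.

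\textbf{Second step: describe the normal cone.} The set $\calS_\kappa$ is a finite union of coordinate subspaces, and for such unions the limiting normal cone admits the explicit description
\[
N_{\calS_\kappa}(\widehat\vs) \;=\; \bigcup_{\substack{T\supseteq \mathrm{supp}(\widehat\vs)\\ |T|=\kappa}} \{\vw\in\Re^{L_0}:\; w_i = 0 \text{ for all } i\in T\},
\]
which I would verify by first computing the Fr\'echet normal cone at a point lying on a single $\kappa$-subspace and then taking outer limits over sequences approaching $\widehat\vs$ along the different subspaces that contain it. Hence $-\nabla f(\widehat\vs)\in N_{\calS_\kappa}(\widehat\vs)$ is equivalent to the existence of an index set $T\supseteq \mathrm{supp}(\widehat\vs)$ with $|T|\le\kappa$ such that $\bigl[\mA^{\cal T}(\mA\widehat\vs-\vy)\bigr]_i = 0$ for every $i\in T$.

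\textbf{Third step: exploit the OMP least-squares step.} Standard OMP maintains a nested family of supports $T^{(0)}\subset T^{(1)}\subset\cdots$ formed by appending one new atom at each iteration, and at every iteration it replaces the current iterate by the least-squares solution over the current support. That least-squares step delivers the normal equation $\mA(:,T^{(k)})^{\cal T}\bigl(\mA\widehat\vs^{(k)}-\vy\bigr)=\vzero$ and the containment $\mathrm{supp}(\widehat\vs^{(k)})\subseteq T^{(k)}$. Running for at most $\kappa$ iterations, the returned $\widehat\vs$ therefore satisfies $\mathrm{supp}(\widehat\vs)\subseteq T^{(\kappa)}$, $|T^{(\kappa)}|\le\kappa$, and $\bigl[\mA^{\cal T}(\mA\widehat\vs-\vy)\bigr]_i=0$ for $i\in T^{(\kappa)}$.

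\textbf{Fourth step: conclude.} Plugging $T:=T^{(\kappa)}$ into the normal cone description yields $-\nabla f(\widehat\vs)\in N_{\calS_\kappa}(\widehat\vs)$, so $\vzero\in\partial g(\widehat\vs)$ and $\widehat\vs$ is stationary. The only variant worth flagging is early termination: if at some step $k<\kappa$ one has $\mA^{\cal T}\vr^{(k)}=\vzero$, then $\nabla f(\widehat\vs^{(k)})=\vzero$ and stationarity is immediate.

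The main obstacle is the second step, namely producing (or citing) the correct formula for the limiting normal cone to a non-convex union of subspaces; all remaining arguments are a direct rewriting of the OMP normal equations in the language of that cone. A minor secondary point is making sure the argument is insensitive to whether OMP adds exactly $\kappa$ atoms or terminates with a strictly smaller support, which is handled by the fact that $|T^{(\kappa)}|\le\kappa$ still fits inside the required outer union.
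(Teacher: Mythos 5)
Your proof is correct in substance but follows a genuinely different route from the paper's. The paper works directly with the Fr\'echet subdifferential inequality from its Definition of stationarity: it argues that any $\vs$ sufficiently close to $\widehat\vs$ and lying in $\calS_\kappa$ must have support containing $\Lambda_\kappa$, whence either $g(\vs)=\infty$ (support strictly larger than $\kappa$) or $g(\vs)\ge g(\widehat\vs)$ by the least-squares optimality of the final OMP step, so the required $\liminf$ with $\vz=\vzero$ is nonnegative; the early-termination case $\vh_{\kappa-1}=\vzero$ is handled separately by noting $\widehat\vs$ is then a global minimizer. You instead decompose $\partial g=\nabla f+N_{\calS_\kappa}$ and match the OMP normal equations against an explicit formula for the normal cone of the sparsity set. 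Your formula is the \emph{limiting} (Mordukhovich) normal cone, whereas the paper's Definition~C.1 of stationarity uses the Fr\'echet subdifferential; at a point with $|\mathrm{supp}(\widehat\vs)|<\kappa$ the Fr\'echet normal cone of $\calS_\kappa$ collapses to $\{\vzero\}$, so your union-over-$T$ description would there certify only the weaker limiting stationarity. This does not break your argument, because the only way OMP can return a support of size less than $\kappa$ is the case $\mA^{\cal T}\vr=\vzero$ that you flag, where $\nabla f(\widehat\vs)=\vzero$ and Fr\'echet stationarity is immediate (your closing remark that a smaller $T^{(\kappa)}$ "still fits inside the outer union" is the one glib spot --- extending $T^{(\kappa)}$ to size $\kappa$ requires the gradient to vanish on the added indices, which holds precisely because of that orthogonality, not automatically). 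The paper avoids this issue by asserting that the selected coefficients are never zero so the support has exactly $\kappa$ elements, in which case the two cones coincide. On balance your argument is more modular and uses the notion of subdifferential that the downstream KL convergence machinery actually needs, at the price of having to establish the normal-cone formula for a nonconvex union of coordinate subspaces; the paper's is elementary and self-contained but leans on an unproved claim about OMP's least-squares coefficients.
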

The proof is given in Appendix~\ref{appendix AB}.

This result is believed of importance for convergence analysis of OMP-based algorithms. In fact, it serves as a crucial tool for proving Theorem~\ref{thm:convergence} to be presented below.

Similarly, utilizing the indicator function, we define
\e
f(\mW):=  \underline\varrho(\mPhi,\underline\mPsi,\underline\mS) + I_{{\cal O}_{M,N}}(\mPhi)+ I_{{\cal U}_{N}^{L_0K}}(\underline\mPsi) + I_{\calS_{\kappa,b}^{KJ_0}}(\underline\mS),
\label{eq:main no regularizer}\ee
where $I_{{\cal S}}(\vx)=0$ if $\vx\in{\cal S}$ and $\infty$ otherwise for ${\cal S}= {\cal O}_{M,N},~ {\cal U}_{N}^{L_0K}$, and $\calS_{\kappa,b}^{KJ_0}$ defined in (\ref{eq:calS kappa}). Therefore, the optimal CCS system design \eqref{eq:CS based MLE}, which is a constrained minimization, is equivalent to the following unconstrained problem
\[
\min_{\mW} f(\mW).
\]

Before presenting our results, let us give the following remark. Let $\mW_k=(\mPhi_k,\underline\mPsi_{k},\underline\mS_{k})$ be output generated by our proposed {\bf Algorithm}~\ref{alg:main} at the $k$th iteration.\\

\noindent{\bf Remark~5.1}: We note that as indicated by \eqref{optimal-T-1x}, $\mU\in\calO_N$ can be chosen arbitrarily\footnote{It can also be observed from \eqref{general-ZhuZHx} that any $\mU\in\calO_M$ yields the same objective value.} when updating $\mPhi_k$. We now exploit this degree of freedom to choose a $\mU$ such that $\mPhi_k$ is closest to $\mPhi_{k-1}$. In particular, instead of choosing an arbitrary $\mU$, we let
\e \mPhi_k   = \mU_k\MAT{cc} \mId_M& {\bf 0}\mat \mV_{\tilde \mG}^{\cal T} \label{optimal-T-1x 2},\ee
where $\mV_{\tilde \mG}^{\cal T}$ is the same as in \eqref{optimal-T-1x} and $\mU_k$ is the minimizer of
\[
\min_{\mU\in\calO_M}\left\|\mU\MAT{cc} \mId_M& {\bf 0}\mat \mV_{\tilde \mG}^{\cal T} - \mPhi_{k-1}\right\|_F.
\]
Let $\MAT{cc}\mId_M& {\bf 0}\mat \mV_{\tilde \mG}^{\cal T}\mPhi_{k-1}^\T = \mP\mSigma\mQ^\T$ be an SVD of $\MAT{cc}\mId_M& {\bf 0}\mat \mV_{\tilde \mG}^{\cal T}\mPhi_{k-1}^\T$. Then, $\mU_k = \mQ\mP^\T$. In the sequel, all $\mPhi_k$ in the sequence $\{\mW_k\}$ by {\bf Algorithm}~\ref{alg:main} are assumed to be given by (\ref{optimal-T-1x 2}).\\

Now, we can present the first set of our results in convergence analysis.

\begin{theorem}\label{thm:convergence}(Sub-sequence convergence of {\bf Algorithm~\ref{alg:main}}) Let $\{\mW_k\}$ be the sequence generated by {\bf Algorithm~\ref{alg:main}}. Suppose that the four parameters in  {\bf Algorithm~\ref{alg:main}} are chosen as: $\nu_1>0,\nu_2>0, \nu_3>0$, and $\nu_4< \frac{1}{L_{cs}}$ with $L_{cs}$ given in \eqref{eq:Lipschitz for S}. Then,  the sequence $\{\mW_k\}$ possesses the following properties:
\begin{enumerate}[(i)]
\item There exists a constant $c_1>0$ such that~\footnote{For a variable $\mW$ defined above, $\|\mW\|_F^2 := \|\mPhi\|_F^2+\|\underline\mPsi\|_F^2+\|\underline\mS\|_F^2$.}
\e
f(\mW_{k-1}) - f(\mW_k)\geq c_1 \|\mW_{k-1} - \mW_k\|_F^2,
\label{eq:sufficient decrease}\ee
 and $\{\mW_k\}$ is regular, i.e.,
\e
\lim_{k\rightarrow \infty} \|\mW_{k-1} - \mW_k\|_F = 0.
\label{eq:regular}\ee
\item For any convergent subsequence $\{\mW_{k_m}\}$, its limit point $\mW^\star$ lies in ${\cal O}_{M,N}\times{\cal U}_N^{L_0K}  \times \calS_{\kappa,b}^{KJ_0}$, is a stationary point of $f(\mW)$, and satisfies
\e
f(\mW^\star) = \lim_{k_m\rightarrow \infty} f(\mW_{k_m}) = \inf_{k}f(\mW_k).
\label{eq:limit f}\ee
\end{enumerate}

\end{theorem}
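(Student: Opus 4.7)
The plan is to establish three things in sequence: (i) a per-iteration sufficient-decrease inequality of the form \eqref{eq:sufficient decrease}, (ii) the regularity \eqref{eq:regular} as a direct consequence, and (iii) the subsequential characterization of limit points via compactness of the constraint sets together with a limiting first-order optimality analysis.

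For the sufficient decrease I would analyze the three block updates separately, then sum. The sparse-coding update \eqref{eq:update S 1} splits into two cases: if the OMP branch is accepted, the gatekeeping inequality $\zeta(\widetilde\mS_{i,k},\mS_{i,k-1})\geq \nu_3\|\mS_{i,k-1}-\widetilde\mS_{i,k}\|_F^2$ directly yields a decrease proportional to $\|\mS_{i,k}-\mS_{i,k-1}\|_F^2$; otherwise a projected-gradient step is taken and the standard descent lemma combined with the Lipschitz bound \eqref{eq:Lipschitz for S} gives decrease at least $\tfrac{1-\nu_4 L_{cs}}{2\nu_4}\|\mS_{i,k}-\mS_{i,k-1}\|_F^2$, strictly positive because $\nu_4<1/L_{cs}$. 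The dictionary update, executed column-by-column in \textbf{Algorithm}~\ref{alg:update Psi}, is a sequence of exact minimizations of proximally regularized quadratics on $\calU_N$; optimality at each column contributes decrease at least $\nu_2\|\vpsi_\ell^{\text{new}}-\vpsi_\ell^{\text{old}}\|_2^2$, and summing over $\ell$ and $i$ yields $\nu_2\|\underline\mPsi_k-\underline\mPsi_{k-1}\|_F^2$. The sensing-matrix update is the delicate one: since $\mPhi_k$ in \eqref{general-ZhuZHx} is an exact minimizer over $\calO_{M,N}$, the proximal term pays $\nu_1\|\mPhi_k^\T\mPhi_k-\mPhi_{k-1}^\T\mPhi_{k-1}\|_F^2$, but this is on the Gram matrix rather than on $\mPhi$ itself. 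Here Remark~5.1 is indispensable: by choosing $\mU_k$ through the Procrustes alignment in \eqref{optimal-T-1x 2}, $\mPhi_k$ is forced to be the representative of its orbit closest to $\mPhi_{k-1}$, and a compactness argument on $\calO_{M,N}$ supplies a constant $C>0$ with $\|\mPhi_k^\T\mPhi_k-\mPhi_{k-1}^\T\mPhi_{k-1}\|_F^2\geq C\,\|\mPhi_k-\mPhi_{k-1}\|_F^2$. Combining the three block bounds produces \eqref{eq:sufficient decrease} with $c_1$ equal to the minimum of the three block constants.

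The regularity \eqref{eq:regular} is immediate: $f(\mW_k)\geq 0$ because $\underline\varrho$ is a sum of squared Frobenius norms and the indicator functions vanish on the iterates, so the monotone sequence $\{f(\mW_k)\}$ converges; telescoping \eqref{eq:sufficient decrease} forces $\sum_k\|\mW_k-\mW_{k-1}\|_F^2<\infty$ and hence $\|\mW_k-\mW_{k-1}\|_F\to 0$. For part (ii), compactness of $\calO_{M,N}$, $\calU_N^{L_0K}$, and $\calS_{\kappa,b}^{KJ_0}$ (all closed and bounded in finite dimension, with closedness of the sparse set following from upper semicontinuity of $\|\cdot\|_0$ under pointwise limits and the $\ell_\infty$ box) guarantees that any subsequence has a further convergent subsequence with limit $\mW^\star$ inside the product constraint set; on this set $\underline\varrho$ is continuous and the indicators vanish, so $f(\mW^\star)=\lim_{k_m}f(\mW_{k_m})=\inf_k f(\mW_k)$ by the monotonicity already established. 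To get stationarity, I would write the limiting-subdifferential optimality condition for each subproblem: the $\mPhi$-step yields $0\in\nabla_{\mPhi}\underline\varrho(\mW_k)+4\nu_1\mPhi_k(\mPhi_k^\T\mPhi_k-\mPhi_{k-1}^\T\mPhi_{k-1})+N_{\calO_{M,N}}(\mPhi_k)$; each column-wise $\mPsi_i$ minimizer produces an analogous inclusion with the proximal term $2\nu_2(\mPsi_{i,k}-\mPsi_{i,k-1})$ and the normal cone to the unit sphere; and the sparse-coding step is handled either by Lemma~\ref{lem:OMP stationary point}, which guarantees the OMP output is already a stationary point of the $\mS$-subproblem, or by the standard projected-gradient optimality condition. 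In both branches the proximal perturbations vanish along the subsequence thanks to \eqref{eq:regular}, and outer semicontinuity of the limiting subdifferential then delivers $0\in\partial f(\mW^\star)$.

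The main obstacle I anticipate is the sensing-matrix step. The proximal penalty is placed on $\mPhi^\T\mPhi$ rather than $\mPhi$, and $\calO_{M,N}$ is a nonconvex Stiefel-type manifold, so neither the classical descent lemma nor the standard proximal-gradient analysis applies directly. Establishing the inequality $\|\mPhi_k^\T\mPhi_k-\mPhi_{k-1}^\T\mPhi_{k-1}\|_F^2\geq C\|\mPhi_k-\mPhi_{k-1}\|_F^2$ with a uniform $C$ is a geometric statement about the action of $\calO_M$ on $\calO_{M,N}$ by left multiplication: Procrustes alignment minimizes the Frobenius distance within each orbit, and the residual must then be controlled by the Gram-matrix distance, which parametrizes the orbit space. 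A secondary subtlety is that the OMP branch of the $\mS_i$-update is not automatically descent; the gatekeeper in \eqref{eq:update S 1} makes the decrease explicit, but Lemma~\ref{lem:OMP stationary point} is what converts the accepted OMP output into the subgradient inclusion needed for limiting stationarity.
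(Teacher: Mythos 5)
Your architecture coincides with the paper's proof: the same three-block sufficient-decrease decomposition with the same constants (your $\tfrac{1-\nu_4 L_{cs}}{2\nu_4}$ equals the paper's $\tfrac12(\tfrac{1}{\nu_4}-L_{cs})$), telescoping for regularity, Bolzano--Weierstrass plus vanishing indicators for the characterization of limit points, and the same stationarity argument via residual subgradients bounded by a constant times $\|\mW_k-\mW_{k-1}\|_F$, Lemma~\ref{lem:OMP stationary point} for the accepted-OMP branch, and closedness of $\partial f$.

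The one step where your sketch would not go through as written is precisely the one you flag as delicate: the inequality $\|\mPhi_k^\T\mPhi_k-\mPhi_{k-1}^\T\mPhi_{k-1}\|_F^2\geq C\|\mPhi_k-\mPhi_{k-1}\|_F^2$. A compactness argument cannot supply a uniform $C>0$ here, because the relevant set of pairs excludes the diagonal $\mPhi_k=\mPhi_{k-1}$ and is therefore not compact; both numerator and denominator of the ratio vanish as $\mPhi_k\to\mPhi_{k-1}$, so the infimum of the ratio near the diagonal must actually be computed rather than inferred from continuity. The correct (and short) argument, which the paper outsources to a citation of \cite[Lemma 3.4]{li2018}, is purely algebraic: after the Procrustes alignment of Remark~5.1 the matrix $\mPhi_k\mPhi_{k-1}^\T$ is symmetric positive semidefinite with singular values $\sigma_1,\ldots,\sigma_M\in[0,1]$ (since $\mPhi_k\mPhi_k^\T=\mPhi_{k-1}\mPhi_{k-1}^\T=\mId_M$), whence $\|\mPhi_k-\mPhi_{k-1}\|_F^2=2M-2\sum_i\sigma_i$ while $\|\mPhi_k^\T\mPhi_k-\mPhi_{k-1}^\T\mPhi_{k-1}\|_F^2=2M-2\sum_i\sigma_i^2$, and $\sum_i(1-\sigma_i^2)\geq\sum_i(1-\sigma_i)$ because each $\sigma_i\leq 1$; this gives the inequality with $C=1$. (The alignment is essential: without it one could have $\mPhi_k=-\mPhi_{k-1}$, making the left-hand side zero.) Everything else in your proposal matches the paper's proof; a minor slip is that closedness of $\calS_{\kappa,b}$ follows from \emph{lower} semicontinuity of $\|\cdot\|_0$, not upper.
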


The proof of Theorem \ref{thm:convergence} is given in Appendix~\ref{appendix B}.\\

As $f(\mW)\geq 0$, the first property of Theorem \ref{thm:convergence} implies that $\{f(\mW_k)\}$ is convergent. 
In a nutshell, the sub-sequence convergence property in Theorem \ref{thm:convergence} guarantees that the sequence generated by {\bf Algorithm~\ref{alg:main}} has at least one convergent subsequence whose limit point is a stationary point of $f(\mW)$. Is the generated sequence $\{\mW_k\}$ itself convergent?\\

\begin{theorem}\label{thm:sequence convergence}(Sequence convergence of {\bf Algorithm~\ref{alg:main}}) With the same setup as in Theorem \ref{thm:convergence}, the generated sequence $\{\mW_k\}$  converges to a stationary point of $f(\mW)$ defined in \eqref{eq:main no regularizer}.\\
\end{theorem}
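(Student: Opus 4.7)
The plan is to upgrade the sub-sequence convergence from Theorem~\ref{thm:convergence} to full sequence convergence using the Kurdyka-\L{}ojasiewicz (KL) framework of Attouch-Bolte-Svaiter and Bolte-Sabach-Teboulle. For a descent method on a lower semicontinuous $f$, the standard recipe requires three ingredients: (a) a sufficient decrease property $f(\mW_{k-1}) - f(\mW_k) \geq c_1\|\mW_{k-1}-\mW_k\|_F^2$, (b) a relative-error bound $\mathrm{dist}(0,\partial f(\mW_k)) \leq c_2\|\mW_k - \mW_{k-1}\|_F$ in terms of the limiting subdifferential, and (c) the KL property for $f$ at each cluster point. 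Ingredient (a) is already established as \eqref{eq:sufficient decrease}, so the bulk of the work is to verify (b) and (c). Once they are in hand, the standard KL machinery gives finite length of the iterates, $\sum_k \|\mW_k-\mW_{k-1}\|_F < \infty$, so $\{\mW_k\}$ is Cauchy and hence convergent; combined with Theorem~\ref{thm:convergence}(ii), the unique limit must be a stationary point of $f$.

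For ingredient (b) I would construct an explicit element of $\partial f(\mW_k)$ by reading off the first-order optimality conditions of each block subproblem. For the $\mPhi$-update \eqref{general-ZhuZHx}, the proximal regularizer $\nu_1\|\mPhi^\T\mPhi - \mPhi_{k-1}^\T\mPhi_{k-1}\|_F^2$ produces an optimality condition involving $\mPhi_{k-1}^\T\mPhi_{k-1}$; replacing it by $\mPhi_k^\T\mPhi_k$ and using Lipschitz continuity of the appropriate gradients on the bounded feasible set ${\cal O}_{M,N}\times {\cal U}_N^{L_0K}\times \calS_{\kappa,b}^{KJ_0}$ (together with the additional step made in Remark~5.1 that selects $\mU_k$ to minimize $\|\mPhi_k-\mPhi_{k-1}\|_F$) yields a subgradient element of norm $O(\|\mPhi_k-\mPhi_{k-1}\|_F)$. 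The $\underline\mPsi$-update \eqref{eq:update Psi prob} is handled analogously via its proximal term. The most delicate block is $\underline\mS$: in the projected-gradient branch of \eqref{eq:update S 1}, standard proximal-gradient analysis provides an element of $\partial I_{\calS_{\kappa,b}^{KJ_0}}(\mS_{i,k})+ \nabla_{\mS_i}\varrho(\mPhi_k,\mPsi_{i,k},\mS_{i,k},\mX_i)$ of norm $O(\|\mS_{i,k}-\mS_{i,k-1}\|_F)$; in the OMP branch, I would invoke Lemma~\ref{lem:OMP stationary point} to conclude that $\mS_{i,k}$ is already a critical point of the inner sparse recovery objective, so the $\underline\mS$-component of the subgradient is zero (the $\|\mS_i\|_\infty\leq b$ constraint being inactive in this branch by the guard in \eqref{eq:update S 1}).

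For ingredient (c), $f$ in \eqref{eq:main no regularizer} is the sum of a polynomial in $(\mPhi,\underline\mPsi,\underline\mS)$ and the indicator functions of the Stiefel manifold $\calO_{M,N}$, the product of unit spheres $\calU_N^{L_0K}$, and the sparse bounded set $\calS_{\kappa,b}^{KJ_0}$, which is a finite union of polytopes obtained by intersecting $\ell_\infty$-balls with coordinate subspaces. Each of these sets is semialgebraic, so $f$ is semialgebraic and consequently enjoys the KL property at every point in its domain. Boundedness of $\{\mW_k\}$ is automatic from the constraints (this is the reason the $\|\mS_i\|_\infty\leq b$ bound was imposed in \eqref{eq:CS based MLE}), so all cluster points lie in the compact feasible set.

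The main obstacle is ingredient (b) in the OMP branch, because OMP is a combinatorial procedure rather than a gradient step and does not obviously produce a subgradient controlled by $\|\mW_k - \mW_{k-1}\|_F$. Here Lemma~\ref{lem:OMP stationary point} carries the weight: it certifies that the OMP iterate is a stationary point of the inner sparse coding problem, which reduces the $\underline\mS$-contribution to the subgradient to zero rather than to something that must be bounded. A secondary subtlety is that the support pattern of $\underline\mS$ may in principle jump between iterations, but the regularity property \eqref{eq:regular} together with the discreteness of supports forces the supports to stabilize along any convergent subsequence, so the limiting subdifferential of the indicator $I_{\calS_{\kappa,b}^{KJ_0}}$ is well-defined at the limit. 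With (a)-(c) and boundedness verified, the Bolte-Sabach-Teboulle convergence theorem applies and delivers convergence of the full sequence $\{\mW_k\}$ to a stationary point of $f$, as claimed.
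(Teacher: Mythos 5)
Your proposal is correct and follows essentially the same route as the paper: the relative-error subgradient bound (your ingredient (b)) is exactly the construction of the elements $\mD_k,\mD_k'$ with $\|\mD_k\|_F,\|\mD_k'\|_F\leq c_2\|\mW_k-\mW_{k-1}\|_F$ carried out in the paper's proof of Theorem~\ref{thm:convergence}(ii), including the use of Lemma~\ref{lem:OMP stationary point} to make the $\underline\mS$-component vanish in the OMP branch, and the KL property plus sufficient decrease then yield finite length and Cauchyness. The only cosmetic difference is that you invoke the Bolte--Sabach--Teboulle theorem as a black box (justifying KL via semialgebraicity) while the paper writes out the concavity/telescoping argument directly (justifying KL via analyticity of $\underline\varrho$).
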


The proof of Theorem \ref{thm:sequence convergence} is given in Appendix~\ref{sec:prf thm sequence convergence}.\\

The main idea for proving Theorem \ref{thm:sequence convergence} is to utilize the geometrical properties of the objective function $f(\mW)$ around its critical points, i.e., the so-called Kurdyka-\L{ojasiewicz} (KL) inequality~\cite{bolte2007lojasiewicz} (also see Appendix~\ref{sec:prf thm sequence convergence}), which has been widely used to show the convergence of the iterates sequence generated by the proximal alternating algorithms \cite{attouch2010proximal,bao2016dictionary,bolte2007lojasiewicz,bolte2014proximal,zhu2018convergence}.\\

\noindent{\bf Remark~5.2}:
\begin{itemize}\item
 It has been an active research topic to develop  alternating minimization-based algorithms that have a guaranteed convergence (even to a stationary point rather than the global minimizer)~\cite{bolte2014proximal}. A sequence convergence guarantee was provided in~\cite{bao2016dictionary} for a dictionary learning algorithm that  use a similar strategy as the 2nd expression in \eqref{eq:update S 1} but does not utilize the OMP algorithm for sparse coding. We note that the OMP algorithm is widely utilized in dictionary learning algorithms~\cite{aharon2006KSVD,engan1999MOD}. Though one can just utilize the 2nd expression in \eqref{eq:update S 1} for sparse coding and the corresponding algorithm is also convergent, we observe from experiments that the algorithm utilizing OMP as in \eqref{eq:update S 1} converges to a much better solution. Aside from this difference, the method for updating the dictionary in {\bf Algorithm~\ref{alg:main}} is also different than the one in~\cite{bao2016dictionary} since the corresponding problem \eqref{eq:update Psi prob} in {\bf Algorithm~\ref{alg:main}} is more complicated. Finally, as we mentioned before, the proximal term $\|\mPhi^\T\mPhi - \mPhi_{k-1}^\T\mPhi_{k-1}\|_F^2$ utilized in \eqref{general-ZhuZHx} is slightly different than the classical proximal method (utilized in \cite{attouch2010proximal,attouch2013convergence,bao2016dictionary,bolte2014proximal}) which suggests to use $\|\mPhi - \mPhi_{k-1}\|_F^2$.
 In \cite{sun2017complete}, the authors showed that the global minimum of complete (rather than overcomplete) dictionary learning is obtainable when the training data is populated according to certain probabilistic distribution;
 \item Our strategy for simultaneously learning the sensing matrix and dictionary for a CS system differs from the one in \cite{duarte2009learning} in that we provide a unified and identical measure for jointly optimizing the sensing matrix and dictionary.  To the best of our knowledge, our {\bf Algorithm~\ref{alg:main}} is the first work that deals with simultaneous learning of the sensing matrix and the dictionary with guaranteed convergence;
     \item In the proof of Theorem~\ref{thm:convergence}, we establish an interesting result presented in  Lemma~\ref{lem:OMP stationary point} (in Appendix~\ref{appendix B}), which states that without any condition imposed on  $\mB(\mPhi_k)\mPsi_{i,k}$,  any solution obtained using the OMP algorithm for  \eqref{eq:OMP for Sx} is a stationary point of the cost function. We believe this will also be very useful to show the convergence of other algorithms for learning the sensing matrix and the dictionary~\cite{aharon2006KSVD,duarte2009learning,li2018joint}. Finally, we point out that the formulated optimal CCS system design is highly nonconvex and the proposed algorithm can not  guarantee that the convergent point is necessarily the global
solution  of problem. It remains an active research area to develop alternating minimization-based numerical algorithms that are convergent and can yield a solution close to a global one. \end{itemize}



\section{Experiments}\label{sec:experiments}

In this section, we will present a series of experiments to examine the performance of  the MLE-based collaborative estimator, the proposed CCS scheme and approach to designing optimal CCS systems.

\subsection{Demonstration of the MLE-based collaborative estimator}\label{sec:exp for MLE}
In this section, we demonstrate the performance of the MLE-based collaborative estimators derived in Section \ref{sec:MLE coll estimator}. We generate a set of $J=1000$ signal vectors $\{\vx_j\}_{j=1}^J$ where each $\vx_j\in\Re^{N}$ (with $N = 20$) is obtained from a Gaussian distribution of i.i.d. zero-mean and unit variance. For each $\vx_j$, we generate $K = 5$ estimators as $\vy_{j,i} = \vx_j + \ve_{j,i}, ~i=1, \cdots, K$, where $\bar \ve_j = \begin{bmatrix} \ve_{j,1}^\T & \cdots \ve_{j,K}^\T \end{bmatrix}^\T$ obeys a normal distribution $\mathcal N(\vzero,  \mGamma)$. Here $\mGamma\in\Re^{NK\times NK}$ is a positive-definite (PD) matrix generated as $\mGamma = \sigma^2 \mGamma_1\mGamma_1^\T$, where each element in $\mGamma_1\in\Re^{NK\times NK}$ is obtained from an i.i.d. normal distribution. Fig.~\ref{Figure:covariance}(a) displays the correlation matrix $\overline\mGamma = \diag(\gamma_{1,1}^{-1/2},\ldots,\gamma_{NK,NK}^{-1/2})\mGamma \diag(\gamma_{1,1}^{-1/2},\ldots,\gamma_{NK,NK}^{-1/2})$ (i.e., the normalized covariance matrix, which is more appropriate to describe correlations between random variables), where $\gamma_{l,l}^{-1/2}$ is the $l$-th diagonal entry of $\mGamma$. As shown in Fig.~\ref{Figure:covariance}(a),  the noises $\{\ve_{j,i}\}^K_{i=1}$ are not statistically independent since the covariance matrix $\mGamma$ is not a block diagonal matrix.

We use Ind$_i$ to denote the performance of the $i$-th estimator (i.e., the averaged energy of noise contained in $\vy_{j,i}$ for all $j\in[J]$). For convenience, we denote by Ind the best performance of the $K$ estimators Ind$_i$ for $i\in[K]$. We now test  three MLE-based collaborative estimators proposed in Section \ref{sec:MLE coll estimator}. In particular, we use MLE$_1$ to denote the one in \eqref{collaborative-estimator}, where $\mOmega_i$ is set with $\mGamma$ using (\ref{LBJ-1}) for all $i\in [K]$, yielding the optimal CE,  MLE$_2$ to denote the one, where $\mOmega_i$ is obtained using (\ref{Omega-independent}) with $\mGamma_i = \mGamma(i,i)$ for all ~$i \in [K]$, and MLE$_3$ to denote the one that simply averaging the $K$ estimator, i.e., $\widehat \vx_j = \frac{1}{K}\sum_{i=1}^K \vy_{j,i}$ which is the one
in \eqref{collaborative-estimatorX} by setting all the weights to $\frac{1}{K}$. Clearly, MLE$_2$ and MLE$_3$ are not optimal in this case. For convenience, we utilize MLE$_{1_{i}}$, MLE$_{2_{i}}$ and MLE$_{3_{i}}$ to denote the corresponding MLE-based collaborative estimators that fuse the first $i$ estimates.

We use the relative mean squared error (RMSE), denoted as $\sigma_{rmse}$, to measure the noise level and the performance of the MLE-based collaborative estimators:
\e
\sigma_{rmse}:=  \frac{1}{J}\sum_{j=1}^J \frac{\|\vx_j - \widehat \vx_j\|_2}{\|\vx_j\|_2}.
\ee

Fig.~\ref{Figure:covariance}(b) shows the RMSE of the MLE-based collaborative estimators when $\sigma$ is varied from 0.001 to 0.1. Fig.~\ref{Figure:exp_vary_k} illustrates the RMSE of the three MLE-based collaborative estimators MLE$_{1_{i}}$, MLE$_{2_{i}}$ and MLE$_{3_{i}}$ and the individual estimators Ind$_i$ when $\sigma = 0.1$.\\

{\noindent \bf Remark 6.1:}
\begin{itemize}
\item We observe from Fig.~\ref{Figure:covariance}(b) and Fig.~\ref{Figure:exp_vary_k} that the MSE$_1$, as expected,  has the best performance among the three MLE-based collaborative estimators, which coincides with Lemma~\ref{lem:collaborative-estimator};
\item  Fig.~\ref{Figure:exp_vary_k} shows that though not optimal, MLE$_{2_{i}}$ and MLE$_{3_{i}}$ have a much better performance than any of the individual estimators. It is also observed that the performance of all the three collaborative estimators is enhanced with the number of estimators increasing;
    \item It is of interest to see that MLE$_{2_{i}}$ and MLE$_{3_{i}}$ have very similar performance. This suggests that, when the prior information of $\mGamma$ is not available, the simple averaging strategy (i.e., take the mean of all the possible individual estimates) is expected to be a good choice. We will demonstrate the performance of this strategy for collaborative CS systems in the following sections with real images.
\end{itemize}

\begin{figure}[htb!]
\begin{minipage}{0.48\linewidth}
\centerline{
\includegraphics[width=2.8in]{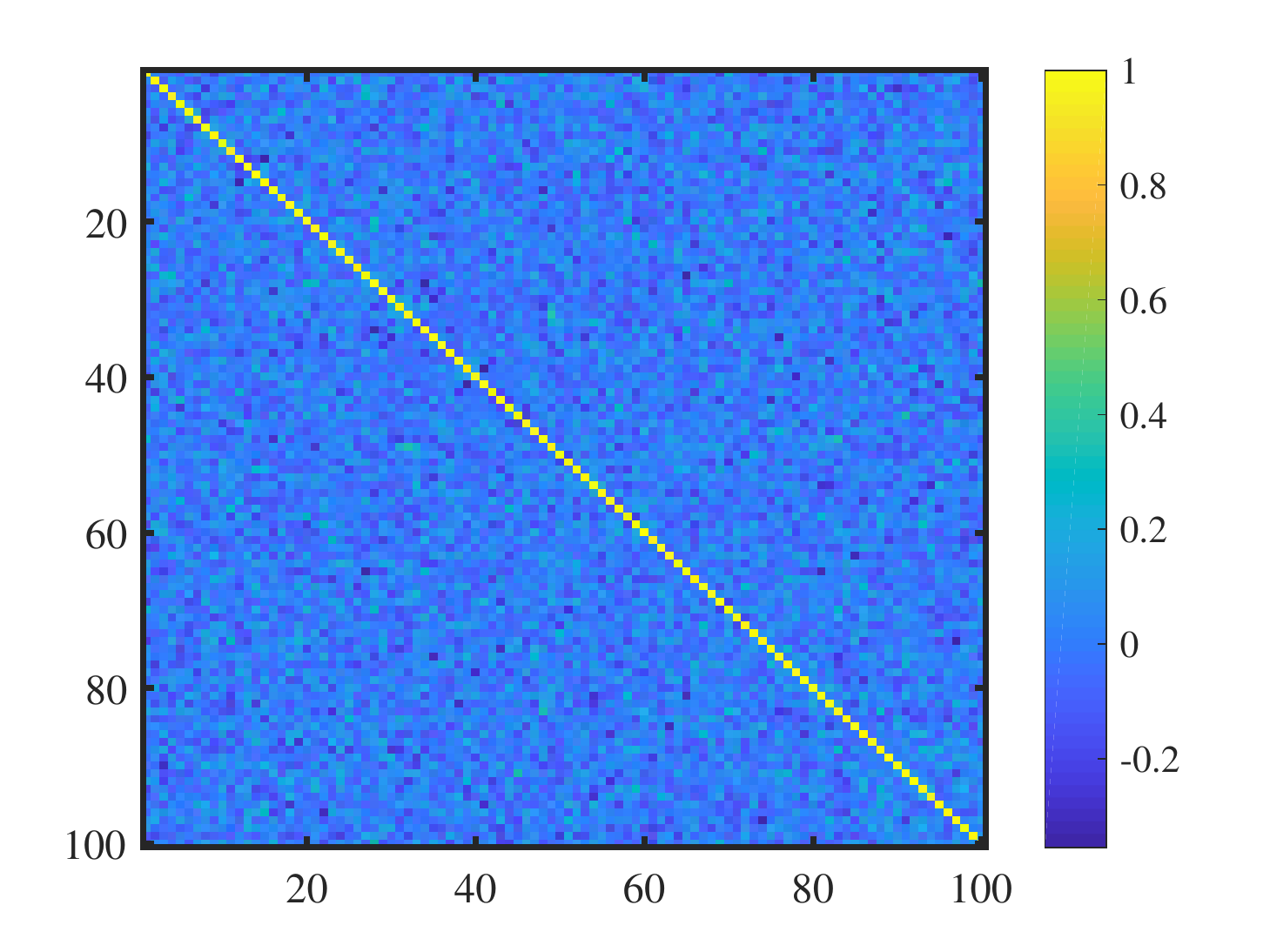}}
\centerline{(a)}
\end{minipage}
\hfill
\begin{minipage}{0.48\linewidth}
\centerline{
\includegraphics[width=2.8in]{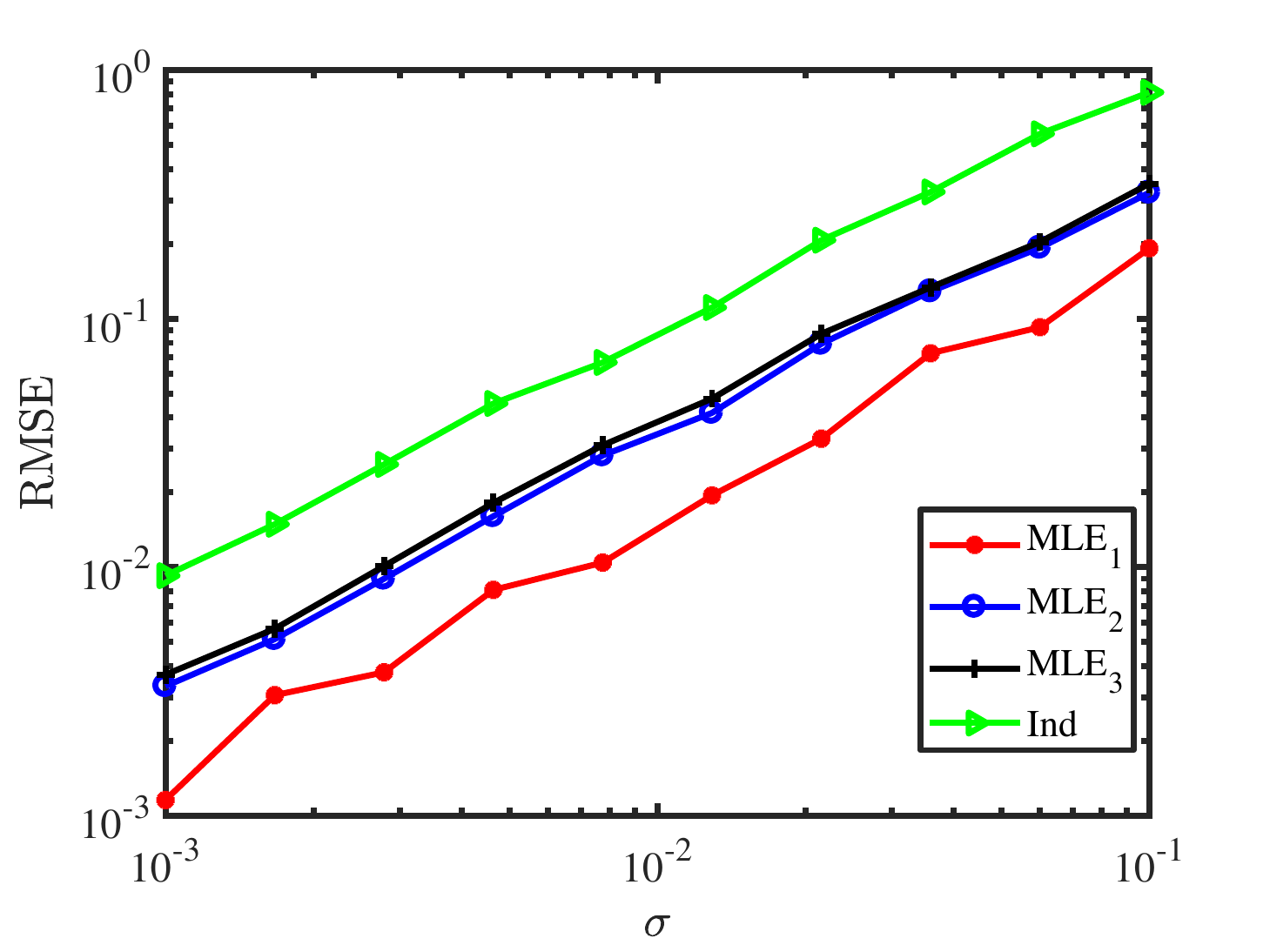}}
\centerline{(b)}
\end{minipage}
\caption{(a) Illustration of the correlation matrix $\overline\mGamma$; (b) RMSE $\sigma_{rmse}$ of the MLE-based collaborative estimators. Here, $K=5$ and $N=20$.} \label{Figure:covariance}
\end{figure}

\begin{figure}[htb!]
\centerline{
\includegraphics[width=4in]{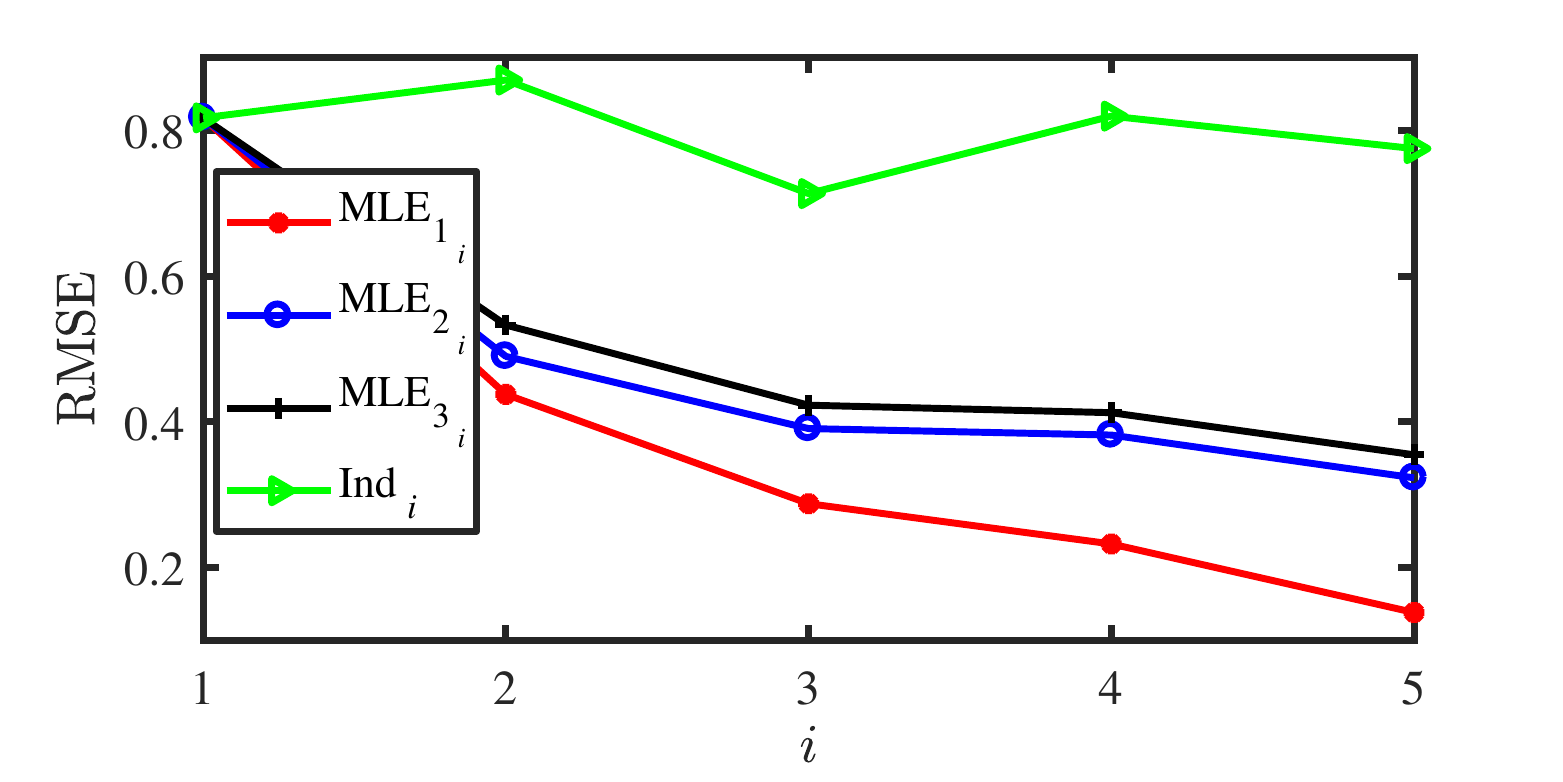}}
\caption{RMSE $\sigma_{rmse}$ of the MLE-based collaborative estimators. Here, $K=5$ and $N=20$.} \label{Figure:exp_vary_k}
\end{figure}

\subsection{Demonstration of the collaborative CS scheme}
In this section, we will demonstrate the performance of the notion of collaborative CS scheme (as shown in Fig.~\ref{figure:CS systems}) and compare it with that of  traditional ones.

Through the experiments for this part, we set the number of dictionaries $K = 5$. Each training data $\mX_i\in\Re^{N\times J_0}$ ($i\in[K]$) is obtained by 1) randomly extracting $15$ non-overlapping patches (the dimension of each patch is $8\times 8$) from each of 400 images in the LabelMe \cite{russell2008labelme} training data set, and 2) arranging each patch of $8\times8$ as a vector of $64\times 1$. Such a setting implies $N=64$ and $J_0=15\times 400=6000$. For each training data set $X_i$, we apply the K-SVD algorithm  to obtain a sparsifying dictionary $\mPsi_{KSVD_i} \in \Re^{N\times L_0}$ with $L_0= 100$ and a given sparsity level $\kappa=4$. We also apply the K-SVD algorithm for the entire training data $\underline \mX=\begin{bmatrix}\mX_1&\cdots&\mX_i&\cdots&\mX_K\end{bmatrix}$ to obtain a sparsifying dictionary $\mPsi_{KSVD} \in \Re^{N\times L}$ with $L=L_0$ and $\mPsi_{\widetilde{KSVD}} \in \Re^{N\times L'}$ with $L'=23^2 = 529$ (such that $L'$ is greater than $KL_0 = 500$).
We then generate a random $M\times N$ sensing matrix $\mPhi$. The CS systems with $(\mPhi,\mPsi_{KSVD_i})$, $(\mPhi,\mPsi_{KSVD})$ and $(\mPhi,\mPsi_{\widetilde{KSVD}})$ are then respectively denoted by $\text{CS}_{Rdm_i}$, $\text{CS}_{Rdm}$ and $\text{CS}_{\widetilde{Rdm}}$. For any image represented by matrix $\mH$, we apply CS system $\text{CS}_{Rdm_i}$ to compress it and use $\mH_i$ to denote the output of each CS system. As demonstrated in the last section, we take the average of each CS system for the collaborative CS system. That is, the collaborative CS system (denoted by $\text{CCS}_{Rdm}$) has the output $\widehat \mH = \frac{1}{K}\sum_{i=1}^K\mH_i$. For convenience, we utilize $\text{CCS}_{Rdm_i}$ to denote the collaborative CS system that fuses the estimates from the first $i$ CS systems, i.e., with the output $\widehat \mH_i = \frac{1}{i}\sum_{m=1}^i\mH_m$. Clearly, $\widehat \mH =\widehat \mH_K$.

The reconstruction accuracy is evaluated in terms of peak signal-to-noise ratio (PSNR), defined
as
\[
\sigma_{psnr} := 10\times \log_{10}\left( \frac{\left(2^r -1\right)^2}{\sigma_{mse}}\right),
\]
where $r=8$ bits per pixel and $\sigma_{mse}$ for images $\mH,\widehat\mH\in\Re^{N_1\times N_2}$ (where $\widehat\mH$ is an estimate of $\mH$) defined as
\[
\sigma_{mse} := \frac{1}{N_1\times N_2}\|\mH - \widehat \mH\|^2_F.
\]

As illustrated in Fig.~\ref{figure:CS systems}, the CS system first divides the input image into a number of non-overlapping $8\times 8$ patches, and then compresses each patch. Once all the patches are recovered, we concatenate them into an image. Fig.~\ref{Figure:CSvsCCS} shows the PSNR $\sigma_{psnr}$ of the CS systems $\text{CS}_{Rdm_i}$ and $\text{CCS}_{Rdm_i}$ when applied to the image `Plane'. Table \ref{table:PSNR with Random Sensing} provides $\sigma_{psnr}$ for the CS systems $\text{CS}_{Rdm_i}$, $\text{CS}_{Rdm}$, $\text{CS}_{\widetilde{Rdm}}$ and $\text{CCS}_{Rdm}$ tested with the twelve images. We also examine the performance with the test data from LabelMe \cite{russell2008labelme}: we randomly extract $15$ non-overlapping patches (the dimension of each patch is $8\times 8$) from each of 400 images in the LabelMe \cite{russell2008labelme} test data set and arrange each patch of $8\times8$ as a vector
of $64\times 1$. Fig.~\ref{Figure:CS random for plane} displays the visual effects of image `Plane' for the CS systems $\text{CS}_{Rdm_i}$ and $\text{CCS}_{Rdm}$. \\

{\noindent \bf Remark 6.2:}
\begin{itemize}
\item It is observed from Fig.~\ref{Figure:CSvsCCS} that the performance of the CCS system gets improved with the number of CS systems fused increasing. This coincides with that observed from the experiments just conducted using synthetic data and the expectations induced from the theoretical results in Section~\ref{sec:MLE coll estimator};
\item Table \ref{table:PSNR with Random Sensing} also demonstrates the advantages of $\text{CCS}_{Rdm}$. It is of great interest to note that $\text{CCS}_{Rdm}$ has still much better performance than $\text{CS}_{Rdm}$ whose dictionary is learned with the same size as those used in fusion and all the training data $\underline \mX$;
\item It is interesting to note from Table \ref{table:PSNR with Random Sensing} that $\text{CS}_{\widetilde{Rdm}}$,   though with a dictionary  $\mPsi_{\widetilde{KSVD}}$ of dimension higher than $KL_0$, even does not outperform  $\text{CS}_{Rdm}$. This, as conjectured before and one of the arguments for the proposed CCS framework, is due to the fact that the mutual coherence of the dictionary $\mPsi$ and hence the equivalent one $\mA$ gets higher when the dimension increases and consequently, the signal reconstruction accuracy decreases.
\end{itemize}

\begin{figure}[htb!]
\centerline{
\includegraphics[width=4in]{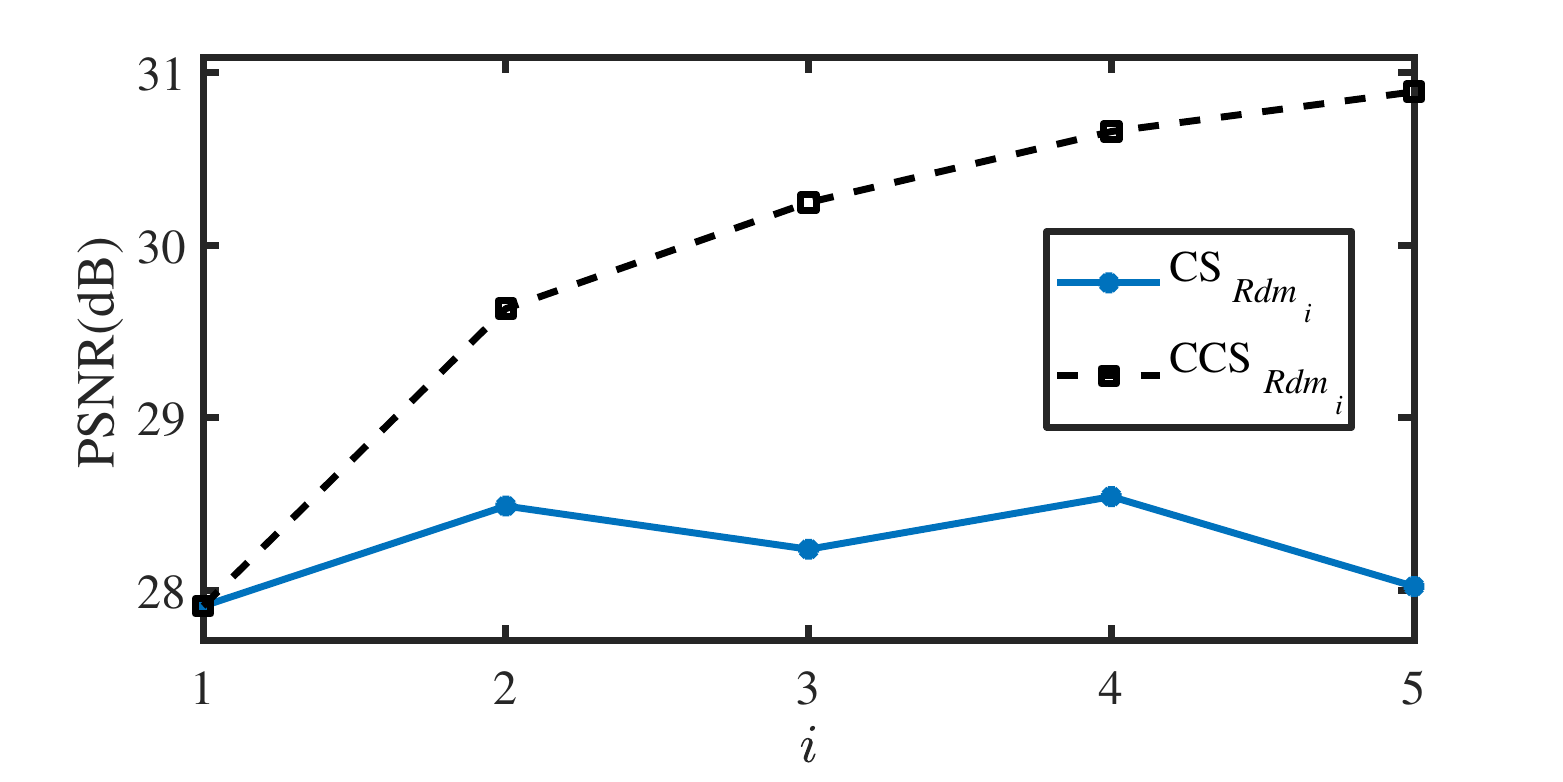}}
\caption{PSNR $\sigma_{psnr}$ of CS systems and collaborative CS systems for image `Plane'.} \label{Figure:CSvsCCS}
\end{figure}

\begin{figure}[htb!]
\begin{minipage}{0.48\linewidth}
\centerline{
\includegraphics[width=1.7in]{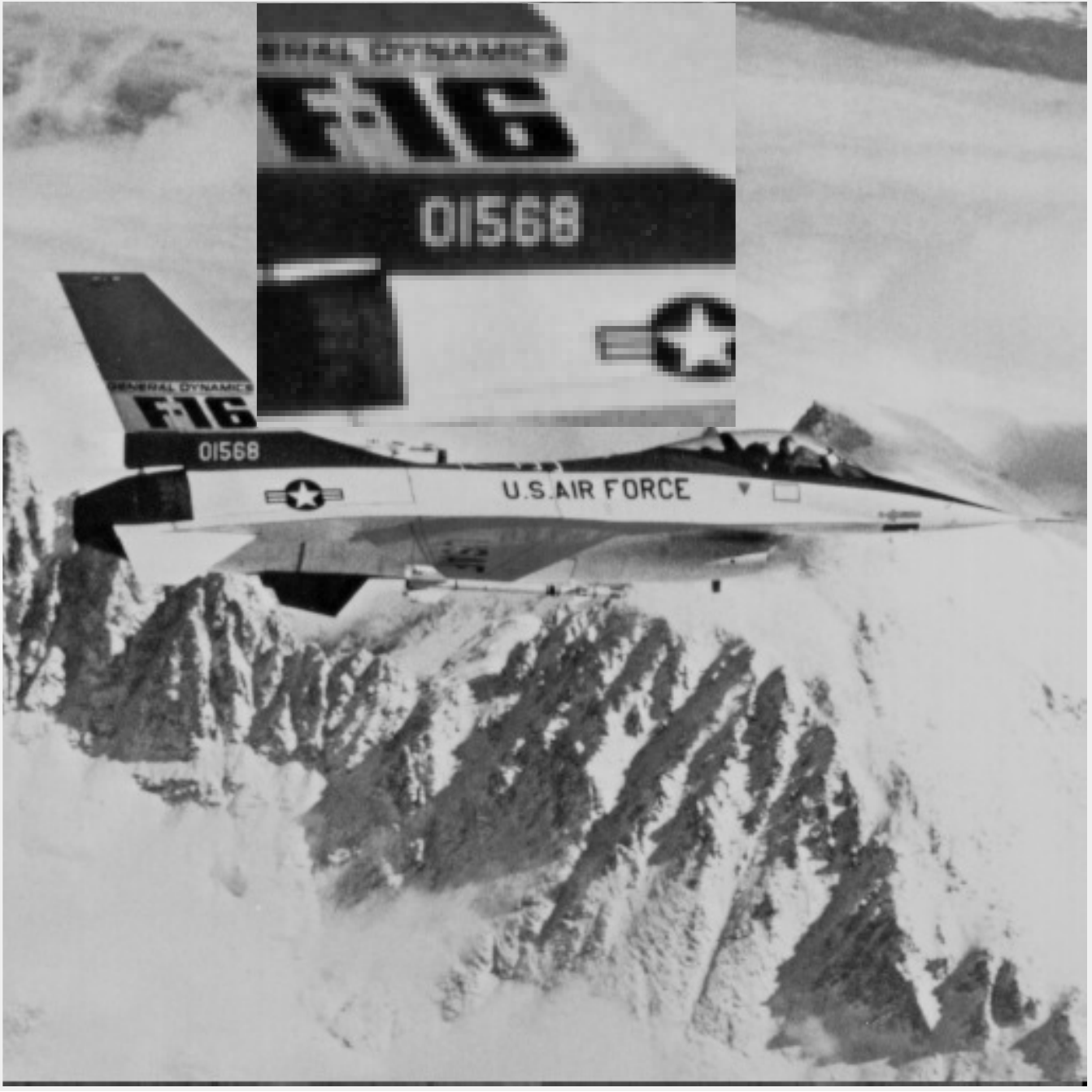}}
\centerline{(a)}
\end{minipage}
\hfill
\begin{minipage}{0.48\linewidth}
\centerline{
\includegraphics[width=1.7in]{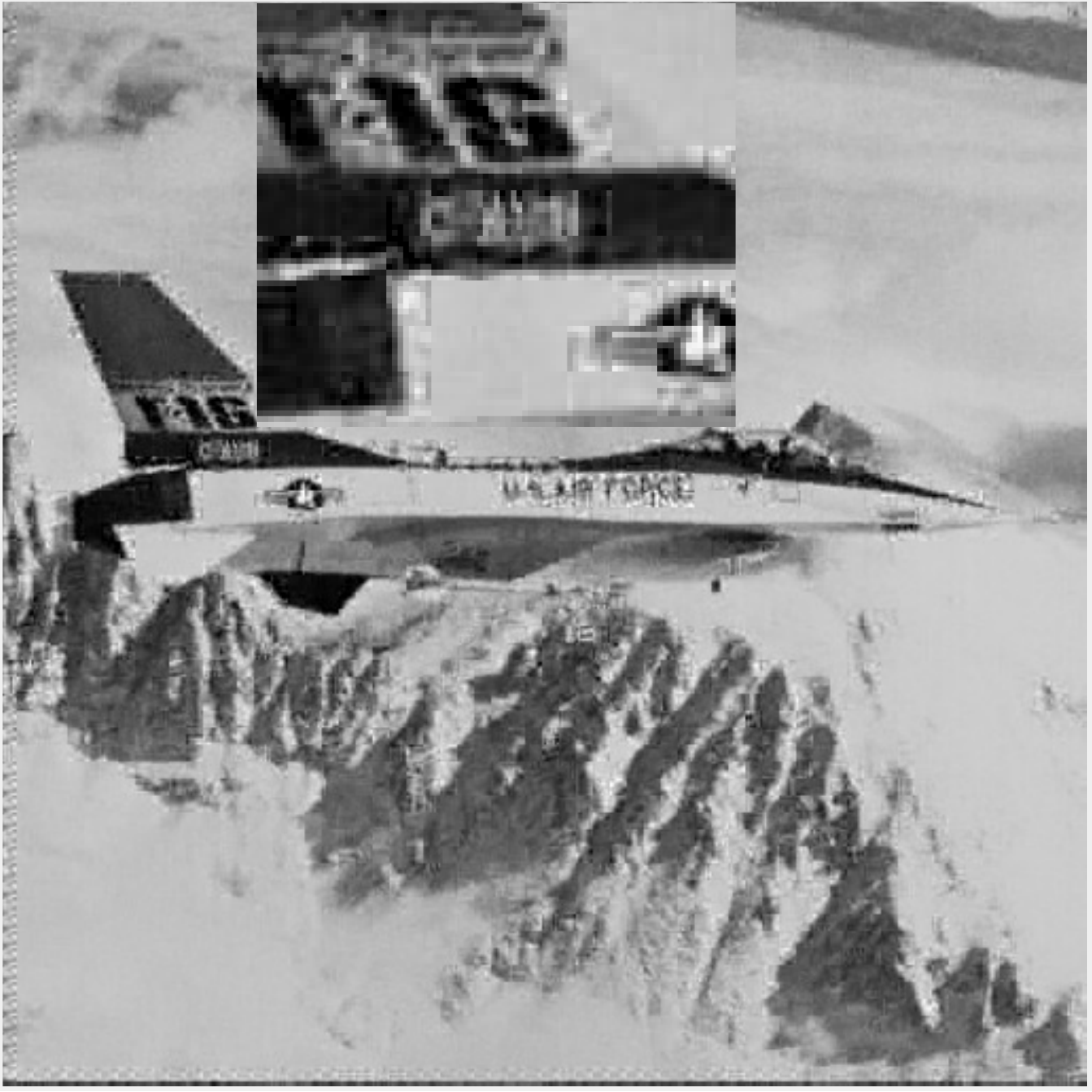}}
\centerline{(b)}
\end{minipage}
\vfill
\begin{minipage}{0.48\linewidth}
\centering
\includegraphics[width=1.7in]{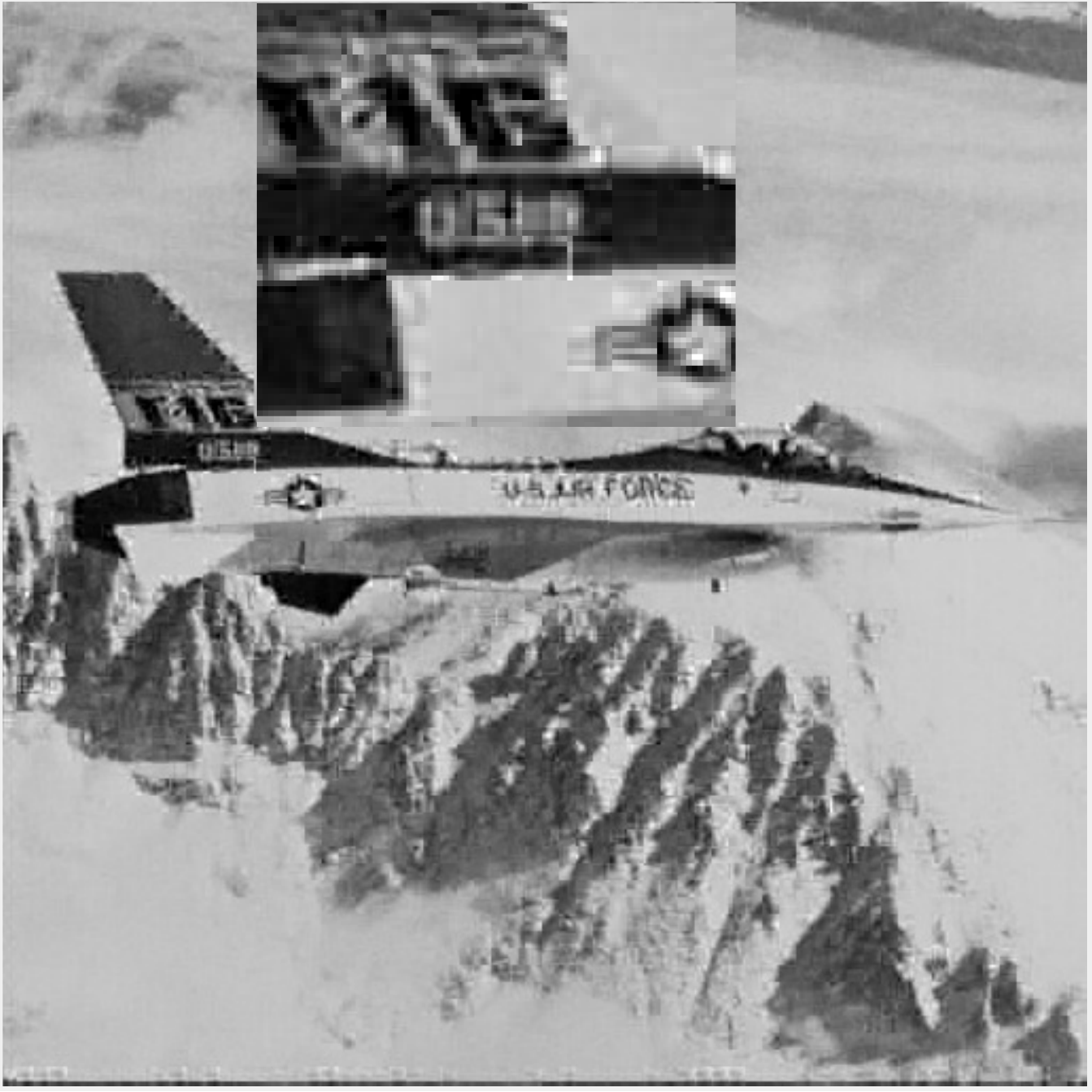}
\centerline{(c)}
\end{minipage}
\hfill
\begin{minipage}{0.48\linewidth}
\centering
\includegraphics[width=1.7in]{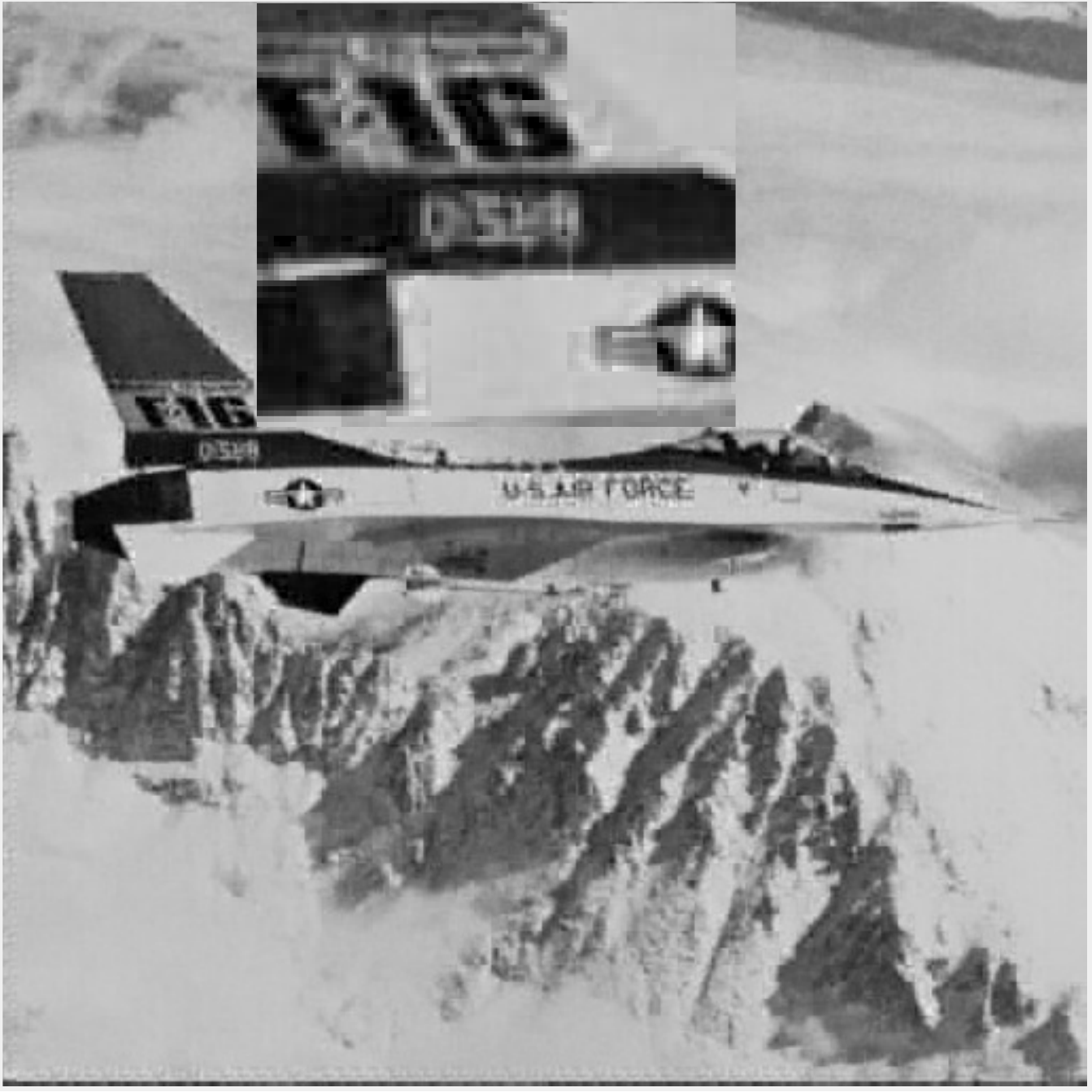}
\centerline{(d)}
\end{minipage}
\caption{The original 'Plane' and reconstructed images from their CS samples with $M = 20$ and $\kappa=4$. (a) The original; (b) $\text{CS}_{Rdm_1}$; (c) $\text{CS}_{Rdm}$; (d) $\text{CCS}_{Rdm}$. }
\label{Figure:CS random for plane}
\end{figure}

\setlength{\tabcolsep}{3pt}
\begin{table*}[thb!]\caption{Statistics of $\sigma_{psnr}$(dB) for images processed with $M=20,N=64,L_0=100,\kappa=4$, and $K=5$.}\label{table:PSNR with Random Sensing}
\begin{center}
\begin{tabu}{c|c|c|c|c|c|c|c|c|c|c|c|c|c}
\hline  &Baboon&Boat&Child&Couple&Crowd&Elanie&Finger&Lake&Lax&Lena&Man&Plane&Test data \cite{russell2008labelme}\\\cline{1-14}
$\text{CS}_{Rdm_1}$ &  22.11  & 26.79  & 31.03 &  26.90  & 27.51  & 29.45  & 23.38  & 25.83  & 22.49 &  29.65 &  27.36  & 27.91 & 27.17 \\\cline{1-14}
$\text{CS}_{Rdm_2}$ & 22.06  &  26.63 &  31.08&   26.95 &  27.55 &  29.37 &  23.17  & 25.87 &  22.51  & 29.43  & 27.37  & 28.49 & 27.19  \\\cline{1-14}
$\text{CS}_{Rdm_3}$ & 22.03 &  26.73  & 31.04 &  26.98  & 27.45 &  29.28  & 23.45 &  25.81  & 22.50  & 29.59  & 27.28  & 28.24  & 27.25\\\cline{1-14}
$\text{CS}_{Rdm_4}$ & 22.21  &  26.81 &  31.05  & 26.98 &  27.53 &  29.39  & 23.81  & 26.00  & 22.55 &  29.74  & 27.45  & 28.54 & 27.18 \\\cline{1-14}
$\text{CS}_{Rdm_5}$ & 22.10  & 26.67  & 31.10 &  26.92 &  27.59 &  29.56  & 23.51  & 25.92 &  22.54  & 29.61&   27.41 &  28.02 & 27.12 \\\cline{1-14}
$\text{CS}_{Rdm}$&
22.39  &  27.24 &  31.70  & 27.51 &  28.31  & 29.78 &  24.31  & 26.46 &  22.77  & 30.32 &  27.89   & 29.22 & 27.78 \\\cline{1-14}
$\text{CS}_{\widetilde{Rdm}}$&
22.14 &  26.86 &  31.16 &  27.08 &  27.84 &  29.40 &  23.68  & 26.02  & 22.51  & 29.75  & 27.49  & 28.83 & 27.45\\\cline{1-14}
$\text{CCS}_{Rdm}$ & {\bf 23.78}   & {\bf 28.86}  & {\bf 33.45} &  {\bf 29.09} &  {\bf 30.04}  & {\bf 31.10} &  {\bf 26.48} & {\bf  28.07} &  {\bf 24.14}  & {\bf 31.93} &  {\bf 29.41}  & {\bf 30.89} & {\bf 29.38}\\\cline{1-14}
\end{tabu}
\end{center}
\end{table*}

The excellent performance of CCS system can also be explained as follows. The model (\ref{collaborative-estimatorX}) suggests that for any patch $\vx$, the following holds
$$||\widehat \vx - \vx||^2_2= ||\sum^K_{i=1}\omega_i(\vx_i-\vx)||^2_2\leq\frac{\sum^K_{i=1} ||\vx_i-\vx||^2_2}{K}$$
when setting $\omega_i =1/K,~\forall~i$.  As each dictionary has limited  capacity of representation, the corresponding CS system may yield excellent performance for some of the patches (of the image), but very poor one for the others. The inequality above implies that for any patch,  the variance of the reconstructed one by the CCS system is always smaller than the average of those by the $K$ individual CS systems. As PSNR is evaluated over all the patches, from a statistical point of view the PSNR of the proposed CCS scheme is  better than that of any single CS system.

\subsection{Performance of the optimized collaborative CS systems}
With the obtained training data $\underline\mX$, we now examine the performance of the CCS system with sensing matrix and dictionaries simultaneously learned by solving \eqref{eq:CS based MLE} with\footnote{
Though there is no systematic way to find the best $\alpha$ and $\beta$ (which depend on specific applications and are also not the main focus of this paper), we provide a rough guide to choose these parameters. With respect to $\alpha$ which weights the importance of the projected signal noise, similar to what is suggested in \cite{duarte2009learning}, $\alpha<1$ is preferred since we need to highlight the sparse representation error $\mE$. In terms of $\beta$, $\beta>\alpha$ is suggested since in dictionary learning for image processing, the sparse representation error $\mE$ is not very small and capturing most of the key information in $\bm \Psi \bm S$ is more important for a sensing matrix.} $\alpha = 0.2$, $\beta = 1$, $\nu_1 = \nu_2 = \nu_3 = 10^{-4}$, $b = 10\|\underline \mX\|_\infty$ and $\nu_4 = \frac{1}{3(\alpha + \beta+1)L_0}$ as suggested by \eqref{eq:Lipschitz for S}. We run the proposed {\bf Algorithm}~\ref{alg:main} with $N_{ite} = 30$ iterations to solve \eqref{eq:CS based MLE}. As \eqref{eq:CS based MLE} is highly nonconvex, a suitable choice of initialization for {\bf Algorithm}~\ref{alg:main} may result in a better solution. We utilize the DCT matrix as the initialization for each dictionary $\mPsi_i$.\footnote{Similar to~\cite{aharon2006KSVD}, we construct an overcomplete separable version of the DCT dictionary by sampling the cosine wave in different frequencies as follows. Construct $\mT_{DCT}\in\Re^{\sqrt{N}\times\sqrt{L_0}}$ as
 $\mT_{DCT}(k,j)= \cos\left(\frac{(k-1)(j-1)\pi}{\sqrt{L_0}}\right), \forall~j\in[\sqrt{L_0}], ~\forall~k\in[\sqrt{N}]$ and normalize each of its column to unit norm to obtain $\overline\mT_{DCT}$ . We then take $\mPsi_{i,0}=\overline\mT_{DCT}\otimes \overline\mT_{DCT}$, the Kronecker product between $\overline\mT_{DCT}$ and itself as the initial for each $\mPsi_i$. Here, the Kronecker product between $\mA\in\Re^{m\times n}$ and $\mB\in\Re^{p\times q}$ results in $\mC\in\Re^{mp\times nq}$ with $\mC(u,v) = \mA(i,j)\mB(k,l)$ for all $u = p(i-1)+k, v = q(j-1)+l$.} We then take the first $M$ left singular vectors of the dictionary as the initialization for the sensing matrix. We conducted the experiments with several type of initializations (like the DCT matrix and the one by randomly selecting a number of training signals for the initialization of the dictionary), and we observed the resulted CCS systems have very similar performance. We defer the thorough investigation of the robustness to the initialization of {\bf Algorithm~\ref{alg:main}} to  future work.

In this section, this learned CCS system along with the plain averaging for fusion strategy is simply denoted by CCS. Fig.~\ref{Figure:obj}(a) shows the convergence behavior of the objective function $\underline\varrho$ as iteration $k$ goes, while Fig.~\ref{Figure:obj}(b) displays $\sigma_{psnr}$(dB) for the test data from LabelMe \cite{russell2008labelme} as iteration $k$ goes. Here, $\sigma_{psnr}(k)$ is computed with the test data that is processed using the CCS system obtained at the $k$th iteration of the proposed algorithm. Note that we display Fig.~\ref{Figure:obj}(b) only to demonstrate the potential relationship between the objective value $\underline\varrho$ defied in~\eqref{eq:underline varrho} and the actual performance of the CCS system and that the test data is not involved in the procedure of training the CCS system. We also show the iterates change $\|\mW_k - \mW_{k-1}\|_F^2$ (recall that $\mW_k = (\mPhi_k,\underline\mPsi_{k},\underline\mS_{k})$) as the iteration $k$ goes in Fig.~\ref{Figure:obj}(c). In Figs.~\ref{Figure:obj}, we set the number of iterations $N_{ite} = 50$ to better illustrate the convergence of {\bf Algorithm~\ref{alg:main}}. Finally, with the learned CCS system, we display the normalized  version $\overline\mGamma$ of the correlation matrix $\mGamma$ estimated using the test data from LabelMe \cite{russell2008labelme}. In particular, let $\mH \in \Re^{64\times 6000}$ denote the set of test data from LabelMe \cite{russell2008labelme} and $\widehat\mH_1,\ldots,\widehat\mH_K$ be $K$ estimates of $\mH$ with the learned CCS system. Then the covariance matrix $\mGamma$ is estimated using the errors $\begin{bmatrix}(\mH - \widehat\mH_1)^\T & \cdots & (\mH - \widehat\mH_K)^\T  \end{bmatrix}^\T$.
As seen from Fig.~\ref{Figure:CorrelationImage}, the estimation errors $\{\mH - \widehat\mH_i\}$, each consisting of sparse representation errors of image signals  and these caused by the reconstruction algorithm,  are generally correlated to each other.

We now compare the CCS with other CS systems. The CS systems CS$_{Elad}$, CS$_{Classic}$, CS$_{TKK}$, CS$_{LZYCB}$, and CS$_f$  are the ones with the learned dictionary $\mPsi_{KSVD}$ obtained using the K-SVD method \cite{aharon2006KSVD} and the sensing matrix $\mPhi \in \Re^{M\times N}$ designed respectively via the methods in \cite{elad2007optimized}, \cite{li2015designing}, \cite{li2013projection}, \cite{tsiligianni2014construction}, and \cite{xu2010optimized}, among which the first four methods are all based on  minimizing the average coherence of the equivalent dictionary, while the last one given in \cite{li2015designing} designs the sensing matrix based on an alternative measure that also takes the sparse representation error into account and hence leads to a more robust CS system than the others. The CS system $\text{CS}_{DCS}$ is the one with the sensing matrix and dictionary  that are simultaneously designed  with the training data $\underline\mX$ using the method\footnote{We set the coupling factor in $\text{CS}_{DCS}$ to $0.5$, which is suggested in \cite{duarte2009learning}.} in \cite{duarte2009learning}.

Table \ref{table:PSNR 1} provides the performance of these CS systems for the same twelve images and the test data from LabelMe \cite{russell2008labelme}. Fig. \ref{Figure:CS for plane} displays the visual effects of image `Plane' for these CS systems. Finally, we list the time used for the algorithms to simultaneously learning the sensing matrix and the dictionary.\footnote{All of the
experiments are carried out on a laptop with Intel(R) i5-3230 CPU @ 2.6GHz and RAM 8G.} The CPU time for learning the CCS system with {\bf Algorithm~\ref{alg:main}} is $565$ seconds, while it takes $502$ seconds for learning CS$_{DCS}$. As a comparison, the K-SVD algorithm spends $424$ seconds to learn the dictionary $\mPsi_{KSVD}$. In other words, the other CS systems (like  CS$_{Elad}$ and CS$_{Classic}$) that first learn the dictionary (with the K-SVD algorithm) and then design the sensing matrix require at least $424$ seconds. We note that all the experiments are not conducted with the parallel computing strategy.\\


\begin{figure}[htb!]
\begin{minipage}{0.32\linewidth}
\centering
\includegraphics[width=2.4in]{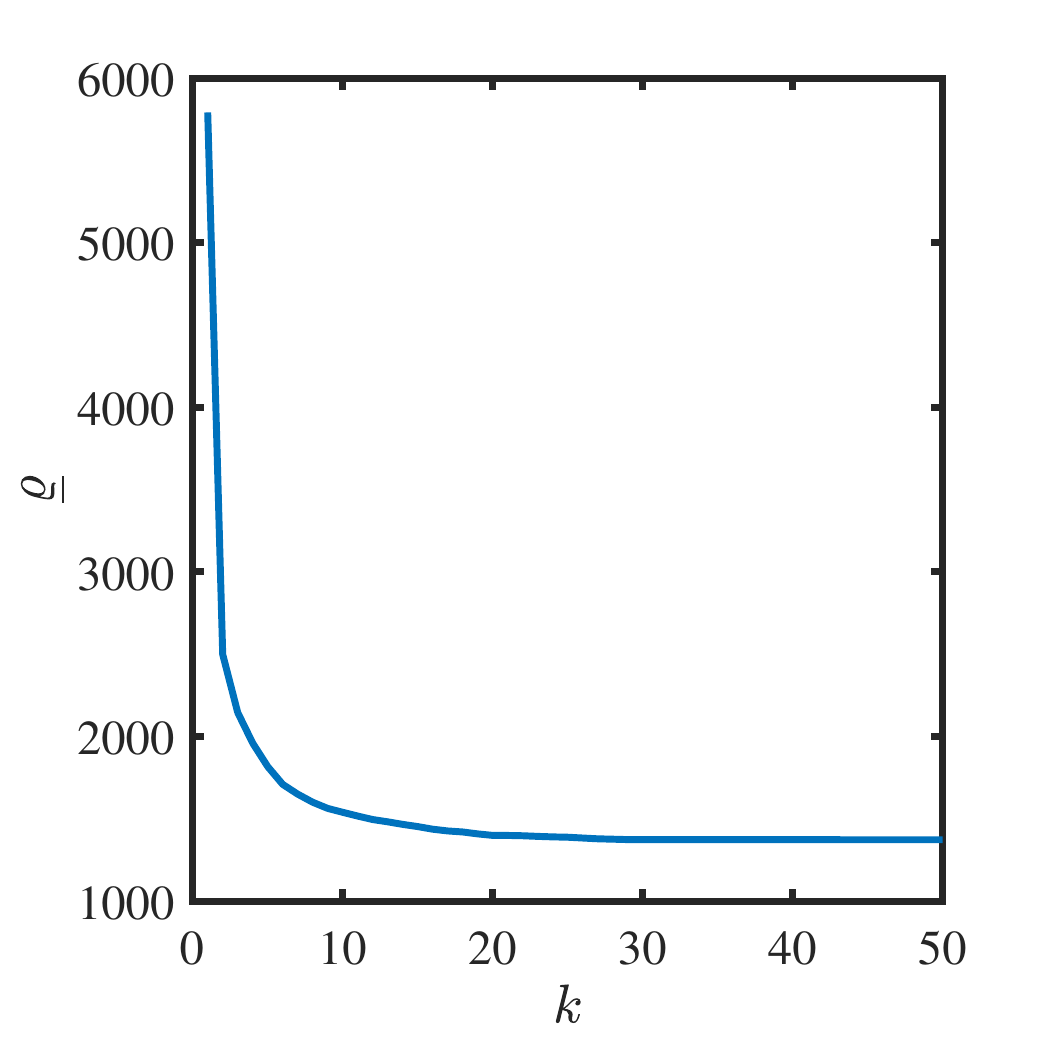}
\centerline{(a)}
\end{minipage}
\hfill
\begin{minipage}{0.32\linewidth}
\centering
\includegraphics[width=2.4in]{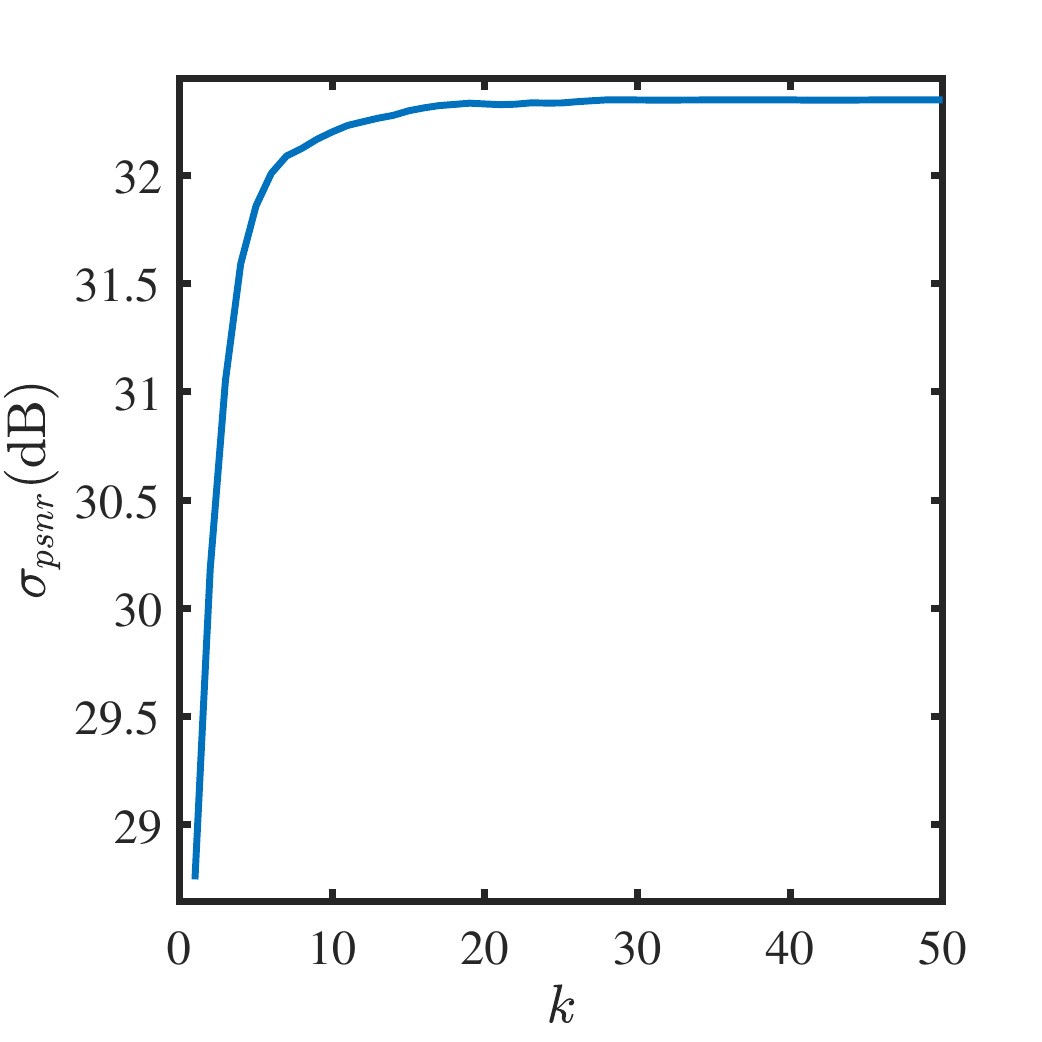}
\centerline{(b)}
\end{minipage}
\hfill
\begin{minipage}{0.32\linewidth}
\centering
\includegraphics[width=2.4in]{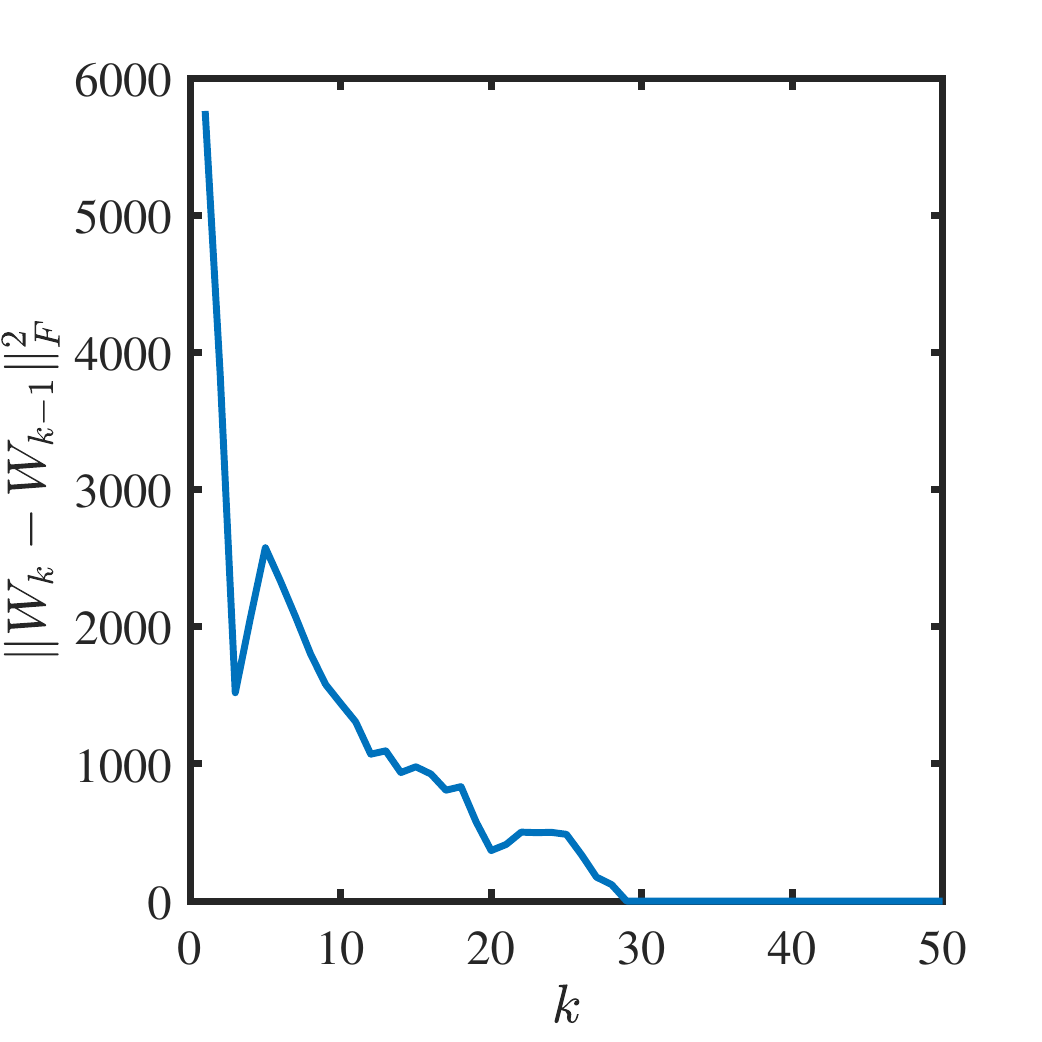}
\centerline{(c)}
\end{minipage}
\caption{Evolution of (a) the cost function $\underline\varrho(\mW_k)$ where $\mW_k = (\mPhi_k,\underline\mPsi_k,\underline \mS_k)$, (b) the PSNR $\sigma_{psnr}$(dB) for the test data from LabelMe \cite{russell2008labelme}, and (c) the iterates change $\|\mW_k - \mW_{k-1}\|_F^2$ with $K=5$.}
\label{Figure:obj}
\end{figure}

\begin{figure}[htb!]
\centering\includegraphics[width=2.8in]{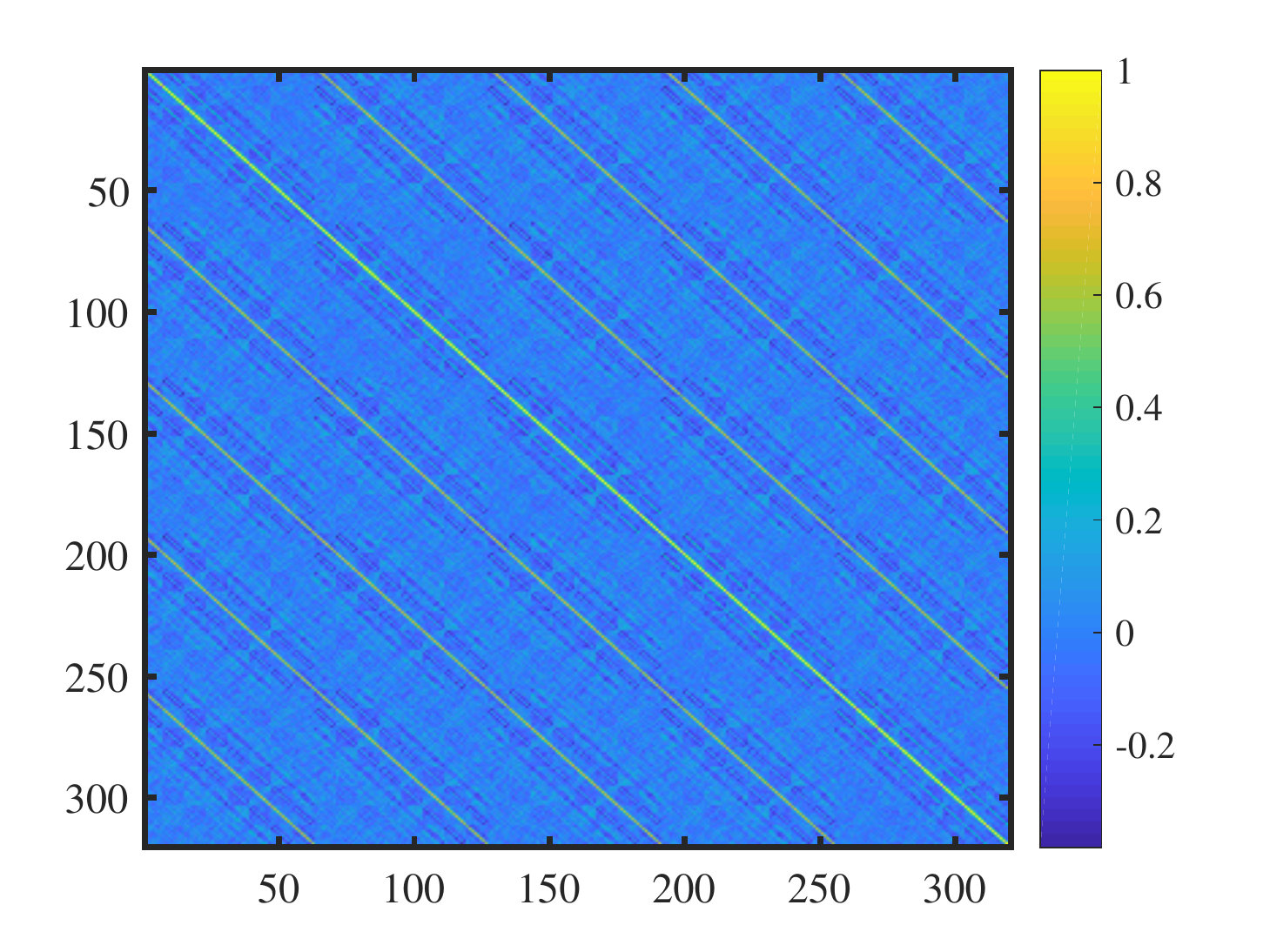}
\caption{Illustration of the correlation matrix $\overline\mGamma$ obtained with the test data from LabelMe \cite{russell2008labelme}.} \label{Figure:CorrelationImage}
\end{figure}

\setlength{\tabcolsep}{3pt}
\begin{table*}[thb!]\caption{Statistics of $\sigma_{psnr}$(dB) for images processed with $M=20, N=64, L_0=100,\kappa=4$, and $K=5$.}\label{table:PSNR 1}
\begin{center}
\begin{tabu}{c|c|c|c|c|c|c|c|c|c|c|c|c|c}
\hline  &Baboon&Boat&Child&Couple&Crowd&Elanie&Finger&Lake&Lax&Lena&Man&Plane&Test data \cite{russell2008labelme}\\\cline{1-14}
$\text{CS}_{Elad}$ & 12.11  & 16.55 &  25.38 &  19.40 &  22.67 &  14.51 &  20.94  & 14.98 &  10.86  & 21.52  & 18.68 &  23.69 & 16.57   \\\cline{1-14}
$\text{CS}_{Classic}$ &9.39  &  13.78 &  22.79 &  16.69  & 20.27 &  11.83 &  18.91 &  12.26   & 8.11 &  18.78  & 15.88  & 21.26 & 12.87  \\\cline{1-14}
$\text{CS}_{TKK}$ &11.93  &  16.31 &  25.03  & 19.02  & 22.28 &  14.32 &  20.39 &  14.60 &  10.68  & 21.19  & 18.34  & 23.32  & 14.85 \\\cline{1-14}
$\text{CS}_{LZYCB}$ & 8.15   & 12.54  & 21.64  & 15.52 &  19.22 &  10.73  & 17.98 &  11.08  &  6.90 &  17.67  & 14.82  & 20.11 & 12.19  \\\cline{1-14}
$\text{CS}_{DCS}$ &25.35  & 30.39 &  34.74  & 30.58 &  31.45  & 32.62 &  27.39 &  29.59  & 25.74  & 33.38  & 30.87  & 32.50 & 30.39  \\\cline{1-14}
$\text{CS}_{f}$ & 24.85  &  30.04  & 34.70 &  30.16  & 31.46  & 32.18 &  27.44  & 29.42  & 25.27 &  33.01  & 30.61  & 32.45 & 30.19 \\\cline{1-14}
$\text{CCS}$ &26.25  & 31.57  & 36.42 &  31.92 &  33.31 &  33.36 &  29.98 &  31.07 &  26.44  & 34.73  & 32.03 &  34.18 & 32.32\\\cline{1-14}
$\text{CCS}_8$ &  {\bf 26.34} &  {\bf 31.70} &  {\bf 36.62} &  {\bf 32.05} &  {\bf 33.50} &  {\bf 33.42}  & {\bf 30.29} &  {\bf 31.22}  & {\bf 26.55} &  {\bf 34.89} &  {\bf 32.16} &  {\bf 34.42} & {\bf 32.45}\\\cline{1-14}
\end{tabu}
\end{center}
\end{table*}

\begin{figure}[htb!]
\begin{minipage}{0.48\linewidth}
\centering
\includegraphics[width=1.7in]{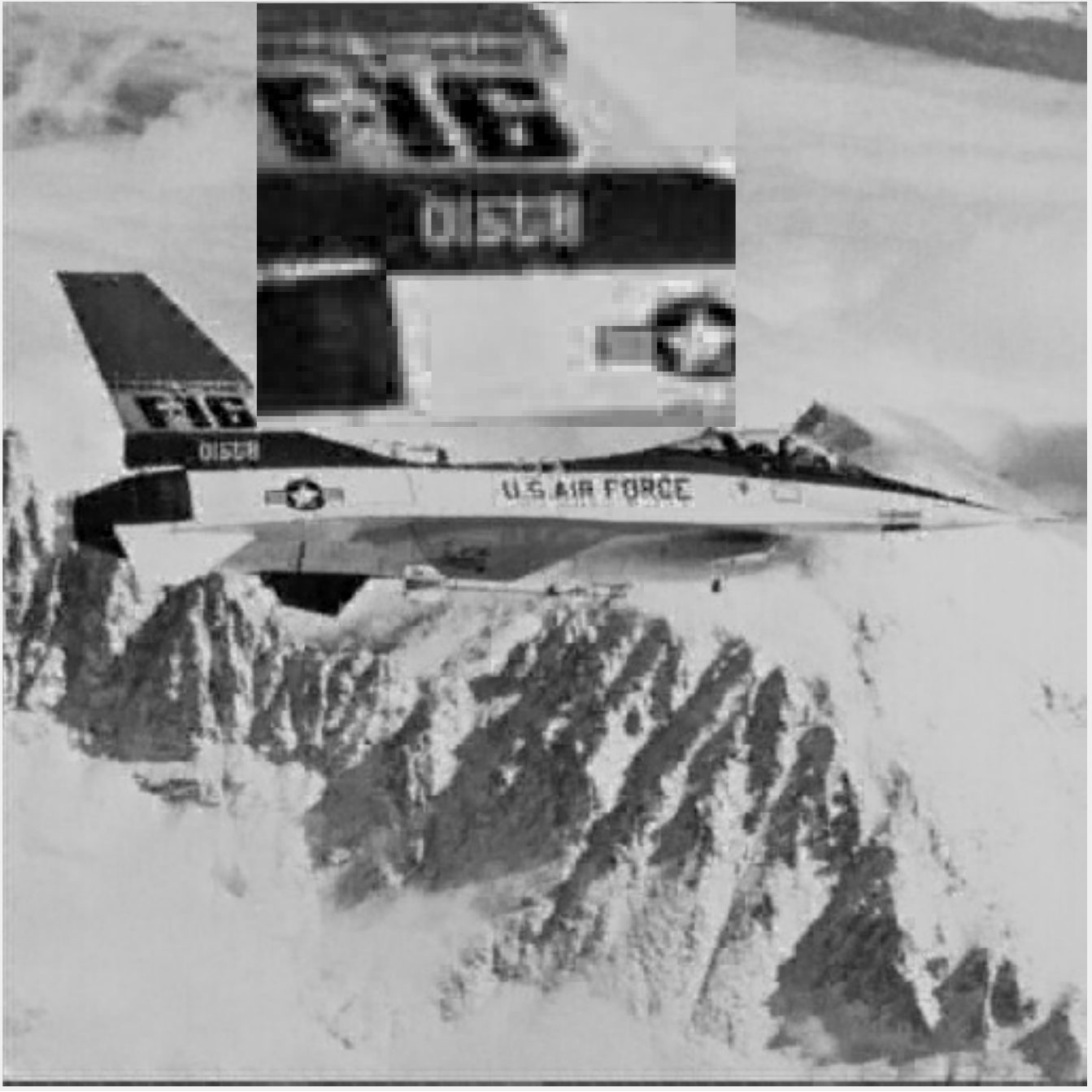}
\centerline{(a)}
\end{minipage}
\hfill
\begin{minipage}{0.48\linewidth}
\centering
\includegraphics[width=1.7in]{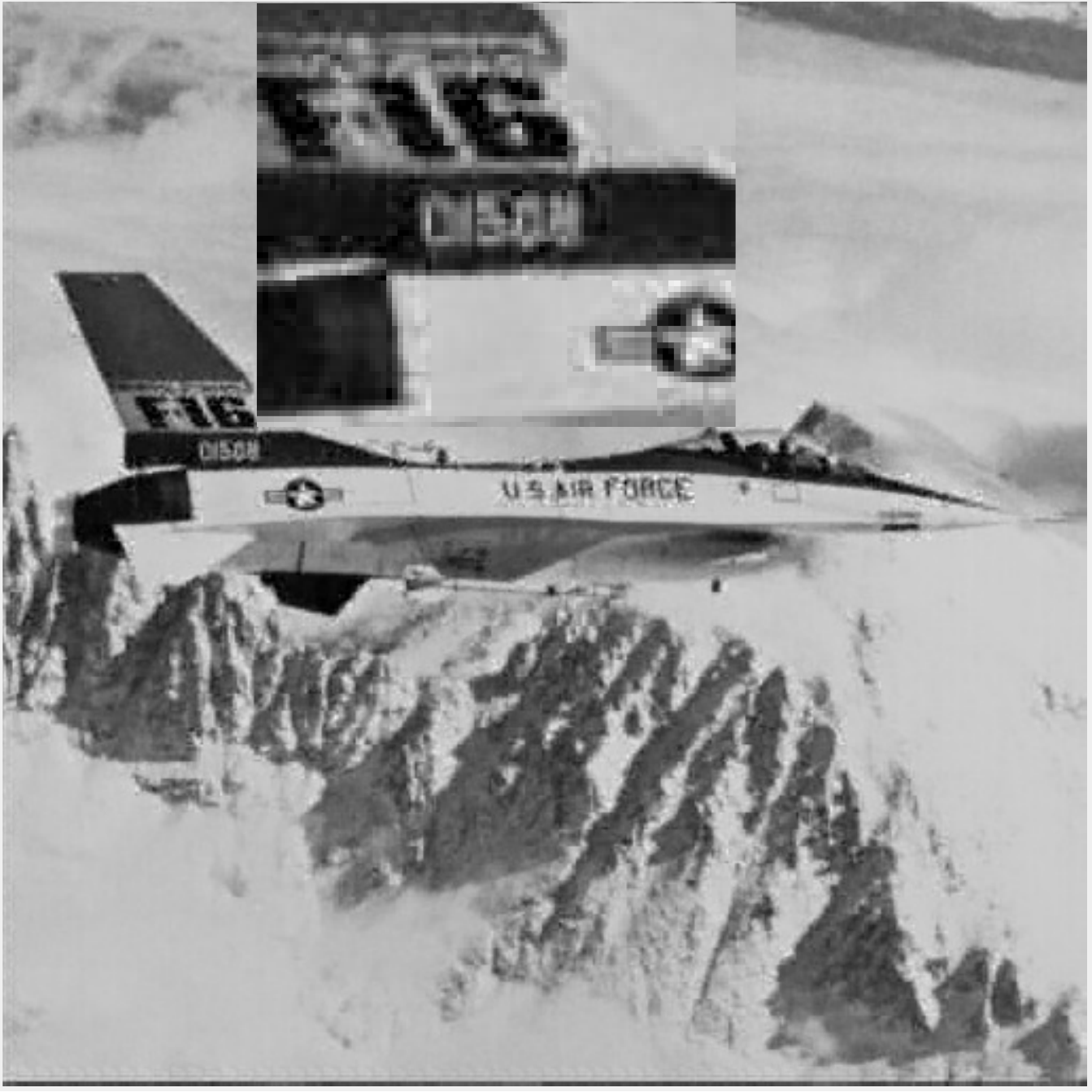}
\centerline{(b)}
\end{minipage}
\vfill
\begin{minipage}{0.48\linewidth}
\centering
\includegraphics[width=1.7in]{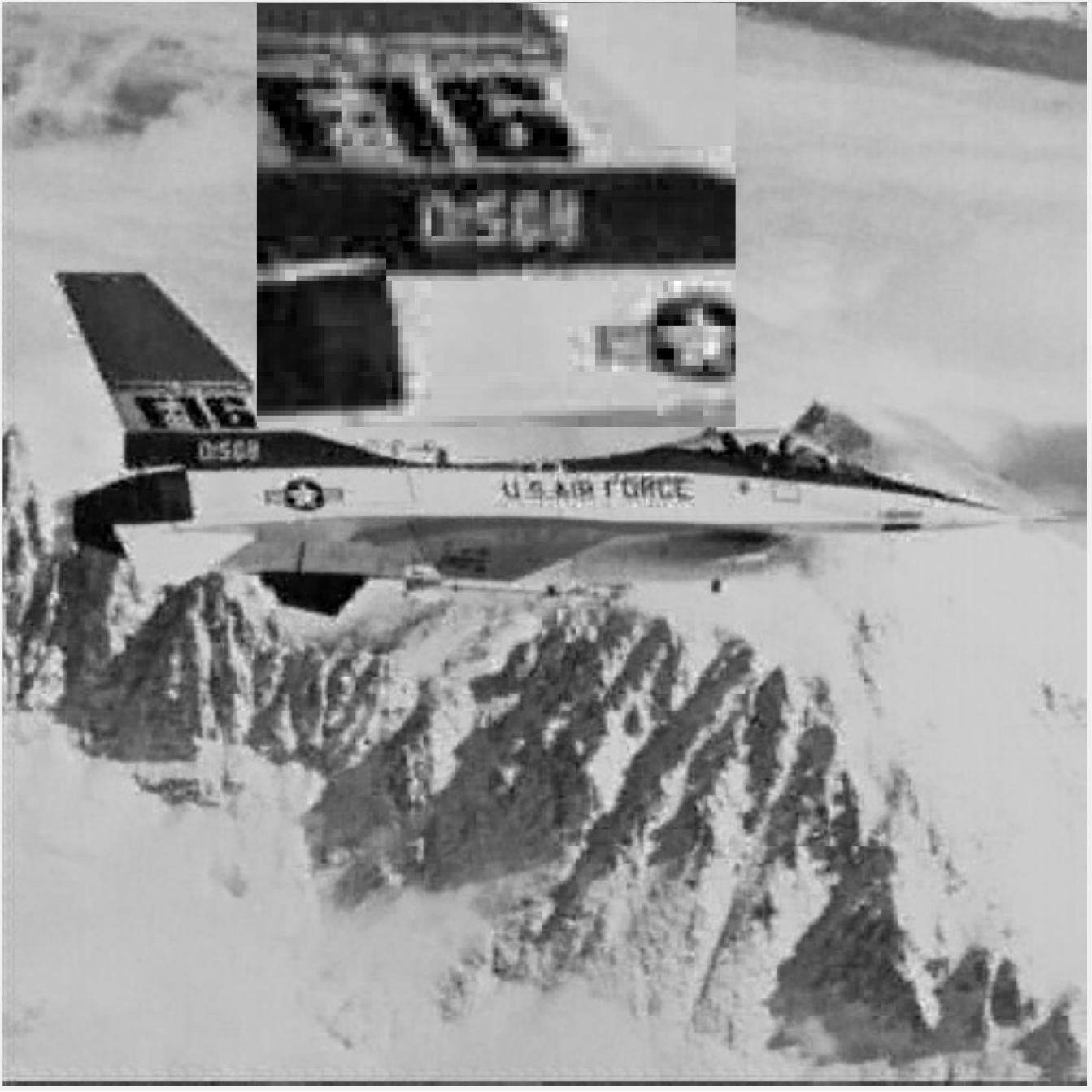}
\centerline{(c)}
\end{minipage}
\hfill
\begin{minipage}{0.48\linewidth}
\centering
\includegraphics[width=1.7in]{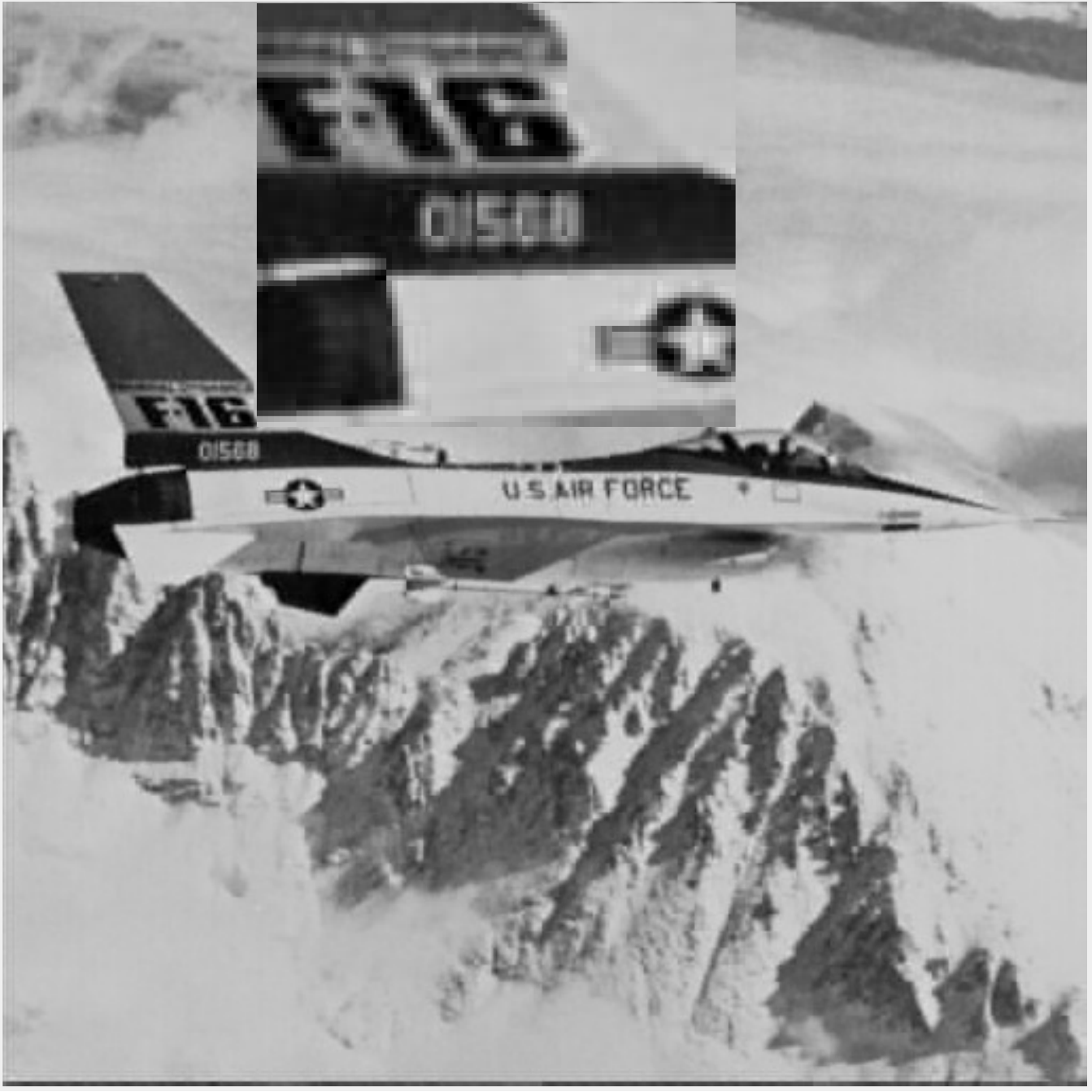}
\centerline{(d)}
\end{minipage}
\caption{The  reconstructed 'Plane' images from their CS samples with $M = 20$ and $\kappa=4$. (a) $\text{CS}_{DCS}$; (b) $\text{CS}_{f}$; (c) $\text{CS}_{etf}$; (d) $\text{CCS}$.}
\label{Figure:CS for plane}
\end{figure}

\vspace{0.5cm}
{\noindent \bf Remark~6.3:}
\begin{itemize}
\item Comparing Fig.~\ref{Figure:obj}(a) and Fig.~\ref{Figure:obj}(b), we observe that the PSNR increases when the objective function decreases. We also observe from Fig.~\ref{Figure:obj}(c) that the change of iterates $\|\mW_k - \mW_{k-1}\|_F^2$ converges (almost) to 0 after 50 iterations. These confirm the convergence analysis of the proposed {\bf Algorithm~\ref{alg:main}} given in Section~\ref{sec-convergence} and demonstrates that the objective function $\underline\varrho$ is a good surrogate for the performance of the CS systems  in terms of reconstruction accuracy;
\item It is observed from Tables \ref{table:PSNR with Random Sensing} and \ref{table:PSNR 1} that the CS systems CS$_{Elad}$, CS$_{Classic}$, CS$_{TKK}$, and CS$_{LZYCB}$ yield  a performance even poorer than those that use a random sensing matrix. This is due to the fact that the sensing matrices in the former are optimized with the assumption that the sparse representation errors of signals are very small, which is not the case for images at all;
    \item As the sensing matrices used in CS$_f$  and CS$_{DCS}$ are optimized with the sparse representation error taken into account, they perform much better than those using a random sensing matrix and even better than the CCS system CS$_{Rdm}$ in which the sensing matrix is randomized and all the dictionaries are independently designed using the K-SVD algorithm;
  \item  The superiority of our optimized CCS system CCS is again demonstrated clearly. Compared with CS$_{Rdm}$ (see TABLE~1), it is basically 2.5 dB better in terms of $\sigma_{psnr}$. This demonstrates clearly the importance of the proposed joint optimization in designing CCS systems. Furthermore, the  optimized CCS system CCS yields a performance much better than CS$_f$ and CS$_{DCS}$. The amount of improvement can be further increased if the number of CS systems fused in the CCS system increases. With the $5\times 6000=30,000$ samples in the obtained training data matrix $\underline \mX$, a CCS system  with $K=8$, denoted as  CCS$_8$, is also obtained and its performance is better than the optimized CCS system CCS with $K=5$ by a amount of 0.15 dB on average. See Table~\ref{table:PSNR 1}.
\end{itemize}

\section{Conclusion}\label{sec:conclusion}
Based on the MLE principle, a collaborative estimation framework has been derived. It is shown that by fusing several estimators the collaborative estimator yields a better an estimate than any of the individual estimators when the estimation errors of these estimators are not very correlated.    As an application to CS, the CCS scheme has been raised and analyzed, which consists of a bank of $K$ CS systems that share the same sensing matrix.  A measure has been proposed, which allows us to simultaneously optimize the sensing matrix and all the $K$ dictionaries. For solving the corresponding optimal CCS system problem, an alternating minimization-based algorithm has been derived with guaranteed convergence. Experiments with synthetic data and real images showed that the proposed optimized CCS systems outperform significantly the traditional CS systems in terms of signal reconstruction accuracy. 

In practice, the exact statistical properties can not be obtained. It is an interesting topic to investigate how to learn a collaborative estimator. It is also of interest to develop distributed methods for efficiently implementing the proposed algorithm~\cite{koppel2017d4l,raja2016cloud}. Developing alternating minimization-based algorithms with guaranteed convergence is still an important topic in the circumstances, where non-convex optimizations are confronted. The results achieved in the paper seem useful to this topic. Works in these directions are on-going.

\section*{\bf Acknowledgment}
The authors would like to thank the reviewers for their detailed comments and constructive suggestions that help us improve the  quality of this paper.\\

\appendix
\section{Proof of Lemma~\ref{lem:update Phi}}\label{sec:update Phi}
\begin{proof}[Proof of Lemma~\ref{lem:update Phi}] First of all, let $\varrho(\mPhi,\mPsi,\mS,\mX)$ and $\mG(\mPsi,\mS,\mX)$ be the same as defined before in \eqref{the measure} and \eqref{eq:G}, respectively. With $\mPhi \in {\cal O}_{M,N}$, that is  $\mPhi\mPhi^{\cal T} = \mId_M$, it can be shown that
$$\varrho(\mPhi,\mPsi,\mS,\mX) = - \trace[\mPhi^{\cal T}\mPhi \mG(\mPsi,\mS,\mX)]+ct,$$
where $ct$  is independent of $\mPhi$. Thus, (\ref{general-ZhuZHx}) is equivalent to
\begin{align*}
 \mPhi_k
&= \argmax_{\mPhi \in {\cal O}_{M,N}} \left\{\trace\left[\mPhi^{\cal T}\mPhi\left(\sum_{i=1}^K \mG\left(\mPsi_{i,k-1},\mS_{i,k-1},\mX_{i}\right) + 2\nu\mPhi_{k-1}^\T\mPhi_{k-1}\right)\right]:= \eta\right\}.
\end{align*}
It then follows from the ED of $\widetilde \mG :=  \sum_{i=1}^K \mG(\mPsi_{i,k-1},\mS_{i,k-1},\mX_i) +  2\nu \mPhi_{k-1}^\T\mPhi_{k-1} =  \mV_{\tilde \mG} {\bf \Pi} \mV_{\tilde \mG}^{\cal T}$ and $\mPhi = \mU\MAT{cc} \mId_M& {\bf 0}\mat \mV^{\cal T}$ - an SVD of $\mPhi$  due to $\mPhi \in {\cal O}_{M,N}$ that
 \begin{align*}
 \eta &= \trace\left[\tilde \mV^\T{\bf \Pi}  \tilde \mV  \MAT{cc}\mId_M& {\bf 0}\\{\bf 0} & {\bf 0}\mat \right]=\sum^M_{n=1}\mQ(n,n),
 \end{align*}
 where  $\tilde \mV := {\bf V}^{\cal T}_{\tilde \mG} \mV$ and  $\mQ := \tilde \mV^\T{\bf \Pi}  \tilde \mV$.

 Without loss of generality, assume that the diagonal elements $\{\pi_n\}$ of ${\bf \Pi}$ satisfy $\pi_n\geq \pi_{n+1},~\forall~n$. According to \cite{li2013projection} (see the proof of Theorem 3 there), one has
 \begin{align*}
 \eta &=\sum^M_{n=1}\mQ(n,n)\leq \sum^M_{n=1}\pi_n.
 \end{align*}
 Thus, $\eta$ is maximized if $\tilde \mV$ is of the following form
 \e \tilde \mV =\MAT{cc}\tilde \mV_{11}&\bf 0\\ \bf 0& \tilde \mV_{22}\mat, \label{optimal-V-1}\ee
 where both $\tilde \mV_{11} \in {\cal O}_{M}$ and $\tilde \mV_{22} \in {\cal O}_{N-M}$ are arbitrary. Consequently, the optimal sensing matrix is given by
 $$\mPhi_k =\mU\MAT{cc} \mId_M& {\bf 0}\mat \tilde \mV^{\cal T} \mV^\T_{\tilde \mG},$$
 where $\tilde \mV $ is of form given by \eqref{optimal-V-1} and $\mU \in {\cal O}_M$ is arbitrary. Clearly, $\tilde \mV =\mId_N$ yields one of the solutions, that is \eqref{optimal-T-1x}.
\end{proof}

\section{Proof of Lemma~\ref{lem:OMP stationary point}}\label{appendix AB}
The proof needs the concept of Fr\'{e}chet subdifferential, which is given in the following definition.

\begin{definition}\label{def:stationary point}
Let $g(\vx):\Re^N\rightarrow \Re\cup\{\infty\}$ be a proper lower semi-continuous function in $\vx \in \Re^{N}$.
\begin{itemize}
\item The domain of $g(\vx)$ is defined by $\text{dom}_g:=\{\vx\in\Re^N:g(\vx)<\infty\}$.
\item The Fr\'{e}chet subdifferential $\partial g$ is defined by
    \[
    \partial g(\vx) = \left\{\vz:\lim_{\vy\rightarrow \vx}\inf\frac{g(\vy) - g(\vx) - \langle \vz, \vy - \vx\rangle}{\|\vx - \vy\|_2}\geq 0\right\}
    \]
for any $\vx\in \text{dom}_g$ and $\partial g(\vx) = \emptyset$ if $\vx\notin \text{dom}_g$, where $<,>$ denotes the inner product of two vectors.\footnote{When $g(\mX)$ is function of a matrix variable $\mX \in \Re^{N\times M}$, $\partial g(\mX)$ can be defined in the same way but with  $\|\cdot\|_2$  replaced with $\|\cdot\|_F$ and `matrix inner product' $\langle\mZ,\mY-\mX\rangle$ is defined as $\sum^M_{m=1}\langle\mZ(:,m),\mY(:,m)-\mX(:,m)\rangle$, respectively.}
\item For each $\vx\in \text{dom}_g$, $\vx$ is called the stationary point of $g(\vx)$ if it satisfies $\vzero\in \partial g(\vx)$.\\
\end{itemize}
\end{definition}

When $g(\vx)$ is differentiable at $\vx$, it is clear that the gradient of $g(\vx)$ satisfies: $\nabla g(\vx)\in \partial g(\vx)$.\\

The OMP algorithm is practically used for solving \eqref{eq:sparse recovery}. Under certain conditions (like the mutual coherence or the RIP) on $\mA$, such an algorithm can guarantee to recover the $\kappa$-sparse solution $\vs^\star$ if $\vy = \mA \vs^\star$~\cite{davenport2010analysis, tropp2004greed}. It is outlined in {\bf Algorithm}~\ref{alg:OMP}, where the set of positions of the nonzero entries of $\vs$ is referred to as the {\em support} of $\vs$, denoted by $\text{supp}(\vs)$, and $\vs(\Lambda)$  denotes the elements of $\vs$ indexed by $\Lambda$.

\begin{algorithm}[htb]
	\caption{OMP algorithm for solving \eqref{eq:sparse recovery}}
	\label{alg:OMP}
\noindent{\it Initialization:}  set $\vs_0 = \vzero,\vr_0 = \vy,\Lambda_0 = \emptyset$.\\
\noindent{\it Begin $j=0,1, \cdots, \kappa-1$}
\begin{itemize}
\item identify the support:
\e\begin{split}\vh_j = \mA^\T\vr_j,\ \Lambda_{j+1} = \Lambda_j \cup \{\arg \max_l|\vh_j(l)|\}~\text{(choose one if multiple maxima exist).}
\end{split}\label{eq:OMP 1}\ee
\item update the estimate and the residual:
\e
\vs_{j+1} = \argmin_{\vz:\text{supp}(\vz)\subset \Lambda_{j+1}}\|\vy - \mA\vz\|^2_2,\ \vr_{j+1} = \vy - \mA \vs_{j+1}.
\label{eq2OMP 1}\ee
\end{itemize}
\noindent{\it End }\\
\noindent{\it Output}: $\widehat \vs=\vs_\kappa$.
\end{algorithm}

\begin{proof}[Proof of Lemma \ref{lem:OMP stationary point}] First note that the OMP algorithm iteratively finds an atom from $\mA$ that has not been selected before, is not within the subspace spanned by the selected atoms, and best fits the residual vector $\vr$. Thus, if $\vh_j\neq \vzero$, we know that the atoms of $\mA$ indexed by $\Lambda_{j+1}$ are linearly independent and $\vs_{j+1}(\Lambda_{j+1})$ has no zero element. 

We first suppose $\vh_{\kappa -1}\neq \vzero$, which implies that $\vh_{\kappa -1}$ has at least one non-zero elements. In this case, we know $\widehat \vs(\Lambda_\kappa)$ has no zero element. Thus, when $\vs\rightarrow \widehat \vs$, we have $\Lambda_\kappa\subset\text{supp}(\vs)$. Now if $\Lambda_\kappa = \text{supp}(\vs)$, we know $g(\vs) \geq g(\widehat\vs)$ since
$\widehat\vs = \argmin_{\vz:\text{supp}(\vz)= \Lambda_{\kappa}}\|\vy - \mA\vz\|^2_2$. Otherwise $\Lambda_\kappa \subset \text{supp}(\vs)$, we have $g(\vs) = \infty$ since $\vs$ has more than $\kappa$ number of non-zero elements. Thus, we conclude
\[
\lim_{\vs\rightarrow \widehat\vs}\inf\frac{g(\vs) - g(\widehat\vs) - \langle \vzero, \vs - \widehat\vs\rangle}{\|\vs - \widehat\vs\|}\geq 0,
\]
which implies $\vzero \in \partial  g(\widehat \vs)$.

Now if $\vh_{\kappa -1}= \vzero$, which means that $\vr_{\kappa -1}$ is orthogonal to $\mA$. We can decompose $\vy$ into two parts, $\vy = \vy_1 + \vy_2$, where $\vy_1$ is within the subspace spanned by $\mA(\Lambda_{\kappa-1})$, and $\vy_2$ is orthogonal to the subspace spanned by $\mA$.  Thus, in this case $\vs_{\kappa -1}$ is the global solution to \eqref{eq:sparse recovery} (and also to $\min_{\vs}\|\mA\vs- \vy\|^2_2$), which implies that $\widehat \vs = \vs_{\kappa}$ is also the global solution to \eqref{eq:sparse recovery}. Thus, it follows from the optimality condition that $\vzero \in \partial  g(\widehat \vs)$.
\end{proof}

\section{Proof of Theorem \ref{thm:convergence}}\label{appendix B}
\begin{proof} Note that $\underline\varrho(\mPhi,\underline\mPsi,\underline\mS)$ defined before can be considered as a function of three matrix-variables, and it has
a (Fr\'{e}chet) subdifferential of $\underline\varrho$ with respect to each variable, say $\underline\mPsi$. Such subdifferential is denoted as $\partial_{\underline\mPsi}\underline\varrho$. In the same way, $\partial_{\vpsi_{i,\ell}} \underline \varrho$ denotes the (Fr\'{e}chet) subdifferential of $\underline\varrho$ with respect to the $\ell$-th column in the dictionary $\mPsi_i$. All these apply to the notations for gradient functions. Note that due to $\underline\varrho(\mPhi,\underline\mPsi,\underline\mS) =\sum_{i=1}^K \varrho(\mPhi,\mPsi_i,\mS_i)$, we have $\nabla_{\vpsi_{i,\ell}}\underline\varrho (\mPhi,\underline\mPsi,\underline\mS) = \nabla_{\vpsi_{\ell}} \varrho(\mPhi,\mPsi_i,\mS_i)$.

Let $\mW_k = (\mPhi_k,\underline\mPsi_k,\underline\mS_k)$ be the point generated with our proposed algorithm at the $k$-th iteration and $\vpsi_{i,k,\ell}$ denote the $\ell$-th column of $\mPsi_{i,k}$ - the $i$th dictionary corresponding to $\underline\mPsi_k$. Define
\[
\mPsi_{i,k,\ell}/\vpsi: = \begin{bmatrix}\vpsi_{i,k,1} & \cdots \vpsi_{i,k,\ell-1} & \vpsi & \vpsi_{i,k-1,\ell+1} & \vpsi_{i,k-1,L_0} \end{bmatrix}.
\]
With this definition, we have $\mPsi_{i,k,L_0}/\vpsi_{i,k,L_0} = \mPsi_{i,k}$ and $\mPsi_{i,k,1}/\vpsi_{i,k-1,1} = \mPsi_{i,k-1}$.

Since the sets $\calO_{M,N},\calU_N,\calS_{\kappa,b}$ all are bounded and $\underline\varrho$ is smooth, it is clear that $\underline\varrho$ has Lipschitz continuous gradient~\cite{bolte2014proximal}, i.e., there exists a constant $L_c>0$ such that\footnote{Here, $\nabla \underline\varrho(\mPhi,\underline\mPsi,\underline\mS)$ is considered as a variable of the same structure as $\mW$ with the three elements replaced with the corresponding gradients $\nabla_{\mPhi} \underline\varrho$, $\nabla_{\underline\mPsi} \underline\varrho$, and $\nabla_{\underline\mS} \underline\varrho$.}
\e
\left\|\nabla \underline\varrho(\mPhi,\underline\mPsi,\underline\mS) - \nabla \underline\varrho(\mPhi',\underline\mPsi',\underline\mS')\right\|_F \leq L_c \left\|(\mPhi,\underline\mPsi,\underline\mS) - (\mPhi',\underline\mPsi',\underline\mS')\right\|_F
\label{eq:Lipschitz gradient}\ee
for all $\mPhi,\mPhi'\in \calO_{M,N},\underline\mPsi,\underline\mPsi'\in \calU_N^{KL_0}$ and $\underline\mS,\underline\mS'\in \calS_{\kappa,b}^{KJ_0}$.\footnote{We can explicitly estimate $L_c$ as we derive $L_{cs}$ in \eqref{eq:derive S}. We omit the detail since the explicit expression of $L_c$ is not required in the following proof. But we note that $L_c$ is expected to be larger than $L_{cs}$ since the later is the Lipschitz constant regarding the partial gradient (i.e., by letting $\mPhi = \mPhi'$ and $\underline\mPsi = \underline\mPsi'$, \eqref{eq:Lipschitz gradient} reduces to \eqref{nu-4}).}\\

\noindent{\it Proof of Part~$(i)$ of Theorem \ref{thm:convergence}}: As  $\mPhi_k$ is the solution of \eqref{general-ZhuZHx}, we have
$$
\underline\varrho(\mPhi_k,\underline\mPsi_{k-1},\underline\mS_{k-1}) + \nu_1 \|\mPhi_k^\T\mPhi_k - \mPhi_{k-1}^\T\mPhi_{k-1}\|_F^2 \leq  \underline\varrho(\mPhi_{k-1},\underline\mPsi_{k-1},\underline\mS_{k-1}),$$ 
which implies that
\e\begin{split}
\underline\varrho(\mPhi_{k-1},\underline\mPsi_{k-1},\underline\mS_{k-1}) - \underline\varrho(\mPhi_k,\underline\mPsi_{k-1},\underline\mS_{k-1}) &\geq \nu_1 \|\mPhi_k^\T\mPhi_k - \mPhi_{k-1}^\T\mPhi_{k-1}\|_F^2\\
&\geq \nu_1 \|\mPhi_k - \mPhi_{k-1}\|_F^2,
\end{split}\label{eq:sufficient decrease Phi}\ee
where the last inequality follows from \cite[Lemma 3.4]{li2018} by noting that $\mPhi_k$ is updated as \eqref{optimal-T-1x 2}.

Keeping the definition of $\vpsi_{i,k,\ell}$ in mind, we obtain\footnote{Like $\underline \varrho(\mPhi, \underline \mPsi, \underline \mS)$, we drop the data matrix $\mX$ in $\varrho(\mPhi, \mPsi, \mS, \mX)$ for simplifying the notation in the sequel.}
\begin{align*}
\varrho(\mPhi_{k},\mPsi_{i,k,\ell}/\vpsi_{i,k,\ell},\mS_{i,k-1}) + \nu_2\|\vpsi_{i,k,\ell} - \vpsi_{i,k-1,\ell}\|^2\leq \varrho(\mPhi_{k},\mPsi_{i,k,\ell}/\vpsi_{i,k-1,\ell},\mS_{i,k-1})
\end{align*}
for all $i\in[K]$ and $\ell\in[L_0]$. Repeating the above equations for all $\ell\in[L_0]$ and summing them up give
\e
\varrho(\mPhi_{k},\mPsi_{i,k-1},\mS_{i,k-1}) - \varrho(\mPhi_{k},\mPsi_{i,k},\mS_{i,k-1}) \geq \nu_2\|\mPsi_{i,k}- \mPsi_{i,k-1} \|^2_F
\label{eq:sufficient decrease Psi}\ee
for all $i\in[K]$.

The rest is to show similar sufficient decrease property in terms of $\mS_i$. If $\mS_{i,k}$ is updated via the first expression of \eqref{eq:update S 1}, it is clear that
\e
\varrho(\mPhi_{k},\mPsi_{i,k},\mS_{i,k-1}) -\varrho(\mPhi_{k},\mPsi_{i,k},\mS_{i,k}) \geq \nu_3\|\mPsi_{i,k}- \mPsi_{i,k-1} \|^2_F.
\label{eq:sufficient decrease S 1}\ee

Now, we consider the other case that $\mS_{i,k}$ is updated via  the 2nd expression of \eqref{eq:update S 1}. We define the proximal regularization of $\varrho(\mPhi,\mPsi,\mS)$ linearized at $\mS'$ as
\[
\varrho_{\nu'}(\mPhi,\mPsi,\mS,\mS') = \varrho(\mPhi,\mPsi,\mS') + \langle \nabla_{\mS} \varrho(\mPhi,\mPsi,\mS'), \mS - \mS' \rangle + \frac{\nu'}{2}\|\mS - \mS'\|_F^2.
\]
Due to~\eqref{eq:derive S}, we have
\e
\varrho(\mPhi,\mPsi,\mS) \leq \varrho_{2(1+\alpha + \beta)L_0}(\mPhi,\mPsi,\mS,\mS'), \ \forall \ \mS'\in\calS_{\kappa,b}^{J_0}.
\label{eq:suff decrease S proximal 1}\ee
Note that
\e\begin{split}
&\mS_{i,k} =\calP_{\mS\in\calS_{\kappa,b}^{J_0}}\left[ \mS_{i,k-1} - \nu_4 \nabla_{\mS} \varrho(\mPhi_{k},\mPsi_{i,k},\mS_{i,k-1}) \right]\\
 & =\argmin_{\mS\in\calS_{\kappa,b}^{J_0}}\left\|\mS -\left( \mS_{i,k-1}- \nu_4 \nabla_{\mS} \varrho(\mPhi_{k},\mPsi_{i,k},\mS_{i,k-1}) \right) \right\|_F^2\\
& = \argmin_{\mS\in\calS_{\kappa,b}^{J_0}} \varrho(\mPhi_{k},\mPsi_{i,k},\mS_{i,k-1}) + \langle \nabla_{\mS} \varrho(\mPhi_{k},\mPsi_{i,k},\mS_{i,k-1}), \mS - \mS_{i,k-1} \rangle + \frac{1}{2\nu_4}\|\mS - \mS_{i,k-1}\|_F^2\\
&=\argmin_{\mS\in\calS_{\kappa,b}^{J_0}}\varrho_{\frac{1}{\nu_4}}(\mPhi_{k},\mPsi_{i,k},\mS,\mS_{i,k-1}),
\end{split}
\label{eq:rewrite Sik}\ee
which implies that
\[
\varrho_{\frac{1}{\nu_4}}(\mPhi_{k},\mPsi_{i,k},\mS_{i,k},\mS_{i,k-1})\leq \varrho(\mPhi_{k},\mPsi_{i,k},\mS_{i,k-1}).
\]
This, together with \eqref{eq:suff decrease S proximal 1} (by setting $\mPhi = \mPhi_k$, $\mPsi = \mPsi_{i,k}$, $\mS = \mS_{i,k}$ and $\mS' = \mS_{i,k-1}$),  gives
\e\begin{split}
&\varrho(\mPhi_{k},\mPsi_{i,k},\mS_{i,k-1}) - \varrho(\mPhi_{k},\mPsi_{i,k},\mS_{i,k})\geq \varrho_{\frac{1}{\nu_4}}(\mPhi_{k},\mPsi_{i,k},\mS_{i,k},\mS_{i,k-1}) - \varrho_{2(1+\alpha + \beta)L_0}(\mPhi_{k},\mPsi_{i,k},\mS_{i,k},\mS_{i,k-1})\\
& \geq \frac{1}{2}(\frac{1}{\nu_4} - 2(1+\alpha + \beta)L_0)\|\mS_{i,k} - \mS_{i,k-1}\|_F^2.
\end{split}
\label{eq:sufficient decrease S 2}\ee
Let
\e
c_1 = \min\{\nu_1,\nu_2,\nu_3,\frac{1}{2}(\frac{1}{\nu_4} - 2(1+\alpha + \beta)L_0)\}.
\label{eq:c1}
\ee
Combining \eqref{eq:sufficient decrease Phi}-\eqref{eq:sufficient decrease S 2}, we conclude
\[
\varrho(\mPhi_{k-1},\mPsi_{i,k-1},\mS_{i,k-1}) - \varrho(\mPhi_{k},\mPsi_{i,k},\mS_{i,k})\geq c_1 \|(\mPhi_{k-1},\mPsi_{i,k-1},\mS_{i,k-1}) - (\mPhi_{k},\mPsi_{i,k},\mS_{i,k})\|_F^2,
\]
which implies
\e
\underline\varrho(\mPhi_{k-1}, \underline \mPsi_{k-1},\underline \mS_{k-1}) - \underline\varrho(\mPhi_{k}, \underline \mPsi_{k},\underline \mS_{k}) \geq c_1 \|\mW_{k-1} - \mW_k\|_F^2.
\label{eq:sufficient decrease 2}\ee
By repeating the above equation for all $k\geq 1$ and summing them, we have
\[
c_1 \sum_{k=1}^\infty \|\mW_{k-1} - \mW_k\|_F^2 \leq \underline\varrho(\mPhi_{0}, \underline \mPsi_{0},\underline \mS_{0})
\]
since $\underline\varrho(\mW_k)\geq 0$ for all $k$. Thus, we obtain $\lim_{k\rightarrow \infty} \|\mW_{k-1} - \mW_k\|_F = 0$.

The proof is completed by noting the fact that $\mW_k\in{\cal O}_{M,N}\times{\cal U}_N^{L_0K}  \times \calS_{\kappa,b}^{KJ_0}$ for all $k$ and hence $f(\mW_k)=\underline\varrho(\mPhi_{k}, \underline \mPsi_{k},\underline \mS_{k})$.\\

\noindent{\it Proof of Part~(ii) of Theorem \ref{thm:convergence}}:  Since the sets $\calO_{M,N},\calU_N,\calS_{\kappa,b}$ are compact and $\mW_k\in{\cal O}_{M,N}\times{\cal U}_N^{L_0K}  \times \calS_{\kappa,b}^{KJ_0}$ for all $k$, this bounded $\{\mW_k\}$, as suggested by  the Bolzano-Weiestrass Theorem \cite{bartle2000introduction}, has at least a convergent subsequence $\{\mW_{k_m}\}$ with limit point denoted as $\mW^\star$.

Note that $f(\mW_k)=\underline \varrho(\mPhi_k, \underline\mPsi_k, \underline\mS_k)$ and
\[
\lim_{k\rightarrow \infty} I_{{\cal O}_{M,N}}(\mPhi_k) = \lim_{k\rightarrow \infty} I_{{\cal U}_{N}^{L_0K}}(\underline\mPsi_k) = \lim_{k\rightarrow \infty} I_{\calS_{\kappa,b}^{KJ_0}}(\underline\mS_k) =0
\]
since the sets $\calO_{M,N},\calU_N,\calS_{\kappa,b}$ are compact, which also implies that $\mW^\star\in{\cal O}_{M,N}\times{\cal U}_N^{L_0K}  \times \calS_{\kappa,b}^{KJ_0}$. Recall that $\{\underline \varrho(\mPhi_k, \underline \mPsi_k, \underline \mS_k)\}$, as just shown above, is a non-increasing sequence and is convergent. As well known,  its subsequence $\{\underline \varrho(\mPhi_{k_m}, \underline \mPsi_{k_m}, \underline \mS_{k_m})\}$ is also convergent (since $\varrho$ is  continuous) and converges to the same limit point:
\[
 \lim_{k\rightarrow \infty} f(\mW_{k}) = \lim_{k_m\rightarrow \infty}f(\mW_{k_m}) = f(\mW^\star).
\]

We now show that $\mW^\star$ is a critical point  of $f(\mW)$.

Since $\mPhi_k$ is a solution of \eqref{general-ZhuZHx}, it should be a stationary point of the problem in the RHS of \eqref{general-ZhuZHx}, that is
\e
\vzero\in\nabla_{\mPhi}\underline\varrho(\mPhi_k,\underline\mPsi_{k-1},\underline\mS_{k-1}) + 4\nu_1 \mPhi_k(\mPhi_k^\T\mPhi_k - \mPhi_{k-1}^\T\mPhi_{k-1}) + \partial I_{{\cal O}_{M,N}}(\mPhi_k).\nonumber\ee
Equivalently,
\[
\underbrace{\nabla_{\mPhi}\underline\varrho(\mPhi_k,\underline\mPsi_{k},\underline\mS_{k}) -\nabla_{\mPhi}\underline\varrho(\mPhi_k,\underline\mPsi_{k-1},\underline\mS_{k-1}) -4\nu_1 \mPhi_k(\mPhi_k^\T\mPhi_k - \mPhi_{k-1}^\T\mPhi_{k-1})}_{\mD_{\mPhi_k}}\in \partial_{\mPhi} f(\mPhi_k,\underline\mPsi_{k},\underline\mS_{k}).
\]
To bound $\|\mD_{\mPhi_k}\|_F$, first note that
\e\begin{split}
&\|\mPhi_k(\mPhi_k^\T\mPhi_k - \mPhi_{k-1}^\T\mPhi_{k-1})\|_F \leq \|\mPhi_k^\T\mPhi_k - \mPhi_{k-1}^\T\mPhi_{k-1}\|_F\\
&= \|\mPhi_k^\T(\mPhi_k - \mPhi_{k-1}) + (\mPhi_k -\mPhi_{k-1})^\T\mPhi_{k-1}\|_F\\&\leq  \|\mPhi_k^\T(\mPhi_k - \mPhi_{k-1}) \|_F + \| (\mPhi_k -\mPhi_{k-1})^\T\mPhi_{k-1}\|_F \leq 2 \|\mPhi_k - \mPhi_{k-1}\|_F,
\end{split}\nonumber\ee
where we utilize the inequality $\|\mC\mD\|_F \leq \|\mC\|_2\|\mD\|_F$ and the fact that both $\mPhi_k$ and $\mPhi_{k-1}$ has spectral norm (i.e., largest singular value) $1$.
Also,  it follows from the global Lipschitz property of $\underline\varrho$ in \eqref{eq:Lipschitz gradient} that
\e\begin{split}
\|\nabla_{\mPhi}\underline\varrho(\mPhi_k,\underline\mPsi_{k},\underline\mS_{k}) -\nabla_{\mPhi}\underline\varrho(\mPhi_k,\underline\mPsi_{k-1},\underline\mS_{k-1})\|_F \leq L_c \|\mPhi_k - \mPhi_{k-1}\|_F.\end{split}
\nonumber\ee
Thus, the term $\|\mD_{\mPhi_k}\|_F$ can be bounded as
\e\begin{split}
\|\mD_{\mPhi_k}\|_F \leq (8\nu_1 + L_c)\|\mPhi_k - \mPhi_{k-1}\|_F.
\end{split}
\label{eq:safeguard Phi}\ee

Similarly, it follows from the definition of $\vpsi_{i,k,\ell}$ that the following holds
\e\begin{split}
\vzero \in \nabla_{\vpsi_\ell}\varrho(\mPhi_{k},\mPsi_{i,k,\ell}/\vpsi_{i,k,l},\mS_{i,k-1}) + 2\nu_2\left(\vpsi_{i,k,\ell} - \vpsi_{i,k-1,\ell}\right) + \partial I_{\calU_N}(\vpsi_{i,k,\ell}).
\end{split}\nonumber\ee
Noting that
\[
\partial_{\vpsi_{i,\ell}} f(\mPhi_k,\underline\mPsi_k,\underline\mS_k) = \nabla_{\vpsi_{\ell}} \varrho(\mPhi_k,\mPsi_{i,k},\mS_{i,k}) + \partial I_{\calU_N}(\vpsi_{i,k,\ell}),
\]
which, together with the above equation, gives
\e\begin{split}
\underbrace{\nabla_{\vpsi_{\ell}} \varrho(\mPhi_k,\mPsi_{i,k},\mS_{i,k}) - \nabla_{\vpsi_\ell}\varrho(\mPhi_{k},\mPsi_{i,k,\ell}/\vpsi_{i,k,\ell},\mS_{i,k-1}) - 2\nu_2\left(\vpsi_{i,k,\ell} - \vpsi_{i,k-1,\ell}\right)}_{\mD_{\vpsi_{i,k,\ell}}}\in \partial_{\vpsi_{i,\ell}} f(\mPhi_k,\underline\mPsi_k,\underline\mS_k).
\end{split}\nonumber\ee
Due to the global Lipschitz property of $\underline\varrho$ in \eqref{eq:Lipschitz gradient}, we have
\e\begin{split}
\left\|\nabla_{\vpsi_{\ell}} \varrho(\mPhi_k,\mPsi_{i,k},\mS_{i,k}) - \nabla_{\vpsi_l}\varrho(\mPhi_{k},\mPsi_{i,k,\ell}/\vpsi_{i,k,\ell},\mS_{i,k-1})\right\|_F&\leq L_c \|(\mPhi_k,\mPsi_{i,k},\mS_{i,k}) - (\mPhi_{k},\mPsi_{i,k,\ell}/\vpsi_{i,k,\ell},\mS_{i,k-1})\|_F \\&\leq L_c(\|\mPsi_{i,k} - \mPsi_{i,k-1}\|_F + \|\mS_{i,k}- \mS_{i,k-1}\|_F),
\end{split}\nonumber\ee
which implies that
\e\begin{split}
\|\mD_{\vpsi_{i,k,l}}\|_F&\leq L_c(\|\mPsi_{i,k} - \mPsi_{i,k-1}\|_F + \|\mS_{i,k}- \mS_{i,k-1}\|_F) + 2\nu_2\left\|\vpsi_{i,k,l} - \vpsi_{i,k-1,l}\right\|_F\\
&\leq (L_c+2\nu_2)\|\mPsi_{i,k} - \mPsi_{i,k-1}\|_F+ L_c \|\mS_{i,k}- \mS_{i,k-1}\|_F.
\end{split}
\label{eq:safeguard Psi}\ee

The rest is to show similar properties for the subdifferential in terms of $\mS_i$. If $\mS_{i,k}$ is updated via the first expression of \eqref{eq:update S 1}, by utilizing Lemma~\ref{lem:OMP stationary point} (the constraint $\|\mS_i\|_\infty\leq b$ can be simply incorporated into the analysis), we have
\e
\vzero \in \partial_{\mS_i}f(\mPhi_k,\underline\mPsi_k,\underline\mS_k).
\label{eq:safeguard Psi S1}\ee

When $\mS_{i,k}$ is updated via  the 2nd expression of \eqref{eq:update S 1}, the optimality
condition of \eqref{eq:rewrite Sik}  gives
\[
\vzero \in \nabla_{\mS} \varrho(\mPhi_{k},\mPsi_{i,k},\mS_{i,k-1}) + \frac{1}{\nu_4}(\mS_{i,k} - \mS_{i,k-1}) + \partial I_{\calS_{\kappa,b}^{J_0}}(\mS_{i,k}),
\]
which implies that
\[
\underbrace{\nabla_{\mS} \varrho(\mPhi_{k},\mPsi_{i,k},\mS_{i,k}) - \nabla_{\mS} \varrho(\mPhi_{k},\mPsi_{i,k},\mS_{i,k-1}) - \frac{1}{\nu_4}(\mS_{i,k} - \mS_{i,k-1})}_{\mD_{\mS_{i,k}}}\in \partial_{\mS_i}f(\mPhi_k,\underline\mPsi_k,\underline\mS_k)
\]
Similar to \eqref{eq:safeguard Psi}, we bound $\mD_{\mS_{i,k}}$ as
\e
\|\mD_{\mS_{i,k}}\|_F\leq (L_c+ \frac{1}{\nu_4})\|\mS_{i,k} - \mS_{i,k-1}\|_F .
\label{eq:safeguard Psi S2}\ee

We stack $\mD_{\mPhi_k},\mD_{\vpsi_{i,k,l}}$ and $\vzero$ (or $\mD_{\mS_{i,k}}$) into
\e\begin{split}
\mD_k &= \left(\mD_{\mPhi_k},\mD_{\vpsi_{1,k,1}}, \ldots, \mD_{\vpsi_{K,k,L_0}}, \vzero  \right),\\
\mD_k' &= \left(\mD_{\mPhi_k},\mD_{\vpsi_{1,k,1}}, \ldots, \mD_{\vpsi_{K,k,L_0}}, \mD_{\mS_{1,k}}, \ldots, \mD_{\mS_{K,k}}  \right).
\end{split}\nonumber\ee
We have either $\mD_k\in \partial f(\mPhi_k,\underline\mPsi_k,\underline\mS_k)$ or $\mD'_k\in \partial f(\mPhi_k,\underline\mPsi_k,\underline\mS_k)$. Let
\e
c_2 = 2(L_0 + 1)L_c +2L_0 \nu_2 + 8\nu_1+\frac{1}{\nu_4}.
\label{eq:c2}\ee
It follows from \eqref{eq:safeguard Phi}-\eqref{eq:safeguard Psi S2} that
\e
\|\mD_k\|_F,\|\mD'_k\|_F \leq c_{2}\|\mW_k - \mW_{k-1}\|_F.
\label{eq:gradient Psi bounded}\ee
Recalling from \eqref{eq:sufficient decrease 2} that $\lim_{k_m\rightarrow\infty}\|\mW_{k_m} - \mW_{k_m-1}\|_F = 0$, we have $\lim_{k_m\rightarrow\infty} \mD_{k_m} = \lim_{k_m\rightarrow\infty} \mD'_{k_m} = \vzero$.
Owing to the closedness properties of $\partial f(\mW_k)$, we finally obtain $\vzero\in \partial f(\mW^\star)$. Thus, $\mW^\star$ is a critical point of $f(\mW)$. This completes the proof of Theorem~\ref{thm:convergence}.\\
\end{proof}

\section{Proof of Theorem \ref{thm:sequence convergence}}
\label{sec:prf thm sequence convergence}

We first state the definition of Kurdyka-Lojasiewicz (KL) inequality, which is proved to be useful for  convergence analysis of the iterates sequence~\cite{attouch2010proximal,attouch2013convergence,bao2016dictionary,bao2015convergence,bolte2014proximal}.
\begin{definition}\label{def:KL}(Kurdyka-Lojasiewicz (KL) inequality \cite{attouch2010proximal,attouch2013convergence,bolte2014proximal}.)
	A proper semi-continuous  function $g(\vx):\Re^N\rightarrow \Re$ is said  to  satisfy the Kurdyka-Lojasiewicz (KL) inequality,  if for any stationary point $\vx^\star$  of $g(\vx)$, there exist constants $\delta>0, \theta\in[0,1), c>0$ such that
\[
\left|g(\vx) - g(\vx^\star)\right|^{\theta}  \leq c~ \textup{dist}[\partial g(\vx),0],~~\forall~\vx\in \setB(\vx^\star, \delta),
\]
where $\setB(\vx^\star, \delta) = \{\vx\in\Re^n:\|\vx - \vx^\star\|_2\leq \delta\}$ and $\textup{dist}[\partial g(\vx),\vzero] = \min_{\vd\in \partial g(\vx)}\|\vd\|_2$ denotes the distance of $\vzero$ to the set $\partial g(\vx)$.\\
\end{definition}

\begin{proof}
It is clear that $f(\mW)$ is a KL function since $\underline\varrho(\mW)$ is a KL function (as it is an analytical function) and $I_{{\cal O}_{M,N}}(\mPhi)$, $I_{{\cal U}_{N}^{L_0K}}(\underline\mPsi)$ and $I_{\calS_{\kappa,b}^{KJ_0}}(\underline\mS)$ are also KL functions as they are the indicator functions on the sets ${\cal O}_{M,N}$, ${\cal U}_{N}^{L_0K}$ and $\calS_{\kappa,b}^{KJ_0}$ respectively~\cite{bolte2014proximal}.

Let ${\cal H}_{\mW}$ denote the set of accumulation points of $\{\mW_k\}$, i.e.,
\[
{\cal H}_{\mW} = \left\{\overline\mW: \exists \ \textup{an increasing sequence of intergers}\ \{k_m\}_{m\in\mathbb{N}}, \ \textup{s.t.} \lim_{m\rightarrow \infty}\mW_{k_m} = \overline \mW \right\}.
\]
Recall part($ii$) of Theorem~\ref{thm:convergence}, which states that any element $\mW^\star\in{\cal H}_{\mW}$ is a stationary point of $f(\mW)$ and $f(\mW)$ has the same value at the set ${\cal H}_{\mW}$. Also since the sequence $\{\mW_k\}$ is bounded and regular (see \eqref{eq:regular}), we know that ${\cal H}_{\mW}$ is a nonempty compact connected set and satisfies \cite[Lemma 3.5]{bolte2014proximal}
\e
\lim_{k\rightarrow \infty} \text{dist}[\mW_k,{\cal H}_{\mW}] = 0.
\label{eq:W to L}\ee

First, note that  if there exists an integer $k_0$ for which $f(\mW_{k_0}) = f(\mW^\star)$ with $\mW^\star\in{\cal H}_{\mW}$, it follows from \eqref{eq:sufficient decrease} and \eqref{eq:limit f} that $\mW_{k} = \mW_{k_0}$ for all $k\geq k_0$. Thus, $\{\mW_k\}$ is convergent.

Now we assume $f(\mW_k)>f(\mW^\star)$.  Invoking \eqref{eq:W to L}, we know for any $\delta$, there exists $k_1$ such that $\text{dist}[\mW_k, {\cal H}_{\mW}]\leq \delta$ for all $k\geq k_1$. Since $f$ satisfies the KL inequality, there exists constants $\theta\in[0,1), c_3>0$ such that~\cite{bolte2014proximal}
\e
(f(\mW_k) - f(\mW^\star))^\theta \leq c_3 \text{dist}[\partial f(\mW_k),\vzero]
\label{eq:KL consequence}\ee
for all $k\geq k_1$.
From  the concavity of  the function $h(y)= y^{1-\theta}$ with domain $y>0$, we have
\begin{align*}
&\left( f(\mW_k) - f(\mW^\star)   \right)^{1-\theta} -\left(  f(\mW_{k+1}) - f(\mW^\star)    \right)^{1-\theta}\geq  (1-\theta) \frac{f(\mW_k)-f(\mW_{k+1})}{ \left( f(\mW_k) - f(\mW^\star)    \right)^{\theta}}\\
 &\geq  (1-\theta)\frac{c_1\|\mW_k - \mW_{k+1}\|_F^2}{c_3 \text{dist}[\partial f(\mW_k),\vzero]} \geq (1-\theta)\frac{c_1\|\mW_k - \mW_{k+1}\|_F^2}{c_2 c_3\|\mW_{k-1} - \mW_{k}\|_F },
\end{align*}
where $c_1$, $c_2$ and $c_3$ are respectively defined in \eqref{eq:c1}, \eqref{eq:c2} and \eqref{eq:KL consequence}, the second inequality utilizes \eqref{eq:sufficient decrease} and \eqref{eq:KL consequence}, and the last inequality follows from  \eqref{eq:gradient Psi bounded} and the fact $\text{dist}[\partial f(\mW_k),\vzero]\leq \max(\|\mD_k\|_F,\|\mD'_k\|_F)$ since $\mD_k\in\partial f(\mW)$ or $\mD_k'\in\partial f(\mW)$. Repeating the above equations for all $k\geq k_1$ and summing them give
\begin{align*}
&\frac{c_2c_3}{c_1(1-\theta)}\left( f(\mW_{k_1}) - f(\mW^\star)\right)^{(1-\theta)}\geq\frac{c_2c_3}{c_1(1-\theta)}\sum_{k=k_1}^\infty
\left( f(\mW_k) - f(\mW^\star) \right)^{1-\theta} -\left(  f(\mW_{k+1}) - f(\mW^\star) \right)^{1-\theta}\\
&\geq \sum_{k=k_1}^\infty \frac{\|\mW_k - \mW_{k+1}\|_F^2}{\|\mW_{k-1} - \mW_{k}\|_F }= \sum_{k=k_1}^\infty \frac{\|\mW_k - \mW_{k+1}\|_F^2}{\|\mW_{k-1} - \mW_{k}\|_F } + \|\mW_{k-1} - \mW_{k}\|_F - \|\mW_{k-1} - \mW_{k}\|_F\\
& \geq \sum_{k=k_1}^\infty 2\|\mW_k - \mW_{k+1}\|_F - \|\mW_{k-1} - \mW_{k}\|_F\geq \left(\sum_{k=k_1}^\infty \|\mW_k - \mW_{k+1}\|_F \right) - \|\mW_{k_1 - 1} - \mW_{k_1}\|_F,
\end{align*}
which implies
\[
\sum_{k=k_1}^\infty \|\mW_k - \mW_{k+1}\|_F <\infty.
\]
Thus, $\{\mW_k\}$ is a Cauchy sequence and hence $\{\mW_k\}$ is convergent.
\end{proof}
\bibliographystyle{siamplain}
\bibliography{RefSensingMtx}

\end{document}